\newtheorem{theorem}{Theorem}[section]
\newtheorem{proposition}[theorem]{Proposition}
\newtheorem{corollary}[theorem]{Corollary}
\newtheorem{lemma}[theorem]{Lemma}
\theoremstyle{remark}
\newtheorem{remark}{Remark}
\newtheorem{definition}{Definition}
\title{ADDIS-Graphs for online
error control with application to platform trials}
\author{Lasse Fischer\thanks{Competence Center for Clinical Trials Bremen, University of Bremen. Email: \texttt{fischer1@uni-bremen.de.}} \and Marta Bofill Roig\thanks{Center for Medical Data Science, Medical University of Vienna; Department of Statistics and Operations Research, Universitat Politècnica de Catalunya -- BarcelonaTech.} \and Werner Brannath\thanks{Competence Center for Clinical Trials Bremen, University of Bremen.}}
\begin{document}

\maketitle
\begin{abstract}
In contemporary research, online error control is often required, where an error criterion, such as familywise error rate (FWER) or false discovery rate (FDR), shall remain under control while testing an a priori unbounded sequence of hypotheses. The existing online literature mainly considered large-scale designs and constructed blackbox-like algorithms for these. However, smaller studies, such as platform trials, require high flexibility and easy interpretability to take study objectives into account and facilitate the communication. Another challenge in platform trials is that due to the shared control arm some of the p-values are dependent and significance levels need to be prespecified before the decisions for all the past treatments are available. We propose ADDIS-Graphs with FWER control that due to their graphical structure perfectly adapt to such settings and provably uniformly improve the state-of-the-art method. We introduce several extensions of these ADDIS-Graphs, including the incorporation of information about the joint distribution of the p-values and a version for FDR control.
\end{abstract}

\tableofcontents
\newpage

\renewcommand\thefootnote{\fnsymbol{footnote}}
\setcounter{footnote}{1}

\section{Introduction}\label{sec:intro}

In classical multiple testing $m\in \mathbb{N}$ hypotheses $H_1,\ldots, H_m$ are prespecified at the beginning of the evaluation. Modern data analysis, however, requires dynamic and flexible decision making. This led to the establishment of online multiple testing, where a potentially infinite stream of hypotheses $(H_i)_{i\in \mathbb{N}}$ is tested sequentially \citep{FS}. This means, at each step $i\in \mathbb{N}$, a decision is made on the current hypothesis $H_i$ while having access only to the previous hypotheses and decisions \citep{JM}. Since the number of future hypotheses is unknown as well, it is usually assumed to be infinite. Such online multiple testing problems can be found in many different research areas. Examples are platform trials \citep{Retal, zehetmayer2022online}, sequential modifications of machine learning algorithms \citep{FES, Fetal}, growing data repositories \citep{aharoni2014generalized, robertson2022online} and continuous A/B testing in the tech industry \citep{Ketal, RYWJ}. 

A widely known multiplicity adjustment in classical multiple testing is the Bonferroni correction, where an individual hypothesis $H_i$ is rejected, if its corresponding p-value $P_i$ is less than or equal to $\alpha/m$. Bonferroni's inequality immediately implies that the adjustment provides familywise error rate (FWER) control, where FWER is defined as the probability of rejecting any true null hypothesis. In a seminal paper, Holm (1979) \citep{holm1979simple} showed that the individual significance levels $\alpha/m$ can even be increased if some of the hypotheses are rejected. However, classical multiple testing procedures cannot be applied simply to online multiple testing.  The two previously mentioned procedures illustrate the difficulty of online multiple testing. First, the number of hypotheses $m$ in online testing is not prespecified in advance and could even be infinite. Second, data information about the other hypotheses can improve the multiple testing procedure, but in online multiple testing the individual significance level $\alpha_i$ for a hypothesis $H_i$ can only depend on the \textit{previous} p-values $P_1,\ldots, P_{i-1}$. 

While Bonferroni-like adjustments can be transferred to online multiple testing \citep{FS}, they usually lead to low power \citep{TR}. For this reason, Tian \& Ramdas (2021) \citep{TR} have established the following condition that can be used to prove FWER control for adaptive discarding (ADDIS) online multiple testing procedures:
\begin{align}
 \sum_{j=1}^{i} \frac{\alpha_j}{\tau_j-\lambda_j} (\mathbbm{1}\{P_j\leq \tau_j\}-\mathbbm{1}\{P_j\leq \lambda_j\}) \leq \alpha
 \quad \text{for all }i\in \mathbb{N},  \label{eq:cond_addis_principle}    
 \end{align}
where $\alpha_i\in (0,1)$, $\tau_i\in (\alpha_i,1]$ and $\lambda_i\in [0,1)$. The left-hand side of the upper inequality can be interpreted as the significance level spent up to step $i$. Hence, if $P_i>\tau_i$ or $P_i\leq \lambda_i$, the significance level $\alpha_i$ can be reused in the future testing process. The idea is that p-values corresponding to true hypotheses are often conservative and, therefore, tend to be large, and p-values corresponding to false hypotheses tend to be small such that many significance levels can be reused in the future. While \eqref{eq:cond_addis_principle} is helpful in showing that a given procedure controls the FWER, it is non-constructive and thereby of only little help for the construction of ADDIS procedures.

To control the FWER with condition \eqref{eq:cond_addis_principle}, $\alpha_i$, $\tau_i$ and $\lambda_i$ are only allowed to depend on information that is independent of $P_i$. 
The problem is that in order to exploit \eqref{eq:cond_addis_principle}, $\alpha_i$ needs to incorporate information about the previous values of $\mathbbm{1}\{P_j\leq \tau_j\}$ and $\mathbbm{1}\{P_j\leq \lambda_j\}$, $j<i$, and thus of the $p$-value $P_j$. Therefore, 
it is usually assumed that either all or at least some p-values are independent. 
Another issue that may limit the choice of individual significance levels is when the hypotheses are tested in an asynchronous manner \cite{ZRJ}. That means an individual significance level $\alpha_i$ needs to be determined before the corresponding $p$-value $P_i$ is observed. Hence, when $\alpha_i$ is to be determined, the information $\mathbbm{1}\{P_j\leq \tau_j\}$ and $\mathbbm{1}\{P_j\leq \lambda_j\}$ are only available for $p$-values $P_j$ where the testing process has already been completed. In general, for each hypothesis $H_i$ one can construct a conflict set \citep{ZRJ} that specifies on which of the previous $p$-values $\alpha_i$ cannot depend (e.g. due to asynchrony) or must not depend (e.g. due to dependence), as otherwise it would violate the FWER control. 

For example, in a platform trial many treatment arms $T_1,T_2,\ldots$ are compared to the same control group, however, not all treatment arms are in the platform from the beginning but added over time and the total number of treatment arms is unknown \citep{saville2016efficiencies} (see panel A of Figure \ref{fig:platform_trial} for an illustration). Hence, this can be interpreted as an online multiple testing problem \citep{Retal}. Throughout the paper, we assume that only concurrent controls are used, meaning for the evaluation of a treatment arm only control data of those patients is included that were randomised while the corresponding treatment arm was in the platform. This yields a local dependence structure of the p-values \citep{ZRJ}, since overlapping treatment arms share some control data, while p-values for non-overlapping treatment arms are independent. Furthermore, the individual significance level for a hypothesis needs to be determined when the treatment arm enters the platform, while the test decision is obtained when the treatment arm leaves the platform. Hence, the hypotheses are tested in an asynchronous manner.

For trials with multiple study objectives, Bretz et al. (2009) \citep{BWBP} proposed a graphical approach for FWER control to handle multiple test procedures in the classical multiple testing setting. In a graphical procedure, the hypotheses are represented by nodes which are connected by weighted vertices that illustrate the level allocation in case of a rejection. For example, the Bonferroni-Holm correction \citep{holm1979simple} for two hypotheses is illustrated in Figure \ref{fig:holm}. Initially, both hypotheses are tested at level $\alpha/2$. However, if one of the hypotheses is rejected, its level can be distributed to the remaining hypothesis such that it is tested at level $\alpha$.
This graphical representation has advantages like facilitating the illustration of study objectives and prioritisation of hypotheses. Robertson et al. (2020) \citep{RWB} extended this graphical approach to other error rates than FWER.

\begin{figure}[htbp]
	\begin{center}
			\centering
\includegraphics[width=14cm,height=4cm,keepaspectratio]{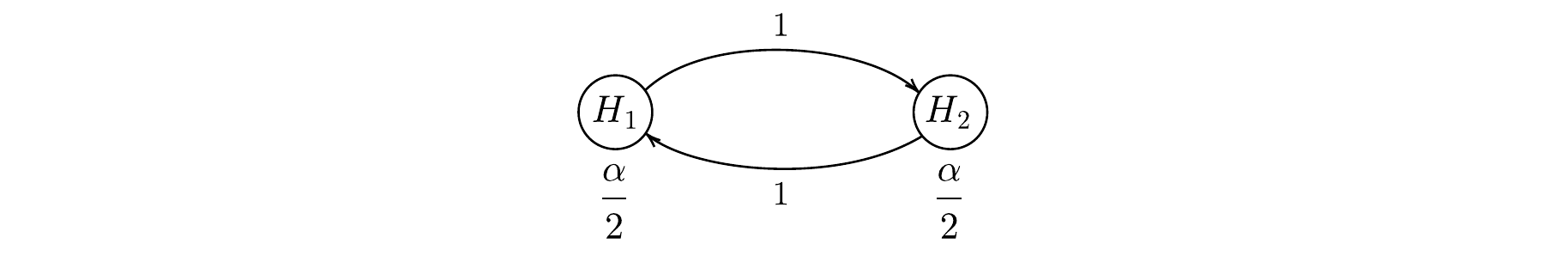}
	\end{center}
	\caption{The Bonferroni-Holm correction \citep{holm1979simple} represented as a graphical procedure \citep{BWBP}\label{fig:holm}}
\end{figure} 

Tian \& Ramdas (2021) \citep{TR} have introduced the ADDIS-Spending as a concrete algorithm satisfying the condition \eqref{eq:cond_addis_principle}. However, this algorithm does not exploit the full potential of the condition and is particularly inefficient under conflict sets. For this reason, in this paper, we propose the ADDIS-Graph, which is a constructive procedure that encompasses all possible online procedures satisfying condition \eqref{eq:cond_addis_principle} and uniformly  improves the ADDIS-Spending under conflict sets. Furthermore, it is a graphical procedure \citep{BWBP}, which is particularly useful for complex trial designs such as those needed in platform trials. For example, consider panel B of Figure \ref{fig:platform_trial}, where the platform trial is transferred into a graphical structure. The dotted lines represent the conflicts between the hypotheses/treatments arms. Due to its high flexibility, a graphical multiple testing procedure can easily adapt to this structure by distributing significance level only between hypotheses that are not connected. For example, the level of hypothesis $H_1$ may be distributed to hypotheses $H_3$, $H_4$ and $H_5$ but not to hypothesis $H_2$, and the level of hypothesis $H_2$ may be distributed to $H_5$ and $H_6$ but not to $H_3$ and $H_4$ (see panel C of Figure \ref{fig:platform_trial}).   

\begin{figure}[htbp]
	\begin{center}
			\centering
\includegraphics[width=42cm,height=14cm,keepaspectratio]{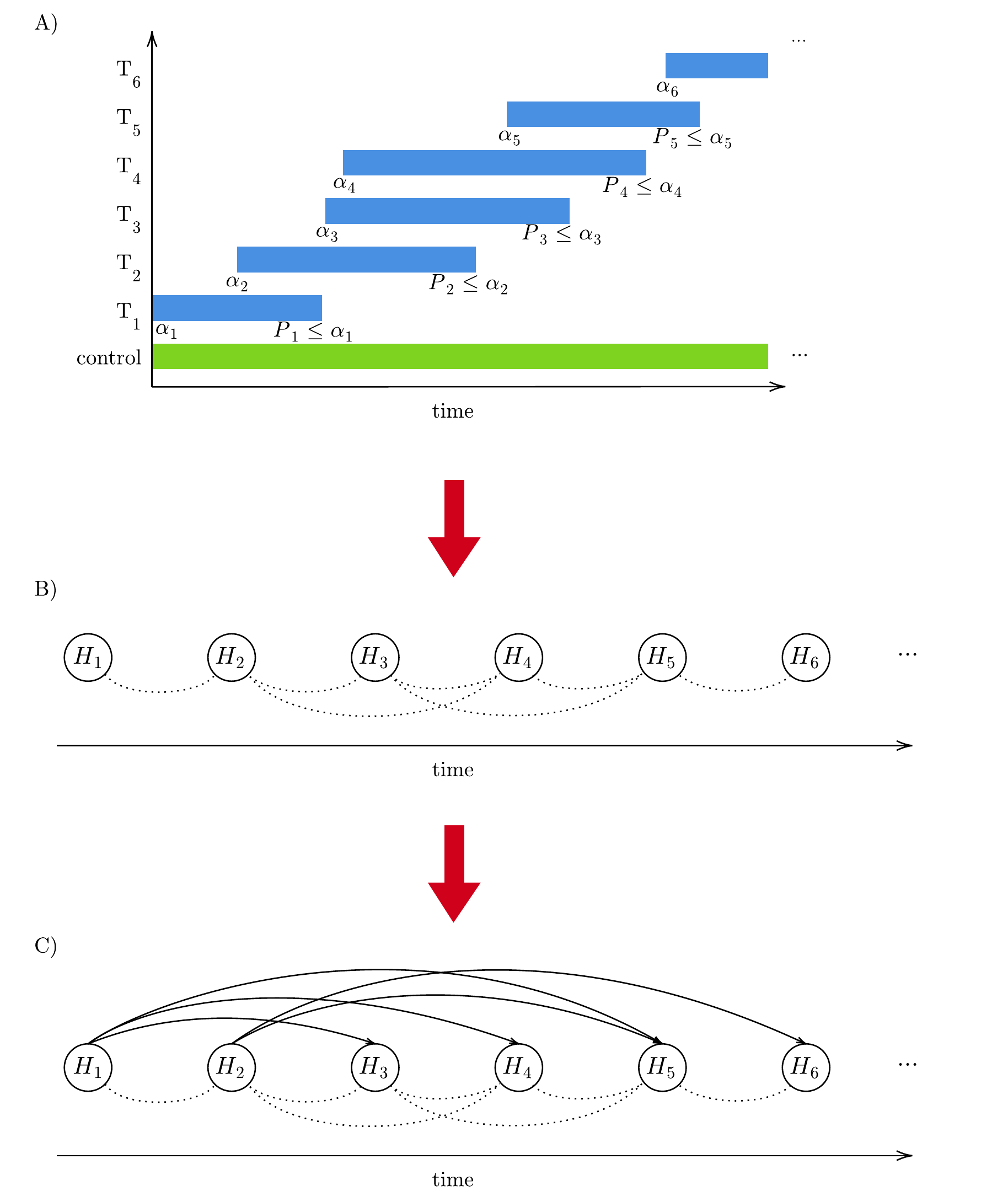}
	\end{center}
	\caption{Transferring a platform trial into a graphical procedure. The dotted lines indicate conflicts between the connected hypotheses. A graphical procedure can adapt to these conflicts by only distributing level between non-conflicting hypotheses. \label{fig:platform_trial}}
\end{figure} 

In platform trials, there has been a discussion about  which error rate is to be controlled. Some argue that in such clinical trials, strict control of FWER should be used and might also be a regulatory requirement \citep{Retal, wason2014correcting}; while others recommend controlling weaker error criteria, such as the false discovery rate (FDR), to avoid an increase of type II errors \citep{Retal, zehetmayer2022online}. 
The purpose of this paper is not to discuss which error rate is most appropriate but to construct multiple testing procedures that are powerful and easy to interpret in such complex online settings.  We focus on FWER control and, in addition, discuss extensions of our methods to other error rates in Section \ref{sec:extensions}.

\subsection{Overview of the paper}\label{sec:overview}

 We begin with a formal definition of the problem setting (Section \ref{sec:setting}).
 In Section \ref{sec:online_graph}, we derive the ADDIS-Graph when no conflicts are present and show that it contains all other online procedures satisfying condition \eqref{eq:cond_addis_principle}. In Section \ref{sec:local_dependence}, we adapt the ADDIS-Graph to conflict sets and prove that this leads to a uniform improvement over the $\text{ADDIS-Spending}$ under local dependence in Section \ref{sec:ADDIS-Spending}. 
Afterwards, we consider extensions of the ADDIS-Graph to other error rates and further improvements (Section \ref{sec:extensions}).
Finally, in Sections \ref{sec:sim} and \ref{sec:recovery_trial}, we demonstrate the application of the ADDIS-Graph through a simulation study and application to a real platform trial, respectively.  The \texttt{R}-Code for the simulations and case study is available at the GitHub repository \url{https://github.com/fischer23/Adaptive-Discard-Graph}. All formal proofs of the theoretical assertions are in the Appendix.

\section{Problem setting}\label{sec:setting}
 Let $I_0$ be the index set of true hypotheses, $R(i)$ be the index set of rejected hypotheses up to step $i\in \mathbb{N}$ and $V(i)=I_0\cap R(i)$ denote the index set of falsely rejected hypotheses up to step $i$. We aim to control the familywise error rate   $\text{FWER}(i)\coloneqq \mathbb{P}(|V(i)|>0)$ at each step $i\in \mathbb{N}$, where $\mathbb{P}$ denotes the probability under the true configuration of true and false hypotheses. Since $\text{FWER}(i)$ is nondecreasing, it is sufficient to control $\text{FWER}\coloneqq \mathbb{P}(v>0)$, where $v\coloneqq \lim\limits_{i\to \infty} |V(i)|$. The FWER is controlled strongly at level $\alpha$, if $\text{FWER}\leq \alpha$ for any configuration of true and false null hypotheses. 
We assume that 
each null $p$-value $P_i$, $i\in I_0$, is  valid, meaning $\mathbb{P}(P_i\leq x)\leq x$ for all $x\in [0,1]$. A hypothesis $H_i$ is rejected, if $P_i\leq \alpha_i$, where $\alpha_i\in [0,1)$ is the individual significance level of $H_i$. In order to apply a multiple testing procedure in the online setting, the individual significance levels are only allowed to depend on the previous $p$-values. Mathematically, $\alpha_i$, $i\in \mathbb{N}$, is measurable with respect to the sigma algebra $\mathcal{G}_{i-1}\coloneqq\sigma(\{P_1, \ldots,P_{i-1}\})$.

As in Zrnic et al. (2021) \citep{ZRJ} we define $\mathcal{X}_i\subseteq \{1,\ldots,i-1\}$ as the index set of previous hypotheses conflicting with $H_i$. The conflict set $\mathcal{X}_i$ includes all indices of previous p-values that are \emph{not} stochastically independent of $P_i$, but can also contain further indices due to asynchrony or other restrictions. It is also assumed that the conflict sets $(\mathcal{X}_i)_{i\in \mathbb{N}}$ are monotone \citep{ZRJ}, which means that $j\in \mathcal{X}_i$ implies $j\in \mathcal{X}_k$ for all $k\in \{j+1,\ldots, i-1\}$. This ensures that the information we are allowed to use at each step $i\in \mathbb{N}$ is not decreasing over time. For example, this is fulfilled in every platform trial (e.g. Figure \ref{fig:platform_trial}), since an overlap between $T_j$ and $T_i$, $j<i$, implies that $T_j$ and $T_k$ overlap for all $j<k<i$ as well (if we order the treatments by entry time). Furthermore, each  $\mathcal{X}_i$ can be considered as fixed, although it might depend on information that is independent of $P_i$. In case of $\mathcal{X}_i=\emptyset$ for all $i\in \mathbb{N}$, we speak of trivial conflict sets. In order to conclude FWER control from condition \eqref{eq:cond_addis_principle} \citep{TR}, $\alpha_i$, $\lambda_i$ and $\tau_i$ must be measurable with respect to $\mathcal{G}_{-\mathcal{X}_i}\coloneqq \sigma(\{P_j:j<i, j\notin \mathcal{X}_i\})$.
Furthermore, in case of $\tau_i<1$, the null $p$-values $P_i$, $i\in I_0$, are required to be uniformly valid, meaning  $\mathbb{P}(P_i\leq xy|P_i\leq y) \leq x $ for all $x,y\in [0,1]$. However, this condition is fulfilled in many of the usual testing problems \citep{ZSS}.

\section{ADDIS-Graph under trivial conflict sets}\label{sec:online_graph}
We start with the construction of ADDIS-Graphs under trivial conflict sets, which means $\alpha_i$, $\tau_i$ and $\lambda_i$ need to be measurable with respect to $\mathcal{G}_{i-1}$. For this, we define $S_i\coloneqq \mathbbm{1}\{P_i\leq \tau_i\}$ and $C_i\coloneqq \mathbbm{1}\{P_i\leq \lambda_i\}$.


\begin{definition}[ADDIS-Graph under trivial conflict sets\label{def:addis_graph}]
Let $(\gamma_i)_{i\in \mathbb{N}}$ and $(g_{j,i})_{i =j+1}^{\infty}$, $j\in \mathbb{N}$, be non-negative sequences that sum to at most one. In addition, let $\tau_i\in (0,1]$ and $\lambda_i \in [0,\tau_i)$ be measurable regarding $\mathcal{G}_{i-1}$ for all $i\in \mathbb{N}$. The \textit{ADDIS-Graph} tests each hypothesis $H_i$ at significance level 
\begin{align}
\alpha_i = (\tau_i-\lambda_i)\left(\alpha \gamma_i + \sum_{j=1}^{i-1} g_{j,i}(C_j-S_j+1)  \frac{\alpha_j}{\tau_j-\lambda_j}\right). 
\label{eq:addis_graph}\end{align}
\end{definition}
\begin{theorem}\label{theo:addis_graph}
The ADDIS-Graph satisfies the condition \eqref{eq:cond_addis_principle} and thus controls the $\text{FWER}$ in the strong sense under the setting in Section \ref{sec:setting} when the conflict sets are trivial.
\end{theorem}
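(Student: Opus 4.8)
The plan is to verify the pathwise inequality \eqref{eq:cond_addis_principle} directly by a conservation-of-level bookkeeping argument, and then to invoke the FWER guarantee of Tian \& Ramdas (2021) quoted in Section \ref{sec:setting}. I would start by simplifying the combinatorial factor. Since $\lambda_j<\tau_j$ forces $C_j\le S_j$, the quantity $C_j-S_j+1$ takes only the values $0$ and $1$; it is the indicator of the ``inactive'' event $\{P_j\le \lambda_j\}\cup\{P_j>\tau_j\}$, and it satisfies $C_j-S_j+1=1-(S_j-C_j)$. Writing $w_j\coloneqq \alpha_j/(\tau_j-\lambda_j)$, definition \eqref{eq:addis_graph} becomes the recursion $w_i=\alpha\gamma_i+\sum_{j=1}^{i-1} g_{j,i}(C_j-S_j+1)w_j$, while the target \eqref{eq:cond_addis_principle} reads $\sum_{j=1}^i w_j(S_j-C_j)\le \alpha$.

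A one-line induction shows $w_j\ge 0$ for every $j$: the base case is $w_1=\alpha\gamma_1\ge 0$, and every summand in the recursion is a product of non-negative factors. With non-negativity secured, I would sum the recursion over $j=1,\dots,i$ and interchange the order of summation in the double sum, obtaining
\[
\sum_{j=1}^i w_j = \alpha\sum_{j=1}^i \gamma_j + \sum_{k=1}^{i-1}(C_k-S_k+1)\,w_k\sum_{j=k+1}^i g_{k,j}.
\]
The two structural hypotheses on the weights then finish the argument. Because $\sum_{j=k+1}^i g_{k,j}\le \sum_{j=k+1}^\infty g_{k,j}\le 1$ and $(C_k-S_k+1)w_k\ge 0$, the inner factor may be replaced by $1$; because $\sum_{j=1}^i\gamma_j\le 1$, the first term is at most $\alpha$. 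Hence $\sum_{j=1}^i w_j \le \alpha + \sum_{k=1}^{i-1}(C_k-S_k+1)w_k \le \alpha + \sum_{k=1}^{i}(C_k-S_k+1)w_k$, where the last step only enlarges the sum by the non-negative term at $k=i$. Rearranging and substituting $S_j-C_j=1-(C_j-S_j+1)$ yields exactly $\sum_{j=1}^i w_j(S_j-C_j)\le \alpha$, i.e.\ \eqref{eq:cond_addis_principle}.

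Finally, since this inequality holds surely for every $i\in\mathbb{N}$, and since in the trivial-conflict case the measurability requirements reduce to $\tau_i,\lambda_i$, and hence $\alpha_i$ (by induction through the recursion), being $\mathcal{G}_{i-1}$-measurable, the hypotheses under which \eqref{eq:cond_addis_principle} was shown to imply strong FWER control are met, and the theorem follows. I do not anticipate a genuine obstacle: the entire proof is level accounting. The only point needing care is the interchange of summation, together with the observation that $C_j-S_j+1\in\{0,1\}$, which is precisely what turns the recursively allocated levels into a budget that can never exceed $\alpha$.
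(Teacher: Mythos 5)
Your proof is correct, but it takes a genuinely different route from the paper's. You sum the recursion $w_i=\alpha\gamma_i+\sum_{j<i}g_{j,i}(C_j-S_j+1)w_j$ over $i$, interchange the (finite) double sum, bound $\sum_{j>k}g_{k,j}\le 1$ and $\sum_j\gamma_j\le 1$, and rearrange to get $\sum_{j\le i}w_j(S_j-C_j)\le\alpha$ directly --- a pathwise conservation-of-budget computation on the realized values. The paper instead runs an extremal ``flipping'' argument: it defines $F_i(U_{1:i})$ as the spent level viewed as a function of the indicator vector $U_{1:i}$ with $U_j=C_j-S_j+1$, shows that flipping the last coordinate with $U_l=1$ to $0$ can only increase $F_i$ (using the same weight-sum bounds you use), and thereby reduces to the all-zero configuration where $F_i=\sum_{j\le i}\alpha\gamma_j\le\alpha$. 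Your argument is shorter and sidesteps a subtlety the paper must explicitly address, namely that flipping a coordinate of $U_{1:i}$ changes the downstream levels $\alpha_k$ (the paper handles this by choosing $l$ maximal and noting $\alpha_k^l=\alpha_k$ for $k\le l$); since you never perturb the realized path, no such bookkeeping is needed. What the paper's approach buys in exchange is an explicit identification of the worst-case configuration (every hypothesis spends its level, nothing is recycled), which gives some structural intuition and mirrors the style of the later proofs (e.g.\ of Proposition~\ref{theo:general_addis_graph}, where the same $F_i$ reappears). Both arguments rely on exactly the same three facts --- $C_j-S_j+1\in\{0,1\}$, non-negativity of the levels, and the two normalization constraints on the weights --- and your handling of the measurability reduction in the trivial-conflict case is also consistent with the paper's setup.
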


In order to represent this ADDIS-Graph as a graph, consider $\tilde{\alpha}_i=\alpha_i\frac{1}{\tau_i-\lambda_i}$ for all $i\in \mathbb{N}$, where $\alpha_i$ is the significance level obtained by the ADDIS-Graph. Equation \eqref{eq:addis_graph} gives us $\tilde{\alpha}_i=\alpha_i\frac{1}{\tau_i-\lambda_i}=\alpha \gamma_i + \sum_{j=1}^{i-1} g_{j,i}(C_j-S_j+1) \tilde{\alpha}_j$. 
Therefore, $\tilde{\alpha}_i=\alpha \gamma_i$ at the beginning of the testing process and in case of $P_j\leq \lambda_j$ or  $P_j>\tau_j$, the future levels $\tilde{\alpha}_i$, $i>j$, are updated by $\tilde{\alpha}_i=\tilde{\alpha}_i+g_{j,i} \tilde{\alpha}_j$. This is illustrated in Figure \ref{abb:addis_graph}, where the hypotheses are represented by nodes that are connected by weighted vertices indicating the error propagation, like in the classical graphical procedure \citep{BWBP}. The rectangles below the nodes clarify that the level $\tilde{\alpha}_i$ needs to be multiplied with $(\tau_i-\lambda_i)$ before comparing it with the $p$-value $P_i$. Note that this testing factor is only multiplied with the individual significance level when the corresponding hypothesis is tested, but it is not involved in the updating process with the graph.

\begin{figure}[htbp]
	\begin{center}
			\centering
			\includegraphics[width=15cm,height=5cm,keepaspectratio]{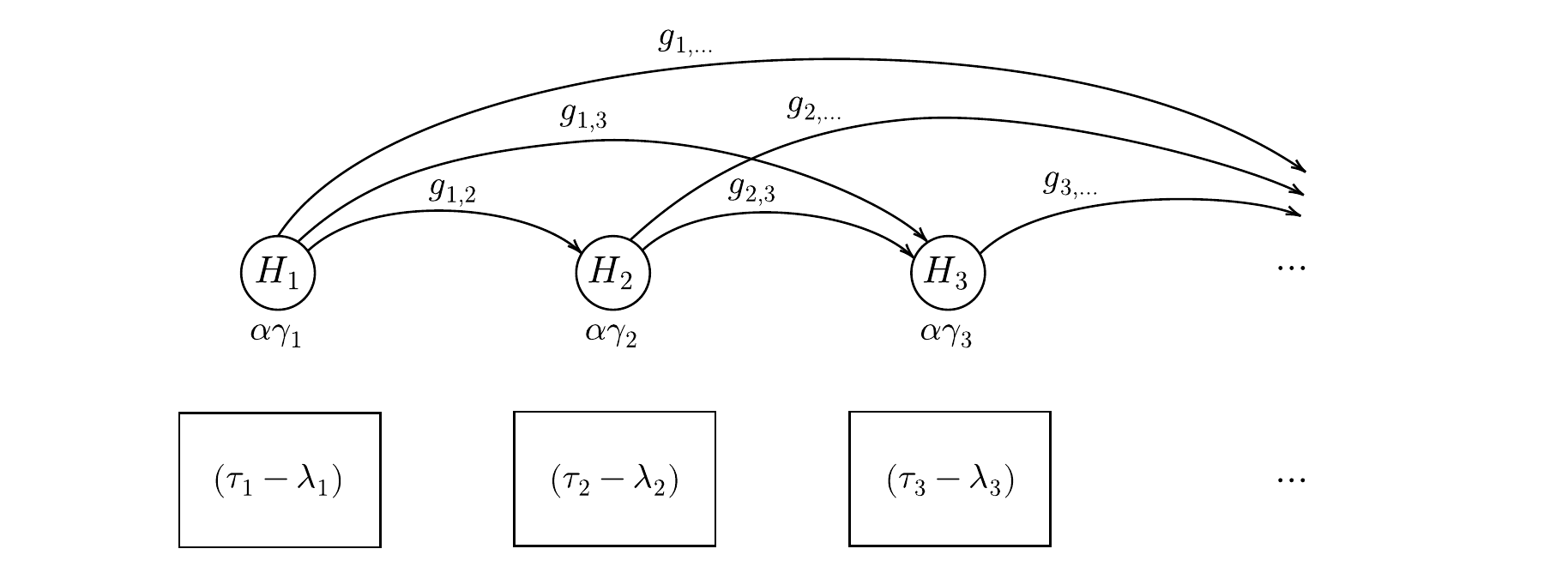}
	\end{center}
	\caption{Illustration of the ADDIS-Graph. Ignoring the rectangles the figure can also be interpreted as the Online-Graph. \label{abb:addis_graph}}
\end{figure}

In Definition \ref{def:addis_graph}, we considered the parameters $(\gamma_i)_{i\in \mathbb{N}}$ and $(g_{j,i})_{i=j+1}^{\infty}$, $j\in \mathbb{N}$ as fixed. However, $\gamma_i$ and $g_{j,i}$ could also be random variables that are measurable regarding $\mathcal{G}_{i-1}$, since we only need to ensure that $\alpha_i$ is measurable with respect to $\mathcal{G}_{i-1}$.
With this, the procedures become more flexible.  It can even be shown that, in this case, the ADDIS-Graph is the general ADDIS procedure, thus containing all online procedures satisfying condition \eqref{eq:cond_addis_principle}.

\begin{proposition}\label{theo:general_addis_graph}
Let $\gamma_i$ ($i\in \mathbb{N}$) and $g_{j,i}$ ($j\in \mathbb{N}$, $i>j$)  be measurable with respect to $\mathcal{G}_{i-1}$. Then, any online procedure satisfying condition \eqref{eq:cond_addis_principle} can be written as an ADDIS-Graph (Definition \ref{def:addis_graph}).
\end{proposition}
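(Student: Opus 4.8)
The plan is to keep the thresholds $\tau_i,\lambda_i$ of the given procedure and to construct only the weights $\gamma_i$ and $g_{j,i}$, so that the representation reduces to a budget-allocation (flow) argument. Writing $\tilde\alpha_j\coloneqq \alpha_j/(\tau_j-\lambda_j)$, which is non-negative and $\mathcal{G}_{j-1}$-measurable, and noting that $S_j-C_j=\mathbbm{1}\{\lambda_j<P_j\le\tau_j\}\in\{0,1\}$ while $C_j-S_j+1=1-(S_j-C_j)\in\{0,1\}$, condition \eqref{eq:cond_addis_principle} becomes $\sum_{j=1}^{i}\tilde\alpha_j(S_j-C_j)\le\alpha$ for all $i$. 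After dividing \eqref{eq:addis_graph} by $\tau_i-\lambda_i$, it therefore suffices to produce non-negative, $\mathcal{G}_{i-1}$-measurable weights $\gamma_i$ and $g_{j,i}$ with $\sum_i\gamma_i\le1$ and $\sum_{i>j}g_{j,i}\le1$ such that $\tilde\alpha_i=\alpha\gamma_i+\sum_{j<i}g_{j,i}R_j$, where $R_j\coloneqq(C_j-S_j+1)\tilde\alpha_j$ denotes the level ``recycled'' at step $j$.

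The key preliminary step — and the one I expect to require the most care — is the pathwise bound $0\le\tilde\alpha_i\le B_i$, where $B_i\coloneqq\alpha-\sum_{j<i}\tilde\alpha_j(S_j-C_j)$ is the budget not yet permanently spent. Non-negativity of $\tilde\alpha_i$ is immediate, and $B_i\ge 0$ is the case $i-1$ of the rewritten condition. For the upper bound I would exploit that the level-setting rule must satisfy \eqref{eq:cond_addis_principle} on every sample path: for a fixed realisation of $P_1,\dots,P_{i-1}$ the quantities $\tilde\alpha_i,\tau_i,\lambda_i$ and $S_j-C_j$ ($j<i$) are all determined, and since $(\lambda_i,\tau_i]$ is non-empty I may select $P_i\in(\lambda_i,\tau_i]$, forcing $S_i-C_i=1$. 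The instance of \eqref{eq:cond_addis_principle} at step $i$ for that realisation then reads $\sum_{j<i}\tilde\alpha_j(S_j-C_j)+\tilde\alpha_i\le\alpha$, i.e.\ $\tilde\alpha_i\le B_i$. The subtlety is precisely this translation of the ``for all $i$'' (sure) requirement into a worst-case inequality by choosing the adversarial value of $P_i$; here one must be explicit that the condition holds for every realisation, not merely in expectation.

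With this bound in hand the construction is a routine online flow. I would treat $\alpha$ and each recycled amount $R_j$ as budget sources and each $\tilde\alpha_i$ as a demand, maintaining the residual balances of all sources. The invariant that $B_i$ equals the total residual balance follows by induction from $B_{i+1}=B_i-\tilde\alpha_i+R_i$ (since $R_i=\tilde\alpha_i-\tilde\alpha_i(S_i-C_i)$). Because $\tilde\alpha_i\le B_i$, the demand at step $i$ can be met by any feasible withdrawal: non-negative amounts $w_{0,i}$ from $\alpha$ and $w_{j,i}$ from each $R_j$, not exceeding the current balances and summing to $\tilde\alpha_i$ (e.g.\ drawing greedily, oldest source first). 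Setting $\gamma_i\coloneqq w_{0,i}/\alpha$ and $g_{j,i}\coloneqq w_{j,i}/R_j$ (and $g_{j,i}\coloneqq0$ when $R_j=0$, where necessarily $w_{j,i}=0$) gives $\alpha\gamma_i+\sum_{j<i}g_{j,i}R_j=w_{0,i}+\sum_{j<i}w_{j,i}=\tilde\alpha_i$, as required. Finally I would verify the three remaining properties: non-negativity is clear; the total withdrawn from any single source never exceeds its size, which yields $\sum_i\gamma_i\le1$ and $\sum_{i>j}g_{j,i}\le1$; and, since every withdrawal is a deterministic function of the $\mathcal{G}_{i-1}$-measurable quantities $\tilde\alpha_i$, $(R_j)_{j<i}$ and the running balances, the weights $\gamma_i,g_{j,i}$ are $\mathcal{G}_{i-1}$-measurable. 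Multiplying the resulting identity by $\tau_i-\lambda_i$ recovers \eqref{eq:addis_graph} with the original $\tau_i,\lambda_i$, exhibiting the procedure as an ADDIS-Graph in the sense of Definition \ref{def:addis_graph}.
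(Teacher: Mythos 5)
Your proof is correct, and its key reduction coincides with the paper's: both arguments rest on showing that condition \eqref{eq:cond_addis_principle} is equivalent to the pathwise interval bound $0\le\tilde\alpha_i\le B_i$ with $B_i=\alpha-\sum_{j<i}\tilde\alpha_j(S_j-C_j)$, obtained by playing the adversarial value $P_i\in(\lambda_i,\tau_i]$ so that $S_i-C_i=1$ (the paper states this equivalence more tersely as ``pessimistic assumptions need to be made''; your explicit justification of the worst-case step is a genuine improvement in rigour). Where you diverge is in how a procedure attaining any value in $[0,(\tau_i-\lambda_i)B_i]$ is then realised by ADDIS-Graph weights. The paper argues non-constructively: it views $\alpha_i$ as a continuous function of the free parameters $\gamma_i\in\left[0,1-\sum_{j<i}\gamma_j\right]$ and $g_{j,i}\in\left[0,1-\sum_{k=j+1}^{i-1}g_{j,k}\right]$, verifies by a telescoping computation that the two extreme parameter choices yield $0$ and the upper bound, and concludes by continuity. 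You instead give an explicit greedy budget-flow allocation, with the invariant that $B_i$ equals the total residual balance of the sources $\alpha,R_1,\ldots,R_{i-1}$ (via $B_{i+1}=B_i-\tilde\alpha_i+R_i$). Your route produces concrete weights and makes the constraints $\sum_i\gamma_i\le1$, $\sum_{i>j}g_{j,i}\le1$ and the $\mathcal{G}_{i-1}$-measurability immediate from the construction, whereas the paper's route trades that bookkeeping for a shorter existence argument. Both are complete; I see no gap in yours.
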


Note that Proposition \ref{theo:general_addis_graph} is not restricted to trivial conflict sets. Thus, if an online procedure $(\alpha_i)_{i\in \mathbb{N}}$ was adapted to conflict sets $(\mathcal{X}_i)_{i\in \mathbb{N}}$ such that $\alpha_i$ is measurable with respect to $\mathcal{G}_{-\mathcal{X}_i}$, it can also be constructed as an ADDIS-Graph that is given by \eqref{eq:addis_graph}. Therefore, being an ADDIS-Graph is necessary for a FWER controlling online procedure satisfying condition \eqref{eq:cond_addis_principle} under any conflict sets. Theorem \ref{theo:addis_graph} implies that being an ADDIS-Graph is also sufficient to control the FWER with \eqref{eq:cond_addis_principle} under trivial conflict sets. In the following section, we introduce a smaller class of ADDIS-Graphs that are sufficient for FWER control  under monotone conflict sets.

\section{ADDIS-Graph under nontrivial conflict sets}\label{sec:local_dependence}


The ADDIS-Graph (Figure \ref{abb:addis_graph}) can easily be adjusted to control the FWER under conflict sets by removing arrows connecting conflicting hypotheses.
This is illustrated in Figure \ref{fig:addis_graph_loc} for a specific conflict set. In this example, $\mathcal{X}_2=\{1\}$, meaning $H_1$ and $H_2$ are conflicting.  Hence, the link $g_{1,2}$ is removed as no significance level of the first hypothesis can be allocated to the second. Note that by removing the weight $g_{1,2}$ potential significance level is lost. However, the ADDIS-Graph allows to enlarge the remaining weights due to this loss. For example, by adding $g_{1,2}$ to the weight $g_{1,3}$. By this, the same amount of significance level is distributed as in the case of trivial conflict sets, ensuring similar power. In general, the idea is to receive significance level for hypothesis $H_i$ only from hypotheses that are not contained in $\mathcal{X}_i$. 

\begin{figure}[htbp]
	\begin{center}
			\centering
			\includegraphics[width=15cm,height=5cm,keepaspectratio]{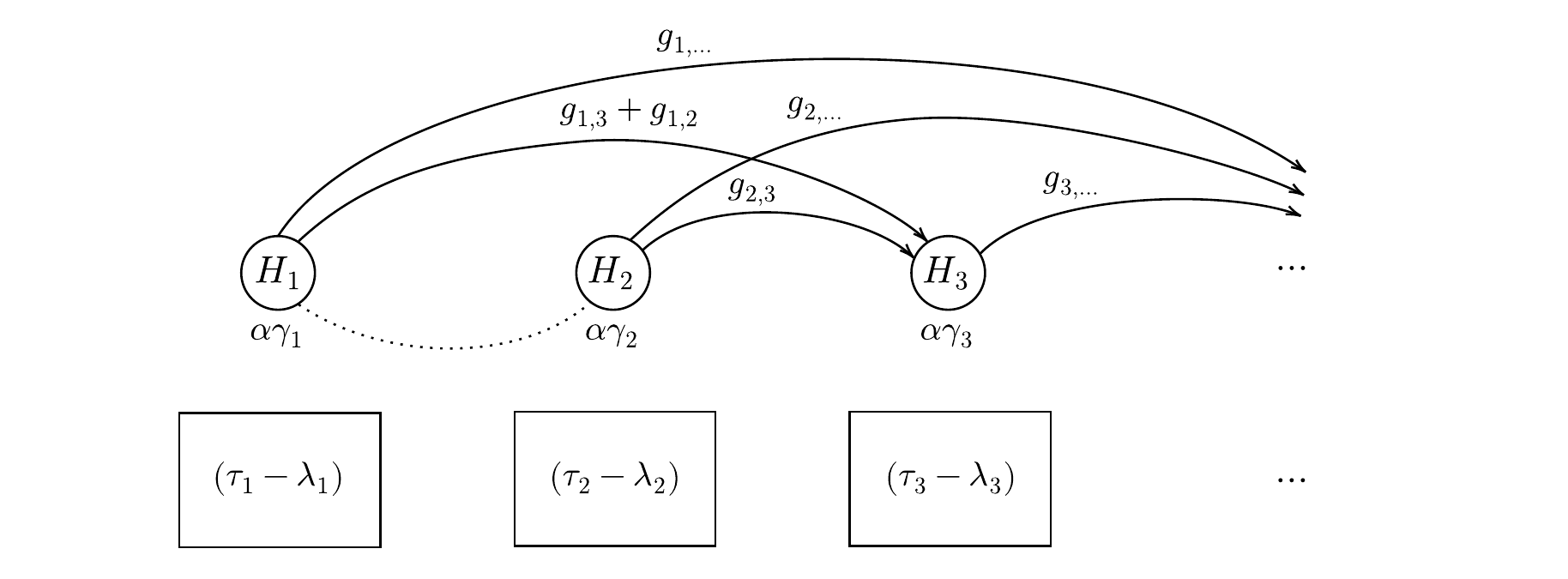}
	\end{center}
	\caption{Possible adjustment of the ADDIS-Graph (Figure \ref{abb:addis_graph}) to the conflict set $\mathcal{X}_2=\{1\}$.\label{fig:addis_graph_loc}}
\end{figure}



 \begin{definition}[$\text{ADDIS-Graph}_{\text{conf}}$\label{def:addis_graph_loc}]
 Let the conflict sets be given by $(\mathcal{X}_i)_{i\in \mathbb{N}}$. Let $(\gamma_i)_{i\in \mathbb{N}}$ be a non-negative sequence that sums up to $1$ and $(g_{j,i}^*)_{i =j+1}^{\infty}$ be a non-negative sequence for all $j\in \mathbb{N}$ such that $g_{j,i}^*=0$ if $j\in \mathcal{X}_i$ and $\sum_{i>j, j\notin \mathcal{X}_i} g_{j,i}^* \leq 1$. In addition, let $\gamma_i$, $g_{j,i}^*$, $\tau_i\in (0,1]$ and $\lambda_i\in [0,\tau_i)$ be measurable regarding $\mathcal{G}_{-\mathcal{X}_i}$. The $\textit{ADDIS-Graph}_{\textit{conf}}$ tests each hypothesis $H_i$ at significance level 
 
 \begin{align}\alpha_i = (\tau_i-\lambda_i)\left(\alpha \gamma_i + \sum_{j=1}^{i-1} g_{j,i}^* (C_j-S_j+1)  \frac{\alpha_j}{\tau_j-\lambda_j}\right).\label{eq:addis_graph_loc}\end{align}
 
 \end{definition}
 
Note that $\alpha_i$ from \eqref{eq:addis_graph_loc} is measurable with respect to $\mathcal{G}_{-\mathcal{X}_i}$, since $g_{j,i}^*=0$ for all $j\in \mathcal{X}_i$ and the conflict sets $(\mathcal{X}_i)_{i\in \mathbb{N}}$ are monotone. With this, the FWER control of $\text{ADDIS-Graph}_{\text{conf}}$ comes directly by Theorem \ref{theo:addis_graph}. 
\begin{corollary}
    The $\text{ADDIS-Graph}_{\text{conf}}$ controls the FWER in the strong sense under the setting in Section \ref{sec:setting} when conflicts are present.
\end{corollary}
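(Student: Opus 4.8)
The plan is to recognise the $\text{ADDIS-Graph}_{\text{conf}}$ as a particular $\text{ADDIS-Graph}$, so that condition \eqref{eq:cond_addis_principle} is inherited for free from Theorem \ref{theo:addis_graph}, and then to check the single extra ingredient that nontrivial conflicts demand, namely measurability of $\alpha_i$, $\tau_i$ and $\lambda_i$ with respect to $\mathcal{G}_{-\mathcal{X}_i}$. The argument therefore splits into an algebraic part (which is already done elsewhere) and a measurability part (which is the actual content here).

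First I would verify that the weights in Definition \ref{def:addis_graph_loc} satisfy the requirements of Definition \ref{def:addis_graph}. Each $g_{j,i}^*$ is non-negative, and for every fixed $j$ the terms with $j\in\mathcal{X}_i$ vanish, so $\sum_{i>j} g_{j,i}^* = \sum_{i>j,\, j\notin\mathcal{X}_i} g_{j,i}^* \le 1$. Hence \eqref{eq:addis_graph_loc} is an instance of \eqref{eq:addis_graph}, and the part of Theorem \ref{theo:addis_graph} asserting that \eqref{eq:cond_addis_principle} holds — which is a purely algebraic statement and never invokes triviality of the conflict sets — already gives that the $\text{ADDIS-Graph}_{\text{conf}}$ satisfies \eqref{eq:cond_addis_principle} for all $i\in\mathbb{N}$.

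The core of the proof is an induction on $i$ showing that $\alpha_i$ is $\mathcal{G}_{-\mathcal{X}_i}$-measurable. In the recursion \eqref{eq:addis_graph_loc} the sum effectively ranges only over indices $j\notin\mathcal{X}_i$, since $g_{j,i}^*=0$ otherwise; for such $j<i$ the $p$-value $P_j$ is $\mathcal{G}_{-\mathcal{X}_i}$-measurable. The delicate point — and the one place where the hypotheses genuinely do work — is that the inductive factor $\alpha_j$, as well as $\tau_j$ and $\lambda_j$, are only known to be $\mathcal{G}_{-\mathcal{X}_j}$-measurable, so I need the subfiltration inclusion $\mathcal{G}_{-\mathcal{X}_j}\subseteq\mathcal{G}_{-\mathcal{X}_i}$. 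This is exactly where monotonicity of $(\mathcal{X}_i)_{i\in\mathbb{N}}$ enters: if $k<j$ and $k\notin\mathcal{X}_j$, then $k\notin\mathcal{X}_i$, for otherwise $k\in\mathcal{X}_i$ together with $k<j<i$ would force $k\in\mathcal{X}_j$ by monotonicity, a contradiction. Granting this inclusion, $S_j$ and $C_j$ are $\mathcal{G}_{-\mathcal{X}_i}$-measurable; combined with the assumed $\mathcal{G}_{-\mathcal{X}_i}$-measurability of $\gamma_i$, $g_{j,i}^*$, $\tau_i$ and $\lambda_i$, the product-and-sum structure of \eqref{eq:addis_graph_loc} closes the induction.

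Finally, having established \eqref{eq:cond_addis_principle} together with measurability of $\alpha_i$, $\tau_i$ and $\lambda_i$ with respect to $\mathcal{G}_{-\mathcal{X}_i}$, I would invoke the Tian and Ramdas FWER criterion recalled at the end of Section \ref{sec:setting} \citep{TR} to conclude strong FWER control in the presence of conflicts. I expect the measurability step to be the only real obstacle: once the inclusion $\mathcal{G}_{-\mathcal{X}_j}\subseteq\mathcal{G}_{-\mathcal{X}_i}$ is secured via monotonicity, everything else is bookkeeping and a direct appeal to Theorem \ref{theo:addis_graph}.
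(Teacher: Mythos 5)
Your proposal is correct and matches the paper's own argument: the paper likewise observes that the $\text{ADDIS-Graph}_{\text{conf}}$ is an instance of the ADDIS-Graph whose weights satisfy Definition~\ref{def:addis_graph}, that the algebraic verification of \eqref{eq:cond_addis_principle} in Theorem~\ref{theo:addis_graph} carries over unchanged, and that $\mathcal{G}_{-\mathcal{X}_i}$-measurability of $\alpha_i$ follows from $g_{j,i}^*=0$ for $j\in\mathcal{X}_i$ together with monotonicity of the conflict sets. You simply spell out the measurability induction (in particular the inclusion $\mathcal{G}_{-\mathcal{X}_j}\subseteq\mathcal{G}_{-\mathcal{X}_i}$) in more detail than the paper does.
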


Also note that for $\mathcal{X}_i=\emptyset$ for all $i\in \mathbb{N}$ the $\text{ADDIS-Graph}_{\text{conf}}$ becomes the ADDIS-Graph under trivial conflict sets (Definition \ref{def:addis_graph}). The $*$ in $(g_{j,i}^*)_{j\in \mathbb{N}, i>j}$ indicates that the weights are adjusted to the conflict sets. There are many approaches to obtain adjusted weights $(g_{j,i}^*)_{j\in \mathbb{N}, i>j}$. For example, one could choose them manually based on contextual information or add the removed weights to one of the remaining weights as in Figure \ref{fig:addis_graph_loc}. In the following section, we introduce a choice of weights that leads to a uniform improvement of the ADDIS-Spending \citep{TR}.

\section{A uniform improvement of the ADDIS-Spending}\label{sec:ADDIS-Spending}
The current state-of-art method satisfying condition \eqref{eq:cond_addis_principle} is the ADDIS-Spending by 
Tian \& Ramdas (2021) \citep{TR}, which controls the FWER under local dependence. The $p$-values $(P_i)_{i\in \mathbb{N}}$ are said to be locally dependent \citep{ZRJ}, if 
\begin{align}
    P_i \perp P_{i-L_i-1}, P_{i-L_i-2}, \ldots, P_1
\end{align}
for some lags $(L_i)_{i\in \mathbb{N}}$ with $L_{i+1}\leq L_i+1$. Zrnic et al. (2021) \citep{ZRJ} noted that local dependence is included in the more general concept of conflict sets by defining $\mathcal{X}_i=\{i-1,\ldots, i-L_i\}$ while the condition $L_{i+1}\leq L_i+1$ ensures that these conflict sets are monotone. For example, an intuitive special case of local dependence is batch dependence. That means, there are disjoint groups of $p$-values $B_1=\{P_1,\ldots,P_j\}$ for $j\in \mathbb{N}$, $B_2 =\{P_j,\ldots,P_k\}$ for $k>j$, $B_3= \{P_k,\ldots,P_l\}$ for $l>k$, and so on, such that $p$-values from the same batch may depend on each other, but hypotheses from different batches are independent. For instance, batch dependence occurs when the data is replaced by fresh and independent data after a period of time.

The $\textit{ADDIS-Spending}_{\textit{local}}$ is defined as
\begin{align}\alpha_i^{\text{spend}} = \alpha (\tau_i-\lambda_i) \gamma_{t(i)^{\text{loc}}}, \quad \text{where } t(i)^{\text{loc}}=1+ L_i +\sum_{j=1}^{i-L_i-1} (S_j-C_j),\label{eq:addis_spending_loc}\end{align}
where $(\gamma_i)_{i\in \mathbb{N}}$ is the same as in an ADDIS-Graph but non-increasing. Note that $t(i)^{\text{loc}}$ increases when $L_i$ increases and thus $\alpha_i^{\text{spend}}$ decreases. Therefore, the $\text{ADDIS-Spending}_{\text{local}}$ loses significance level due to the dependency of the $p$-values. This is the main difference to the $\text{ADDIS-Graph}_{\text{conf}}$, where we argued that the same significance level is distributed under any conflict sets, as long as there is a non-conflicting future hypothesis. In the following, we use this to define an $\text{ADDIS-Graph}_{\text{conf}}$ that uniformly improves the $\text{ADDIS-Spending}_{\text{local}}$. For this, we first show how the $\text{ADDIS-Spending}_{\text{local}}$ can be written as something similar to an ADDIS-Graph.

 \begin{lemma}\label{lemma:uniform_improvement}
 Let $(\gamma_i)_{i\in \mathbb{N}}$ be non-increasing and define 
 
 \begin{align*}\alpha_i^{\text{ind}} & \coloneqq (\tau_i-\lambda_i)\left(\alpha \gamma_i + \sum_{j=1}^{i-1} g_{j,i} (C_j-S_j+1)  \frac{\alpha_j^{\text{ind}}}{\tau_j-\lambda_j}\right)\\
 \alpha_i^{\text{loc}} &\coloneqq (\tau_i-\lambda_i)\left(\alpha \gamma_i + \sum_{j=1}^{i-L_i-1} g_{j,i} (C_j-S_j+1)  \frac{\alpha_j^{\text{ind}}}{\tau_j-\lambda_j} \right) .
 \end{align*} 
 
 Then $\alpha_i^{\text{ind}}$ is equivalent to an ADDIS-Spending under independence ($L_i=0 \text{ } \forall i\in \mathbb{N}$) and $\alpha_i^{\text{loc}}=\alpha_i^{\text{spend}}$ for all $i\in \mathbb{N}$,  if $g_{j,i}=\frac{\gamma_{t(j)+i-j-1}-\gamma_{t(j)+i-j}}{\gamma_{t(j)}}$, $i>j$, where $t(j)=1+\sum_{k<j} (S_k-C_k)$.
 \end{lemma}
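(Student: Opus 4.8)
The plan is to first establish the closed form $\tilde\alpha_i \coloneqq \alpha_i^{\text{ind}}/(\tau_i-\lambda_i) = \alpha\gamma_{t(i)}$ for the independent recursion, since both assertions of the lemma reduce to this single identity. Indeed, once it holds, the independence claim $\alpha_i^{\text{ind}} = \alpha(\tau_i-\lambda_i)\gamma_{t(i)}$ is exactly the definition of the ADDIS-Spending at $L_i=0$ (where $t(i)^{\text{loc}}=t(i)$), and the local claim follows by substituting $\alpha_j^{\text{ind}}/(\tau_j-\lambda_j)=\alpha\gamma_{t(j)}$ into the definition of $\alpha_i^{\text{loc}}$.

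I would prove the closed form by induction on $i$. The base case $i=1$ is immediate, since $t(1)=1$ and $\tilde\alpha_1=\alpha\gamma_1$. For the inductive step I divide the defining recursion by $(\tau_i-\lambda_i)$ to get $\tilde\alpha_i=\alpha\gamma_i+\sum_{j=1}^{i-1} g_{j,i}(C_j-S_j+1)\tilde\alpha_j$, and insert both the inductive hypothesis $\tilde\alpha_j=\alpha\gamma_{t(j)}$ and the prescribed weight $g_{j,i}=(\gamma_{t(j)+i-j-1}-\gamma_{t(j)+i-j})/\gamma_{t(j)}$. The factor $\gamma_{t(j)}$ then cancels, leaving $g_{j,i}\tilde\alpha_j=\alpha(\gamma_{t(j)+i-j-1}-\gamma_{t(j)+i-j})$. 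Writing $d_j\coloneqq S_j-C_j\in\{0,1\}$ (using $\lambda_j<\tau_j$, so that $C_j-S_j+1=1-d_j$ and $t(j+1)=t(j)+d_j$), the summand becomes $(1-d_j)(\gamma_{a_j-1}-\gamma_{a_j})$ under the reindexing $a_j\coloneqq t(j)+i-j$.

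The crux is to recognise that this weighted sum telescopes. From $t(j+1)=t(j)+d_j$ one obtains $a_{j+1}=a_j+d_j-1$, so a one-line case check on $d_j\in\{0,1\}$ gives $(1-d_j)(\gamma_{a_j-1}-\gamma_{a_j})=\gamma_{a_{j+1}}-\gamma_{a_j}$ in both cases. Summing from $j=1$ to $i-1$ collapses the sum to $\gamma_{a_i}-\gamma_{a_1}=\gamma_{t(i)}-\gamma_i$, since $a_1=i$ and $a_i=t(i)$; hence $\tilde\alpha_i=\alpha\gamma_i+\alpha(\gamma_{t(i)}-\gamma_i)=\alpha\gamma_{t(i)}$, closing the induction. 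For the local claim the identical computation applies except that the sum runs only to $i-L_i-1$, so the telescope yields $\gamma_{a_{i-L_i}}-\gamma_{a_1}$; it then remains to verify that the boundary index matches the ADDIS-Spending counter, namely $a_{i-L_i}=t(i-L_i)+L_i=1+L_i+\sum_{j=1}^{i-L_i-1}(S_j-C_j)=t(i)^{\text{loc}}$, which gives $\alpha_i^{\text{loc}}=\alpha(\tau_i-\lambda_i)\gamma_{t(i)^{\text{loc}}}=\alpha_i^{\text{spend}}$.

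The only genuinely delicate point is spotting the reindexing $a_j=t(j)+i-j$ that converts the weighted sum into a telescope; everything after that is bookkeeping. I would also flag that monotonicity of $(\gamma_i)$ is not needed for these equalities themselves (they are purely algebraic identities) but only enters to guarantee $g_{j,i}\geq 0$, so that the $g_{j,i}$ are admissible weights and the resulting object is a legitimate $\text{ADDIS-Graph}_{\text{conf}}$.
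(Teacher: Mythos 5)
Your proposal is correct and follows essentially the same route as the paper: an induction establishing $\tilde\alpha_i^{\text{ind}}=\alpha\gamma_{t(i)}$, followed by recognising the weighted sum as a telescope (the paper simply reindexes it as $\sum_{j=t(i)}^{i-1}(\gamma_j-\gamma_{j+1})$ where you make the same collapse explicit via $a_j=t(j)+i-j$), and then truncating the sum at $i-L_i-1$ to recover $t(i)^{\text{loc}}$ for the local claim. Your added remark that monotonicity of $(\gamma_i)$ is only needed for non-negativity of the weights, not for the identities themselves, is accurate.
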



 Lemma \ref{lemma:uniform_improvement} shows that $\text{ADDIS-Spending}_{\text{local}}$ can be represented as an ADDIS-Graph, in which significance level is distributed to all future hypotheses, even to the dependent ones, but only levels that come from independent hypotheses are used to test a hypothesis. It is intuitive, that it is more efficient to directly distribute significance level only to independent hypotheses. To see this, consider an example with $L_2=1$ and $L_3=0$. Then $\alpha_1^{\text{loc}}=(\tau_1-\lambda_1)\alpha \gamma_1$, $\alpha_2^{\text{loc}}=(\tau_2-\lambda_2)\alpha \gamma_2$ and $\alpha_3^{\text{loc}}=(\tau_3-\lambda_3)(\alpha \gamma_3+U_1 \alpha \gamma_1 g_{1,3}  + U_2 \alpha \gamma_2g_{2,3}+U_2U_1g_{1,2}g_{2,3} \alpha\gamma_1)$, where $U_i=C_i-S_i+1$. Now, if we replace $g_{1,2}$ by $g_{1,2}^*=0$ and $g_{1,i}$ by $g_{1,i}^*=g_{1,i}+g_{1,2}g_{2,i}$, $i>2$, we would distribute the same amount of level as before. However, while $\alpha_1^{\text{loc}}$ and $\alpha_2^{\text{loc}}$ remain the same, we obtain $\alpha_3^{\text{loc}}=(\tau_3-\lambda_3)(\alpha \gamma_3+U_1 \alpha \gamma_1 (g_{1,3}+g_{1,2}g_{2,3})  + U_2 \alpha \gamma_2g_{2,3})$, which is larger than before since $U_2\leq 1$. In the proof of the following theorem, we introduce a general algorithm exploiting this to uniformly improve the $\text{ADDIS-Spending}_{\text{local}}$.



\begin{proposition}\label{theo:uniform_improvement}
    Let $(\gamma_i)_{i\in \mathbb{N}}$ be non-increasing and define $\mathcal{X}_i=\{i-1,\ldots, i-L_i\}$. Then there exists a choice of weights $(g_{j,i}^*)_{j\in \mathbb{N}, i>j}$ such that the $\text{ADDIS-Graph}_{\text{conf}}$  uniformly improves the $\text{ADDIS-Spending}_{\text{local}}$. We denote this procedure as $\text{ADDIS-Graph}_{\text{conf-u}}$.
\end{proposition}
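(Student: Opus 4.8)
The plan is to build the improved weights by a forward \emph{rerouting} of the ADDIS-Spending weights supplied by Lemma \ref{lemma:uniform_improvement}, and then to compare the two procedures path by path. I would start from the representation $\alpha_i^{\text{spend}}=\alpha_i^{\text{loc}}$ of that lemma, in which ADDIS-Spending$_{\text{local}}$ is written with the weights $g_{j,i}=(\gamma_{t(j)+i-j-1}-\gamma_{t(j)+i-j})/\gamma_{t(j)}$ but where the truncated outer sum is applied to $\alpha_j^{\text{ind}}$ rather than to $\alpha_j^{\text{loc}}$. Unfolding the recursion $\tilde\alpha_j^{\text{ind}}=\alpha\gamma_j+\sum_{k<j} g_{k,j}U_k\tilde\alpha_k^{\text{ind}}$ expresses $\tilde\alpha_i^{\text{loc}}=\alpha_i^{\text{loc}}/(\tau_i-\lambda_i)$ as a sum over increasing paths $v_0<\cdots<v_m=i$ in which the only constraint is that the \emph{final} edge be non-conflicting, each path contributing $\alpha\gamma_{v_0}\prod_{l}g_{v_l,v_{l+1}}\prod_{l<m}U_{v_l}$. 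This makes rigorous the informal observation in the text that ADDIS-Spending routes level along paths that traverse conflicting (dependent) hypotheses and pays an extra factor $U_{v_l}\le 1$ at each such intermediate node.

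Next I would define $(g_{j,i}^*)$ as the generalisation of the two-step move in the text: iteratively update the weights by taking each conflicting edge $g_{j,k}$ with $j\in\mathcal{X}_k$, setting it to zero, and adding $g_{j,k}g_{k,i}$ to $g_{j,i}$ for every $i>k$; the weights reached once no conflicting edge remains are the $g_{j,i}^*$. Because the conflict sets are monotone, the forward conflict zone $\{k>j:j\in\mathcal{X}_k\}$ of each source $j$ is a contiguous block, so pushed mass always has a non-conflicting target to flow into; equivalently, $g_{j,i}^*$ has the closed form given by the sum, over all paths from $j$ to $i$ whose intermediate nodes lie entirely in $j$'s conflict zone, of the product of the original weights. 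I would then verify the requirements of Definition \ref{def:addis_graph_loc}: non-negativity is immediate, $g_{j,i}^*=0$ for $j\in\mathcal{X}_i$ holds by construction, and since each rerouting step replaces a mass $g_{j,k}$ by $g_{j,k}\sum_{i>k}g_{k,i}\le g_{j,k}$, the outgoing sum obeys $\sum_{i>j}g_{j,i}^*\le\sum_{i>j}g_{j,i}\le 1$.

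Finally I would establish the uniform improvement $\alpha_i^{\text{conf-u}}\ge\alpha_i^{\text{spend}}$ by matching the path expansions, where $\tilde\alpha_i^{\text{conf-u}}=\alpha_i^{\text{conf-u}}/(\tau_i-\lambda_i)$ denotes the (all-edges-non-conflicting) expansion of the ADDIS-Graph$_{\text{conf-u}}$. Every ADDIS-Spending path segments uniquely, at its successive ``first exits'' from the current source's conflict zone, into a chain of super-edges, and each super-edge is exactly one $g_{j,i}^*$ contribution; the $g$-products match, while the super-edge expansion carries $U$-factors only at the segmentation points and thus deletes the factors $U_k$ of all within-segment intermediate nodes. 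Since $U_k\in\{0,1\}\subseteq[0,1]$, deleting such factors weakly increases every contribution, so $\tilde\alpha_i^{\text{conf-u}}$ dominates $\tilde\alpha_i^{\text{loc}}$ term by term. As this holds for every realisation of the $U_j$ and every $i$, it yields the claimed uniform improvement, strict whenever some collapsed intermediate node has $U_k=0$.

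The step I expect to be the main obstacle is the exact path-level bookkeeping: proving that the segmentation of each ADDIS-Spending path into zone-segments is unique and mass-preserving, so that the rerouted weights $g_{j,i}^*$ genuinely reproduce a valid ADDIS-Graph$_{\text{conf}}$ recursion whose expansion matches the collapsed ADDIS-Spending expansion term by term with no double counting. The monotonicity of the conflict sets is the essential ingredient here, since it forces the conflict zones to be intervals and hence makes ``first exit from the zone'' well-defined; it also guarantees that a non-conflicting target exists whenever a source has one (and that $g_{j,i}^*\equiv 0$, harmlessly, in the degenerate case of an unbounded conflict zone), which is what keeps the simplex constraint $\sum_{i>j}g_{j,i}^*\le 1$ intact throughout the rerouting.
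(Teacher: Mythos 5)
Your overall strategy is the same as the paper's: represent $\text{ADDIS-Spending}_{\text{local}}$ via the Lemma~\ref{lemma:uniform_improvement} weights, reroute each conflicting edge $g_{j,k}$ forward along $g_{j,k}g_{k,i}$, and observe that the Spending expansion pays a factor $U_k\le 1$ at every intermediate node of a rerouted path while the graph expansion does not (the paper encodes exactly this bookkeeping in Algorithms~2 and~3 rather than in an explicit path sum). However, your construction of $(g_{j,i}^*)$ has a genuine gap: you verify non-negativity, vanishing on conflicts, and the simplex constraint, but not the measurability requirement of Definition~\ref{def:addis_graph_loc}, namely that $g_{j,i}^*$ be $\mathcal{G}_{-\mathcal{X}_i}$-measurable. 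This is not a formality here, because the Lemma~\ref{lemma:uniform_improvement} weights are data-dependent: $g_{l,m}$ involves $t(l)=1+\sum_{k<l}(S_k-C_k)$ and hence the $p$-values $P_1,\dots,P_{l-1}$. In your closed form, $g_{j,i}^*$ sums over paths whose intermediate nodes lie in $j$'s forward conflict zone, and nothing prevents the \emph{last} intermediate node $v$ from lying in $\mathcal{X}_i$; in that case the factor $g_{v,i}$ depends on $P_k$ for $k<v$, including indices $k\in\{i-L_i,\dots,v-1\}\subseteq\mathcal{X}_i$. Concretely, take $\mathcal{X}_2=\{1\}$, $\mathcal{X}_3=\{1,2\}$, $\mathcal{X}_4=\{2,3\}$ (monotone, $L_{i+1}\le L_i+1$): the path $1\to 3\to 4$ is admissible in your scheme, yet $g_{3,4}$ depends on $P_2\in\mathcal{X}_4$, so $g_{1,4}^*$ is not $\mathcal{G}_{-\mathcal{X}_4}$-measurable and the resulting procedure is not a valid $\text{ADDIS-Graph}_{\text{conf}}$, so FWER control no longer follows from the Corollary.

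The paper's Algorithm~\ref{alg:1} is built precisely to close this hole: besides zeroing out edges with $j\in\mathcal{X}_i$, its second branch subtracts from $g_{j,i}^*$ the rerouted mass $g_{j,i}^{-}=\sum_{l=i-L_i}^{i-1}g_{l,i}g_{j,l}^{-}$ that would reach $i$ through a node $l\in\mathcal{X}_i$, and pushes that mass further into the future, so that $g_{j,i}^*$ depends only on $(g_{l,k})_{l\le i-L_i,\,k>l}$ and is therefore $\mathcal{G}_{-\mathcal{X}_i}$-measurable. Your argument goes through unchanged in the batch-dependence special case (where a path leaving $j$'s batch can never re-enter $\mathcal{X}_i$ before reaching $i$), but for general lags you need this additional deferral step, and the term-by-term domination argument must then be run against these corrected weights rather than your ``all paths with intermediate nodes in the conflict zone'' closed form.
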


\section{Extensions of the ADDIS-Graph}\label{sec:extensions}

\subsection{Control of the PFER}\label{sec:PFER}
Tian and Ramdas (2021) \citep{TR} showed that procedures satisfying \eqref{eq:cond_addis_principle} even control the more conservative per-family wise error rate (PFER) defined by 
\begin{align}
\text{PFER}\coloneqq \mathbb{E}[v],  \label{eq:PFER}
\end{align}
where $v$ is the number of false rejections. Hence, it is not an actual extension of the ADDIS-Graph, but an immediate consequence of the result by Tian and Ramdas (2021) \citep{TR}, that the ADDIS-Graph also controls the PFER. It follows by $\mathbbm{1}\{v>0\}\leq v$ that control of the PFER implies strong control of the FWER (see \eqref{eq:FWER_PFER} for another explanation).

While we focused on the FWER since it is the more common error rate in practice, the PFER still offers good interpretability and for this reason there are also applications where the PFER is desirable. For example, in a platform trial many treatment groups are compared to the same control group. Now, if the control group performed badly, meaning worse outcomes were observed than usual, there is a danger of deeming many treatments as efficient even though they are not. The FWER does not protect against this case, as it only ensures that the probability of committing any type I error is small. However, if we are in the case of a type I error, it has no guarantee about the number of type I errors. This can be resolved by controlling the PFER and is therefore automatically provided by the ADDIS-Graph. This example is not intended to question the appropriateness of the FWER for platform trials, but only to show that the control of the PFER provides additional sensible control.

\subsection{Comparison to closed ADDIS-Spending and the closed ADDIS-Graph}\label{sec:closed_ADDIS}

Fischer et al. (2024) \citep{fischer2022online} introduced another improvement of the ADDIS-Spending under local dependence based on their online closure principle. The \textit{closed ADDIS-Spending} tests each individual hypothesis at the level 
\begin{align}\alpha_i^{\text{c-spend}} = \alpha (\tau_i-\lambda_i) \gamma_{t(i)^{\text{c-loc}}}, \quad \text{where } t(i)^{\text{c-loc}}=1+ \sum_{j=i-L_i}^{i-1} (1-R_j) +\sum_{j=1}^{i-L_i-1} (S_j-\max\{R_j,C_j\}),\label{eq:closed_addis_spending}\end{align}
where $R_j=\mathbbm{1}\{P_j\leq \alpha_j^{\text{c-spend}}\}$. Note that this is a uniform improvement of the ADDIS-Spending under local dependence \eqref{eq:addis_spending_loc}, because $\sum_{j=i-L_i}^{i-1} (1-R_j)\leq L_i$ and $\max\{R_j,C_j\}\geq C_j$. We usually choose $\lambda_i\geq \alpha_i$ (see \citep{TR}) and also assume this in the following argumentation such that the latter inequality becomes an equation and the only improvement comes from the former inequality. Hence, the difference compared  to the ADDIS-Spending under local dependence is that even if $P_j$ and $P_i$, $i>j$, depend on each other, the level $\alpha_i$ is allowed to be adjusted to the information whether $P_j\leq \alpha_j$. Thus, if a hypothesis $H_j$ is rejected, the closed ADDIS-Spending no longer loses significance level due to the local dependence, however, it still does if $\alpha_j < P_j\leq \lambda_j$ or $P_j>\tau_j$. 

Hence, one difference to the uniform improvement obtained by the $\text{ADDIS-Graph}_{\text{conf}}$ introduced in Section \ref{sec:ADDIS-Spending} is that the closed ADDIS-Spending only improves the ADDIS-Spending in case of a rejection, while the $\text{ADDIS-Graph}_{\text{conf}}$ even improves the ADDIS-Spending if $\alpha_j < P_j\leq \lambda_j$ or $P_j>\tau_j$ (and local dependence is present). This suggests that the uniform improvement obtained by the $\text{ADDIS-Graph}_{\text{conf}}$ is stronger, which is verified by simulations (see Appendix \ref{sec:sim_appendix}). Another difference between the $\text{ADDIS-Graph}_{\text{conf}}$ and closed ADDIS-Spending is the construction. While the closed ADDIS-Spending is based on applying the (online) closure principle \citep{fischer2022online, MPG} directly to the ADDIS-Spending, the $\text{ADDIS-Graph}_{\text{conf}}$ defines another way to exploit condition \eqref{eq:cond_addis_principle}. The latter has the advantage that it additionally provides control of the PFER as discussed in Section \ref{sec:PFER}, while the closure principle only guarantees FWER control. Furthermore, the closed ADDIS-Spending cannot be formulated for general conflict sets and the $\text{ADDIS-Graph}_{\text{conf}}$ provides a higher flexibility in general.

Nevertheless, this poses the question of whether the $\text{ADDIS-Graph}_{\text{conf}}$ can also be improved by the (online) closure principle under local dependence. Similarly as for the ADDIS-Spending, one can construct a closed $\text{ADDIS-Graph}_{\text{conf}}$ under local dependence that tests the individual hypothesis $H_i$ at the level
\begin{align}
\alpha_i^{\text{c-graph}} = (\tau_i-\lambda_i)\left(\alpha \gamma_i + \sum_{j=i-L_i}^{i-1} g_{j,i} R_j  \frac{\alpha_j}{\tau_j-\lambda_j} + \sum_{j=1}^{i-L_i-1} g_{j,i} (\max\{R_j,C_j\}-S_j+1)  \frac{\alpha_j}{\tau_j-\lambda_j}\right), \label{eq:closed_graph}
\end{align}
where $R_j=\mathbbm{1}\{P_j\leq \alpha_j^{\text{c-graph}}\}$ and we usually assume that $\max\{R_j,C_j\}=C_j$ as above. 
We provide the derivation of this closed $\text{ADDIS-Graph}_{\text{conf}}$ in Appendix \ref{sec:closed_ADDIS_graph}. The closed $\text{ADDIS-Graph}_{\text{conf}}$ allows to distribute significance level to dependent hypotheses in the case of a rejection and to independent hypotheses if $P_i\leq \lambda_i$ or $P_i>\tau_i$. Since the $\text{ADDIS-Graph}_{\text{conf}}$ can only distribute significance level to independent hypotheses (in case of $P_i\leq \lambda_i$ or $P_i>\tau_i$), the closed $\text{ADDIS-Graph}_{\text{conf}}$ could be seen as a uniform improvement of it. However, $g_{j,i}$ is only allowed to depend on information that is independent of $P_i$. Hence, if $P_j$ and $P_i$ depend on each other, $g_{j,i}$ must be fixed before knowing the true value of $P_j$ and we must decide whether we want to distribute some of the significance level $\alpha_j^{\text{c-graph}}$ to dependent hypotheses (in case of a rejection) before knowing whether $H_j$ will be rejected. Since it is much more likely that $P_j\leq \lambda_j$ or $P_j>\tau_j$ than $P_j\leq \alpha_i^{\text{c-graph}}$, one usually chooses $g_{j,i}=0$ for all $i>j$ with $i-L_i\leq j$ in order to obtain a high power. In this case, the closed $\text{ADDIS-Graph}_{\text{conf}}$ reduces to the usual $\text{ADDIS-Graph}_{\text{conf}}$ under conflict sets. The only situation in which we believe that the closed $\text{ADDIS-Graph}_{\text{conf}}$ would provide a real uniform improvement of the $\text{ADDIS-Graph}_{\text{conf}}$ is when we have the information that all future hypotheses are dependent on the current $P_j$. In this case, it would be best to choose $\lambda_j=0$ and $\tau_j=1$ such that the closed $\text{ADDIS-Graph}_{\text{conf}}$ reduces to the online version \citep{TR, fischer2022online} of the classical graphical approach by Bretz et al. (2009) \citep{BWBP}, while the $\text{ADDIS-Graph}_{\text{conf}}$ becomes the more conservative online Bonferroni adjustment \citep{FS, TR}.

\subsection{Exploiting the joint distribution of the p-values}

In Section \ref{sec:PFER}, we have noted that the ADDIS-Graph even controls the more conservative error rate PFER and in Section \ref{sec:closed_ADDIS} we have shown that the (online) closure principle does not give a direct improvement of the ADDIS-Graph. Hence, the question is how the gap between PFER and FWER control can be used to improve the ADDIS-Graph further. For this, note that the connection between the PFER and the FWER can also be explained by the Bonferroni inequality
\begin{align}
\text{FWER}=\mathbb{P}\left(\bigcup_{i\in I_0} \{P_i\leq \alpha_i\}\right)\leq \sum_{i\in I_0} \mathbb{P}(P_i\leq \alpha_i)= \sum_{i\in I_0} \mathbb{E}[\mathbbm{1}\{P_i\leq \alpha_i\}]=\mathbb{E}[v]=\text{PFER}. \label{eq:FWER_PFER}
\end{align}
It is known that the Bonferroni inequality leads to conservative procedures as it makes worst case assumptions about the joint distribution of the p-values. Hence, one approach to improve the ADDIS-Graph would be to incorporate information about the joint distribution of the p-values. Note that such information is not always available. However, for example in a platform trial, the entire dependency between the p-values comes from the shared control data and therefore can be determined if the number of observations shared is available. In Appendix \ref{sec:correlation_structure}, we demonstrate how such information about the correlation structure can be included in the $\text{ADDIS-Graph}_{\text{conf}}$ and compare it via simulations to the usual $\text{ADDIS-Graph}_{\text{conf}}$ in Appendix \ref{sec:sim_appendix}. However, note that the proposed method only works if $\tau_i=1$ and the local dependence structure is given by batches. 

\subsection{ADDIS-Graphs for FDR control}
While being the norm in validation studies \citep{wason2014correcting}, FWER control can be too conservative for certain applications, particularly if the number of hypotheses is large. Less conservative error rates often considered in the (online) multiple testing literature are the false discovery rate (FDR) \citep{BH} and the modified FDR (mFDR), where 
\begin{align}
    \text{FDR(}i\text{)}\coloneqq \mathbb{E}\left(\frac{|V(i)|}{|R(i)|\lor 1}\right) \qquad \text{mFDR(}i\text{)}\coloneqq \frac{\mathbb{E}(|V(i)|)}{\mathbb{E}(|R(i)|\lor 1)} \qquad (i\in \mathbb{N}).
\end{align}
The goal is to control $\text{FDR(}i\text{)}$ or $\text{mFDR(}i\text{)}$ at each time $i\in \mathbb{N}$. While the FDR is the most common error rate in large scale multiple testing, mFDR control is often considered in online multiple testing since it is easier to prove and usually requires fewer assumptions \citep{FS, JM, ZRJ}. Note that there is a strong connection to the PFER, which is basically the numerator of the mFDR. Hence, it is not surprising that PFER procedures can be extended comparatively easy to mFDR or even FDR control. Tian and Ramdas (2019) \citep{TR2} have provided a condition similar to \eqref{eq:cond_addis_principle} that allows to construct powerful ADDIS procedures with control of the FDR or mFDR under different assumptions. In Appendix \ref{sec:FDR} we introduce an improved version of the ADDIS-Graph that provably satisfies the condition by Tian and Ramdas (2019) \citep{TR2}. Furthermore, we argue that the resulting FDR-ADDIS-Graph outperforms the existing ADDIS methods when conflicts are present.

\section{Simulations}\label{sec:sim}


In this section, we compare the power and FWER control of the $\text{ADDIS-Graph}_{\text{conf-u}}$ and the $\text{ADDIS-Spending}_{\text{local}}$ under local dependence using simulated data. Simulations for the extended ADDIS-Graphs introduced in Section \ref{sec:extensions} can be found in Appendix \ref{sec:sim_appendix}.

We consider $n=100$ hypotheses to be tested, whose corresponding $p$-values $(P_i)_{i\in \{1,\ldots,n\}}$ follow a batch dependence structure $B_{1},\ldots,B_{n/b}$ with the same batch-size $b\in \{1,5,10,20\}$ for every batch. That means $B_j=\{P_{(j-1)b+1}, \ldots,P_{jb}\}$, $j\in \{1,\ldots, n/b\}$, and all p-values within a batch $B_j$ depend on each other, while p-values from different batches are independent. Let $X_{(j-1)b+1:bj}=(X_{(j-1)b+1},\ldots,X_{bj})^T {\sim} N_b(\mu,\Sigma)$ be $b$-dimensional $i.i.d$ random vectors, where  $\mu= (0,\ldots,0)^T\in \mathbb{R}^{b}$ and $\Sigma=(\sigma_{ik})_{i,k=1,\ldots,b}\in \mathbb{R}^{b\times b}$ with $\sigma_{ii}=1$ and $\sigma_{ik}=\rho\in (0,1)$ for all $i\in \{1,\ldots,b\}$ and $k\neq i$. For each $H_i$, $i\in \{1,\ldots,n\}$, we test the null hypothesis $H_i: \mu_i \leq 0$ with $\mu_i=\mathbb{E}[Z_i]$, where
 $Z_i=X_i+3$ with probability $\pi_A\in (0,1)$ and $Z_i=X_i+\mu_N$, $\mu_N<0$, otherwise.
 Since the test statistics follow a standard Gaussian distribution under the null hypothesis, a  z-test can be used. The parameter $\pi_A$ can be interpreted as the probability of a hypothesis being false and $\mu_N$ as the conservativeness of null $p$-values \citep{TR}. 
 
 In this subsection, we use an overall level $\alpha=0.2$ and estimate the FWER and power of the $\text{ADDIS-Graph}_{\text{conf-u}}$ and $\text{ADDIS-Spending}_{\text{local}}$ \citep{TR} by averaging over $1000$ independent trials. Thereby, the proportion of rejected hypotheses among the false hypotheses is used as empirical power. We set $\mu_N=-0.5$ and $\rho=0.5$ in all simulations within this section, thus obtaining slightly conservative null $p$-values with positive correlation within each batch. Since both procedures are based on the same ADDIS principle and therefore exploit the conservativeness of null $p$-values in the same manner, no more parameter configurations are necessary. As recommended \cite{TR}, we choose $\tau_i=0.8$ and $\lambda_i=\alpha \tau_i=0.16$ for all $i\in \mathbb{N}$. The rows in Figure \ref{fig:plot_fwer_logq} vary with respect to the chosen $(\gamma_i)_{i\in \mathbb{N}}$, as the procedures are sensitive to it. In the top row, we use $\gamma_i\propto 1/\left((i+1)\log(i+1)^2\right)$, in the middle row $\gamma_i\propto 1/i^{1.6}$ and in the bottom row $\gamma_i=6/(\pi^2 i^2)$.

\begin{figure}[htbp]
	\begin{center}
			\centering
		\includegraphics[width=19.5cm,height=6.5cm,keepaspectratio]{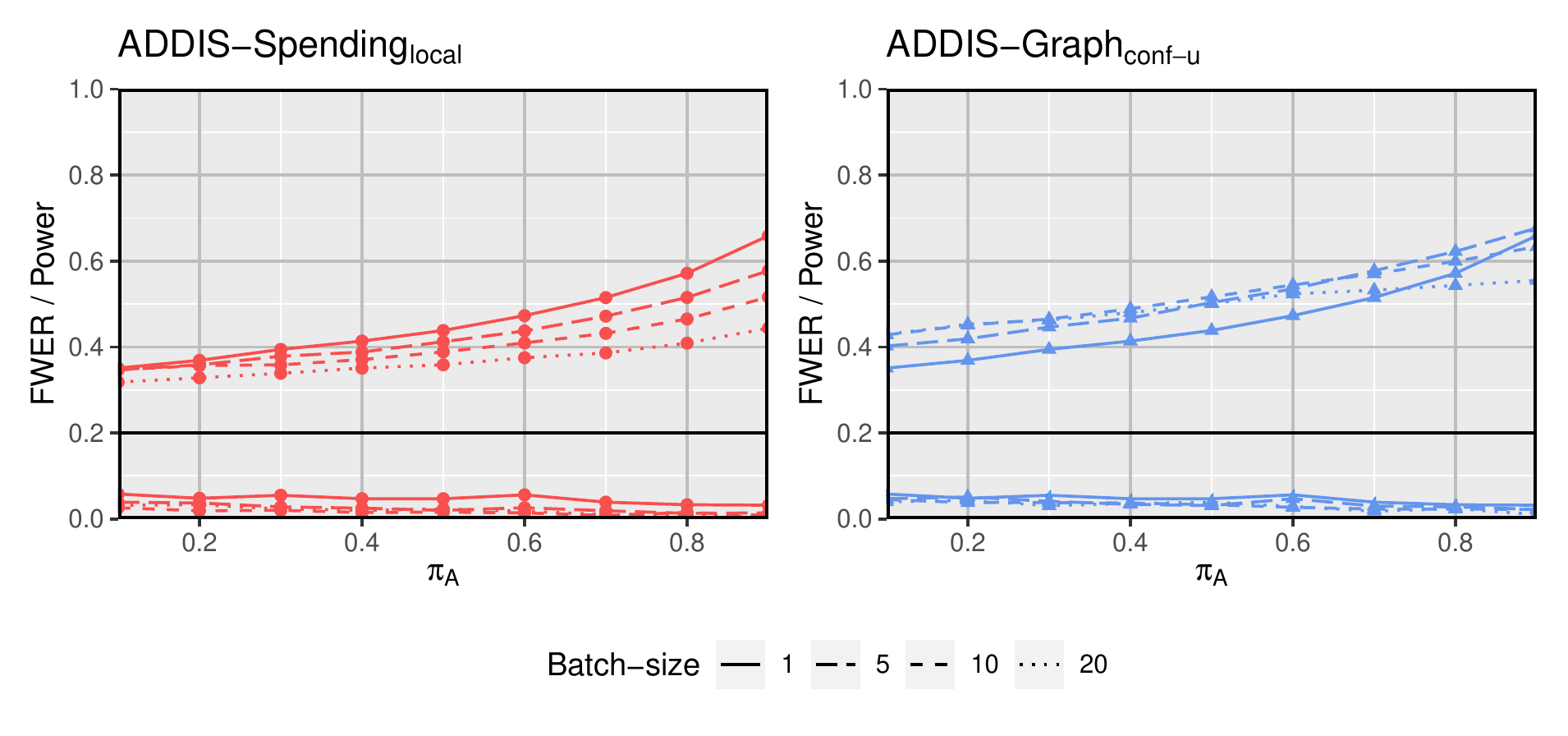}
  \includegraphics[width=19.5cm,height=6.5cm,keepaspectratio]{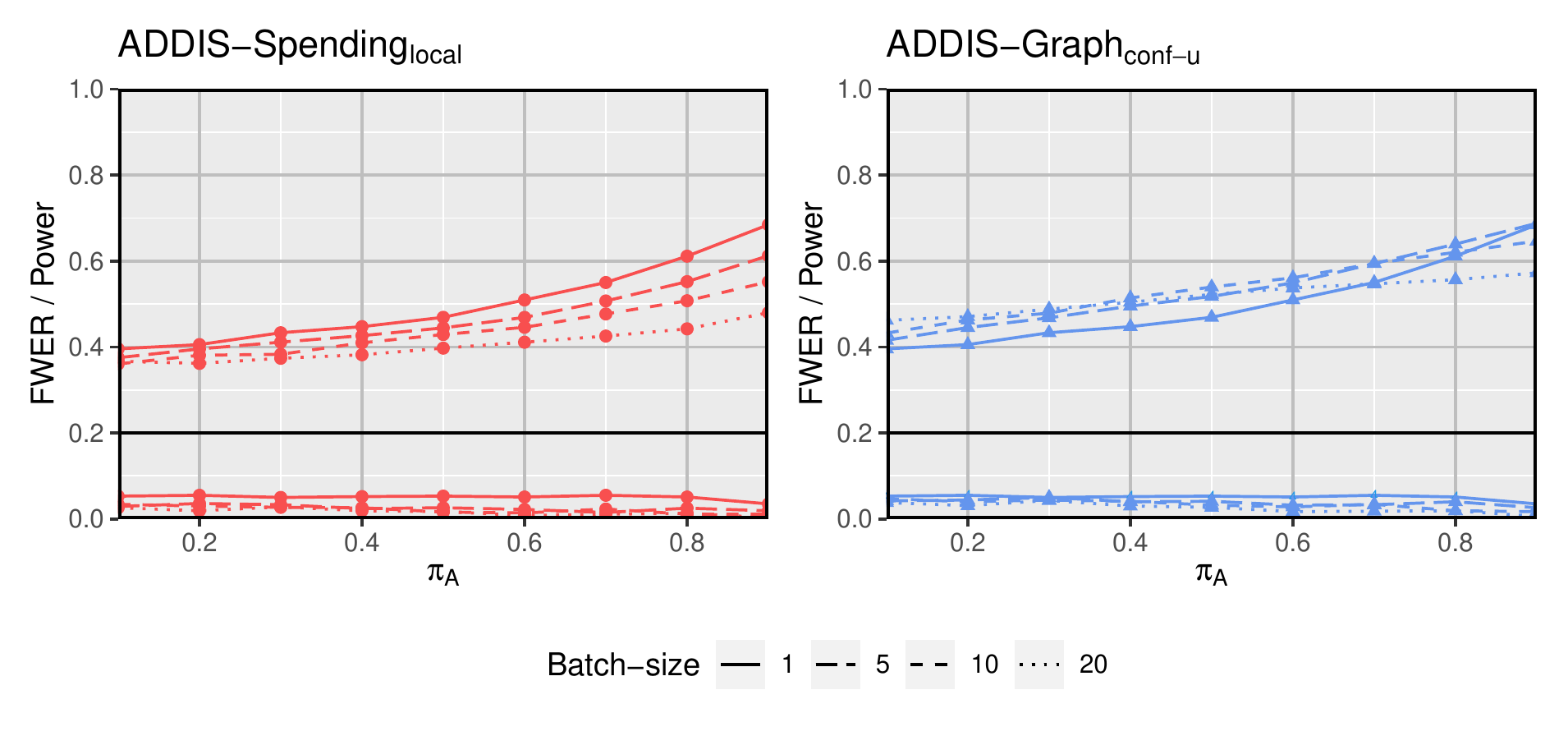}
  \includegraphics[width=19.5cm,height=6.5cm,keepaspectratio]{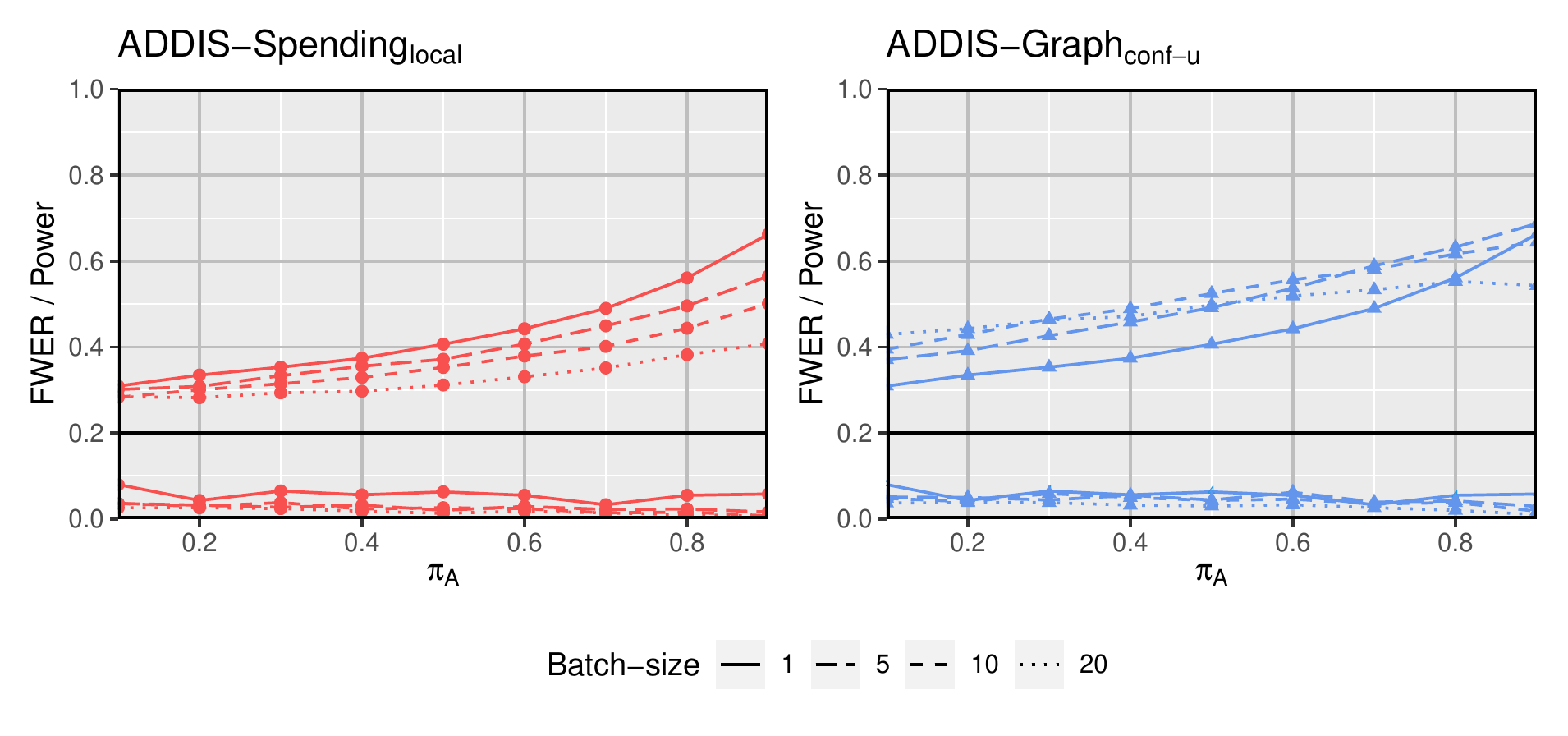}
	\end{center}
	\caption{Comparison of $\text{ADDIS-Spending}_{\text{local}}$ and $\text{ADDIS-Graph}_{\text{conf-u}}$ in terms of power and FWER for different batch-sizes and proportions of false null hypotheses ($\pi_A$). Lines above the overall level $\alpha=0.2$ correspond to power and lines below to FWER. The $p$-values were generated as described in the text with parameters $\mu_N=-0.5$ and $\rho=0.5$. Both procedures were applied with parameters $\tau_i=0.8$ and $\lambda_i=0.16$. In the top row $\gamma_i\propto 1/\left((i+1)\log(i+1)^2\right)$, in the middle row $\gamma_i\propto 1/i^{1.6}$ and in the bottom row $\gamma_i=6/(\pi^2 i^2)$. Under independence of the $p$-values both procedures coincide. However, the $\text{ADDIS-Spending}_{\text{local}}$ loses power when the $p$-values become locally dependent, while the $\text{ADDIS-Graph}_{\text{conf-u}}$ offers a similar or even larger power as under independence. \label{fig:plot_fwer_logq}}
\end{figure}



As shown in Lemma \ref{lemma:uniform_improvement}, the procedures are equivalent under independence of the $p$-values. However, when the $p$-values become locally dependent, the power of the $\text{ADDIS-Spending}_{\text{local}}$ decreases systematically in all cases, while the power of the $\text{ADDIS-Graph}_{\text{conf-u}}$ even increases in most cases. To understand why the power might increase under local dependence, note that 
the larger the batch-size, the further into the future the significance level is distributed by the weights $(g^{*}_{j,i})_{i=j+1}^{\infty}$ (see Algorithm \ref{alg:1} in the Appendix). 
This can lead to a more evenly distribution of the significance level under a larger batch-size, which results in a higher power. 
However, if $(\gamma_i)_{i\in \mathbb{N}}$ decreases slowly and the batch-size is large, a lot of the significance level is distributed to hypotheses in the far future. Since the testing process is finite in this case, these hypotheses may never be tested, which is why power is lost when $\pi_A$ is large.


\section{Application to RECOVERY trial}\label{sec:recovery_trial}
In this section, we illustrate the usage of the ADDIS-Graph by applying it on a real ongoing platform trial. The Randomised Evaluation of COVID-19 Therapy (RECOVERY) trial was launched in 2020 and evaluates treatments for severe COVID-19 diseases against a standard of care. Up to this date, twelve treatments have already been tested, while a thirteenth treatment is currently recruiting \cite{sandercock2022experiences}. 
The $p$-values are reported at the website \url{https://www.recoverytrial.net/}. The platform trial structure is illustrated in Figure \ref{fig:recovery} and was copied from a publication by the data monitoring committee \cite{sandercock2022experiences}. As exemplified in Figure \ref{fig:platform_trial}, overlapping hypotheses share some control data and are therefore conflicting. The ADDIS-Graph can be used to adapt to these conflict sets. For example, the treatment arms $T_1$, $T_2$ and $T_3$ only distribute significance level to the treatment $T_7$ and onwards, $T_4$ and $T_6$ distribute level to the treatment $T_{10}$ and onwards, and $T_5$ to treatment $T_{11}$ and onwards.  

\begin{figure}[htbp]
	\begin{center}
			\centering
		\includegraphics[width=21cm,height=7cm,keepaspectratio]{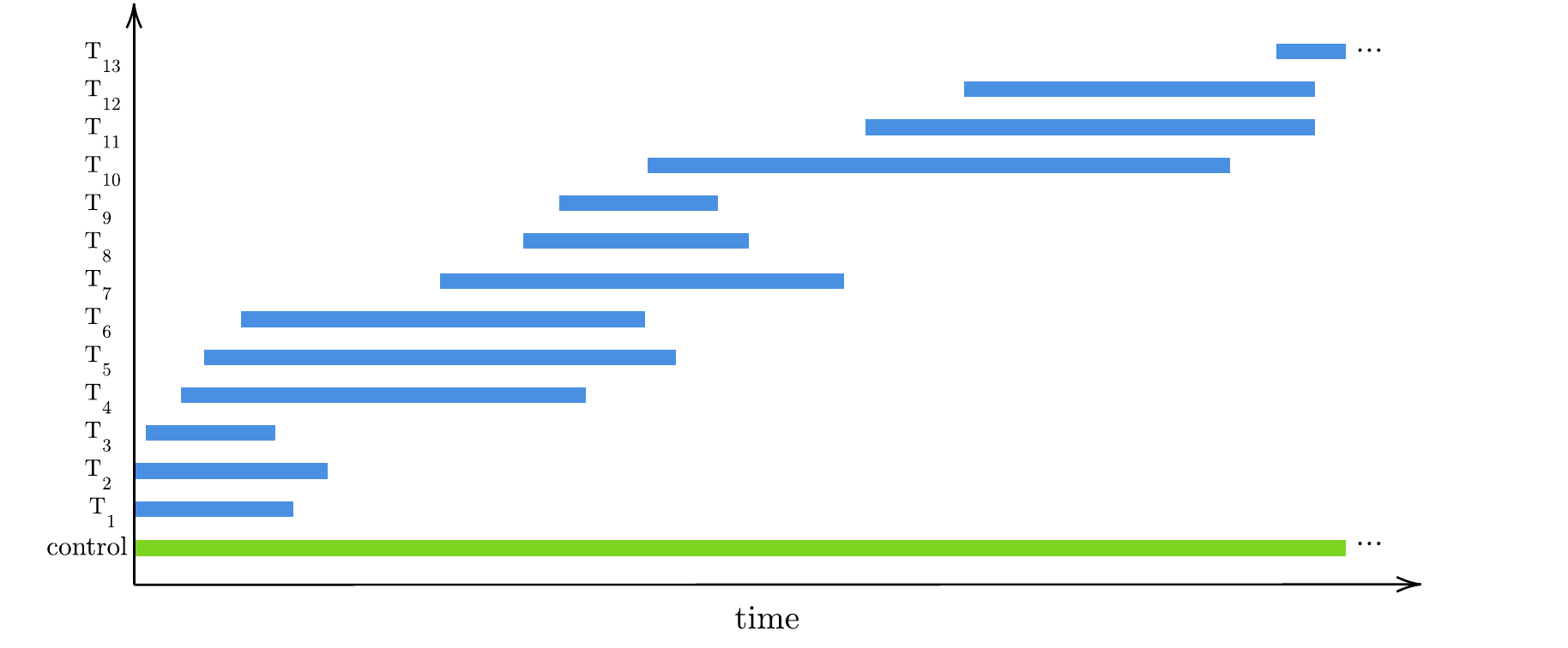}
	\end{center}
	\caption{Overlapping structure of the RECOVERY trial \citep{sandercock2022experiences}. \label{fig:recovery}}
\end{figure} 

We compare the obtained rejections and the remaining significance level for future testing when applying the $\text{ADDIS-Spending}_{\text{local}}$ and the $\text{ADDIS-Graph}_{\text{conf-u}}$. In addition, we provide the results for an uncorrected procedure which tests each hypothesis at full level $\alpha=0.05$ as a reference. Similarly as done by Fischer et al. (2024) \citep{fischer2022online}, we set $\gamma_i=q^i \frac{1-q}{q}$ for $q\in \{0.6,0.7,0.8\}$. Note that the larger the $q$, the slower $(\gamma_i)_{i\in \mathbb{N}}$ converges to $0$. We set $\tau_i=0.8$ and $\lambda_i=0.3$ for the ADDIS procedures.

The results are summarised in Table \ref{table:results_recovery}. The level for future hypotheses was calculated as the sum of future significance levels if a Bonferroni adjustment would be applied, meaning if one sets $\tau_i=1$ and $\lambda_i=0$, $i>12$. The results show that the $\text{ADDIS-Graph}_{\text{conf-u}}$ was able to reject one more hypothesis compared to $\text{ADDIS-Spending}_{\text{local}}$ in case of $q=0.6$. In addition, $\text{ADDIS-Graph}_{\text{conf-u}}$ leaves considerably more level for future hypotheses such that it is much more likely to obtain additional rejections in the future than with the $\text{ADDIS-Spending}_{\text{local}}$. Furthermore, the $\text{ADDIS-Graph}_{\text{conf-u}}$ appears to be more robust against the choice of $q$. 


\begin{table}[h!]
\centering
\caption{Number of rejections and level for future hypotheses obtained by different procedures applied on the RECOVERY trial.}
\begin{tabular}[h]{|c|c|c|c|c|c|c|}
\hline
\multirow{2}{*}{\textbf{Procedure}}  & \multicolumn{3}{c|}{\textbf{Number of rejections}} &  \multicolumn{3}{c|}{\textbf{Level for future hypotheses}} \\
  & $q=0.6$ & $q=0.7$ & $q=0.8$ & $q=0.6$ & $q=0.7$ & $q=0.8$  \\
  \hline
  $\text{ADDIS-Spending}_{\text{local}}$ & $2$ & $3$ & $3$ & $0.0039$ & $0.0084$ & $0.0164$ \\ 
  $\text{ADDIS-Graph}_{\text{conf-u}}$ & $3$ & $3$ & $3$ & $0.0256$ & $0.0246$ & $0.0263$ \\
  Uncorrected & $5$ & $5$ & $5$ & $\infty$ & $\infty$ & $\infty$ \\
  \hline
\end{tabular}
\label{table:results_recovery}
\end{table}

\section{Discussion}


In their review paper, Robertson et al. \citep{robertson2022online} named the construction of online procedures for a small number of hypotheses 
and with known correlation structure, especially with respect to platform trials, as one of the future directions in online multiple testing. In addition, they claimed that the individual significance levels assigned by asynchronous online procedures are more conservative. In this paper, we constructed ADDIS-Graphs that, due to their graphical structure, perfectly adapt to such complex trial designs (see e.g. Figure \ref{fig:platform_trial}). We demonstrated that the ADDIS-Graphs lead to power improvements over the current state-of-art methods, as the level that is lost due to pessimistic assumptions because of local dependence or asynchronous testing, is reused at later steps, such that no significance level is lost overall. In particular, we showed that the ADDIS-Graph for FWER control uniformly improves the ADDIS-Spending under local dependence \citep{TR}. Due to their graphical structure \citep{BWBP}, ADDIS-Graphs are flexible and easily comprehensible --- and therefore facilitate the planning and conduction of a trial.  For example, when the same sponsors run several treatment arms in a platform trial, they may want that significance level is only distributed between their hypotheses, which could easily be incorporated by an ADDIS-Graph, but not by the ADDIS-Spending.



We introduced several extensions of the ADDIS-Graph for FWER control. First, we showed how that the online closure principle \citep{fischer2022online} can be used to improve the ADDIS-Graph in situations where all future p-values depend on the current one. Moreover, we demonstrated how information about the joint distribution of the p-values can be incorporated to improve the procedure, which is particularly relevant for platform trials. 
Furthermore, we presented an ADDIS-Graph for FDR control.

Our proposed method for incorporating the joint distribution while adapting to the number of false hypotheses only allows to exploit  the correlation structure within batches and therefore does not unlock the full potential of the approach. We wonder whether it is possible to exhaust the entire information about the joint distribution while still adapting to the number of false hypotheses. Also, it would be interesting to additionally discard the conservative null p-values. Moreover, we argued that the FDR-ADDIS-Graph is superior to the $\text{ADDIS}^*$ algorithm for similar reasons as in the FWER case, which was also verified by simulations. However, we did not prove a uniform improvement for a specific choice of weights, which would be an interesting question for future work. Similarly, it would be interesting whether the FDR-ADDIS-Graph contains all procedures satisfying ADDIS condition for FDR control \citep{TR2}, as we only proved this for the FWER case.

\subsection*{Funding}
L. Fischer acknowledges funding by the Deutsche Forschungsgemeinschaft (DFG, German Research Foundation) – Project number 281474342/GRK2224/2.

This research was funded in whole, or in part, by the Austrian Science Fund (FWF) [ESP 442 ESPRIT-Programm]. For the purpose of open access, the author has applied a CC BY public copyright licence to any Author Accepted Manuscript version arising from this submission.

\subsection*{Conflict of interest}

The authors declare no potential conflict of interests.

\subsection*{Acknowledgments}
 LF acknowledges funding by the Deutsche Forschungsgemeinschaft (DFG, German Research Foundation) – Project number 281474342/GRK2224/2. AR was funded by NSF grant DMS-2310718.

\bibliography{main}
\bibliographystyle{plainnat}

\begin{appendix}
    \section{Derivation of the closed ADDIS-Graph}\label{sec:closed_ADDIS_graph}
\vspace{0.2cm}

In this section we derive the closed $\text{ADDIS-Graph}_{\text{conf}}$ (7) as an online closed procedure \citep{fischer2022online}. We use a similar construction of the closed procedure as Fischer et al. (2024) \citep{fischer2022online} did for the Online-Graph. For this, let 
a local dependence structure $\mathcal{X}_i=\{i-1,\ldots, i-L_i\}$ be given, where $(L_i)_{i\in \mathbb{N}}$ are lags with $L_{i+1}\leq L_i+1$. Furthermore, let $(\tau_i)_{i\in \mathbb{N}}$, $(\lambda_i)_{i\in \mathbb{N}}$,
$(\gamma_i)_{i\in \mathbb{N}}$ and $(g_{j,i})_{j\in \mathbb{N},i>j}$ be sequences as in Definition 1 such that $\tau_i$, $\lambda_i$, $\gamma_i$ and $g_{j,i}$ are measurable with respect to $\mathcal{G}_{-\mathcal{X}_i}$. For each $I\subseteq \mathbb{N}$, we define an intersection test $\phi_I$ as
\begin{align*}
\phi_I&=\mathbbm{1}\left\{\exists i\in I: P_i\leq \alpha_i^I  \right\}, \\ \text{ where } \alpha_i^I&=(\tau_i-\lambda_i)\left(\alpha \gamma_i + \sum_{j\in I, j<i-L_i} g_{j,i} (C_j-S_j+1) \frac{\alpha_j^I}{\tau_j-\lambda_j} +  \sum_{j\notin I, j<i} g_{j,i} \frac{\alpha_j^{I\cup \{j\}}}{\tau_j-\lambda_j} \right).
\end{align*}
It holds that $\sum_{j\leq i,j\in I} \frac{\alpha_j^I}{\tau_j-\lambda_j}(S_j-C_j)\leq \alpha$ for all $i\in \mathbb{N}$, since for all indices that are not contained in $I$, we just shift the significance level to the future hypotheses according to the weights $(g_{j,i})_{j\in \mathbb{N}, i>j}$. With this, it follows that $\phi_I$, $I\subseteq \mathbb{N}$, is an $\alpha$-level intersection test, meaning $\mathbb{P}_{H_I}(\phi_I=1)\leq \alpha$, where $H_I=\bigcap_{i\in I} H_i$. Furthermore, the family of intersection tests $(\phi_I)_{I \subseteq \mathbb{N}}$ is consonant and predictable \citep{fischer2022online}. With this, Theorem 4.2 by Fischer et al. (2024) \citep{fischer2022online} implies that the corresponding closed procedure is defined by the individual significance levels $(\alpha_i^{I_i})_{i\in \mathbb{N}}$, where $I_1=\{1\}$ and $I_i=\{j\in \mathbb{N}: j<i, P_j>\alpha_j^{I_j}\} \cup \{i\}$ for all $i\geq 2$. This can also be written as
\begin{align*}
\alpha_i^{\text{c-graph}}=\alpha_i^{I_i}&=(\tau_i-\lambda_i)\left(\alpha \gamma_i + \sum_{j\in I_i, j<i-L_i} g_{j,i} (C_j-S_j+1) \frac{\alpha_j^{I_i}}{\tau_j-\lambda_j} +  \sum_{j\notin I_i, j<i} g_{j,i} \frac{\alpha_j^{I_i\cup \{j\}}}{\tau_j-\lambda_j}\right) \\
&= (\tau_i-\lambda_i)\left(\alpha \gamma_i + \sum_{j<i-L_i} g_{j,i} (1-R_j)(C_j-S_j+1) \frac{\alpha_j^{\text{c-graph}}}{\tau_j-\lambda_j} +  \sum_{ j<i} g_{j,i} R_j\frac{\alpha_j^{\text{c-graph}}}{\tau_j-\lambda_j}\right)\\
&=  (\tau_i-\lambda_i)\left(\alpha \gamma_i + \sum_{j=1}^{i-L_i-1} g_{j,i} (\max\{C_j,R_j\}-S_j+1) \frac{\alpha_j^{\text{c-graph}}}{\tau_j-\lambda_j} +  \sum_{j=i-L_i}^{i-1} g_{j,i} R_j\frac{\alpha_j^{\text{c-graph}}}{\tau_j-\lambda_j}\right),
\end{align*}
where $R_j=\mathbbm{1}\{P_j\leq \alpha_j^{\text{c-graph}}\}$.

\section{Exploiting the correlation structure when considering FWER control}\label{sec:correlation_structure}
\vspace{0.2cm}
When information about the joint distribution of $p$-values is available, ignoring this information might result in a conservative procedure \cite{westfall1993resampling}. Therefore, in this section we aim to incorporate such information into ADDIS procedures under local dependence. However, in order to do so we have to make some assumptions. First, we restrict to $\tau_i=1$ for all $i\in \mathbb{N}$. Therefore, we only adapt to the proportion of false null hypotheses and skip the discarding of conservative null $p$-values. 
Furthermore, we assume that the hypotheses follow a batch dependence structure (see Section 5 for further explanation). We denote $b_i$, $i\in \mathbb{N}$, as the index of the batch that contains the $p$-value $P_i$. At last, we assume that the subset pivolatity condition holds \cite{westfall1993resampling}, which states that the distribution of $\mathbf{P_I}|H_I$ is the same as $\mathbf{P_I}|H_{\mathbb{N}}$ for every $I\subseteq \mathbb{N}$, where $\mathbf{P_I}=(P_i)_{i\in I}$ is a random vector of $p$-values. This is a very common assumption made when incorporating information about the joint distribution of $p$-values into multiple testing procedures and shown to hold in a wide range of settings, e.g. in a multivariate Gaussian \cite{westfall1993resampling}.

\begin{theorem}\label{theo:corr_princ}
    Assume that the subset pivotality condition is satisfied. Let the local dependence structure be given by batches $(B_i)_{i\in \mathbb{N}}$. Furthermore, let $\lambda_{b_i}\in [0,1)$ and $\alpha_i$ be measurable regarding $\mathcal{F}_{b_i-1}=\sigma(\{P_j\}_{j:b_j<b_i})$ for all $i\in \mathbb{N}$, where $R_j=\mathbbm{1}\{P_j\leq \alpha_j\}$ and $C_j=\mathbbm{1}\{P_j\leq \lambda_{b_j}\}$. Every multiple testing procedure controls the FWER in the strong sense when the individual significance levels $(\alpha_i)_{i\in \mathbb{N}}$ satisfy 
    \begin{align}
        \sum_{j=1}^{i} \frac{\alpha_j^{c}}{1-\lambda_{b_j}} (1-C_j) \leq \alpha \quad \text{ for all } i\in \mathbb{N},
    \end{align}
    where $\alpha_j^{c}\coloneqq\mathbb{P}_{H_\mathbb{N}}\left( \bigcap\limits_{k\in B_{b_j}, k<j, C_k=0} \{P_k>\alpha_k\} \cap \{P_j\leq \alpha_j\}  \Bigg \vert \mathcal{F}_{b_j-1}  \right)$ and $\mathbb{P}_{H_\mathbb{N}}$ indicates that we calculate the probability under the global null hypothesis.
\end{theorem}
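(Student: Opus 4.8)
The plan is to reduce the global FWER statement to a per-batch analysis, exploit subset pivotality to move the relevant probabilities to the global null $H_{\mathbb{N}}$, and then recover the spending quantities $\alpha_j^c$ from a first-rejection decomposition. First I would use that every $\alpha_i$ and $\lambda_{b_i}$ is $\mathcal{F}_{b_i-1}$-measurable together with batch independence: within a batch the thresholds are frozen (functions of strictly earlier batches), and distinct batches are independent. Writing $\Phi_b$ for the event that some true null in batch $B_b$ is rejected and conditioning successively on $\mathcal{F}_{b-1}$, the overall FWER is bounded by $\sum_b \mathbb{E}[q_b]$ with $q_b:=\mathbb{P}(\Phi_b\mid\mathcal{F}_{b-1})$ (a union bound across independent batches, which cannot be improved since the cross-batch $p$-values carry no dependence). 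It therefore suffices to control each conditional batch probability $q_b$.

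The central step is the within-batch bound, and this is where subset pivotality enters. Since $\Phi_b$ involves only the null $p$-values of $B_b$, whose joint law under the true configuration equals their law under $H_{\mathbb{N}}$ while the thresholds are fixed and independent of the batch, I would replace $q_b$ by $\mathbb{P}_{H_{\mathbb{N}}}(\Phi_b\mid\mathcal{F}_{b-1})$. Under $H_{\mathbb{N}}$ every index of the batch behaves as a valid null, which legitimately lets me enlarge ``some true null is rejected'' to ``some index of $B_b$ is rejected''. Decomposing this enlarged event by the first rejection inside the batch, and relaxing the ``no earlier rejection'' constraint to the weaker ``no earlier non-candidate rejection'' constraint, produces exactly $\alpha_j^c=\mathbb{P}_{H_{\mathbb{N}}}\big(\bigcap_{k\in B_{b_j},k<j,C_k=0}\{P_k>\alpha_k\}\cap\{P_j\le\alpha_j\}\mid\mathcal{F}_{b-1}\big)$, giving $q_b\le\sum_{j\in B_b}\alpha_j^c$. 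The discarding threshold would then be absorbed through the usual ADDIS validity step: for a valid null $\mathbb{P}(P_j>\lambda_b\mid\mathcal{F}_{b-1})\ge 1-\lambda_b$, so $\alpha_j^c\le\mathbb{E}\big[\tfrac{\alpha_j^c}{1-\lambda_b}(1-C_j)\mid\mathcal{F}_{b-1}\big]$, converting the sum of $\alpha_j^c$ into the expected realized spending. Summing over batches yields $\mathrm{FWER}\le\mathbb{E}\big[\sum_j \tfrac{\alpha_j^c}{1-\lambda_{b_j}}(1-C_j)\big]$, and the assumed cumulative condition bounds the bracket by $\alpha$ almost surely, closing the argument.

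The step I expect to be the main obstacle is precisely the within-batch reconciliation of the candidacy mechanism with the correlation-aware decomposition. One must verify that the passage to $H_{\mathbb{N}}$ is valid for the entire event $\Phi_b$ rather than for a single null, that enlarging to all indices of $B_b$ and relaxing to the $C_k=0$ conditioning both move the inequality in the admissible direction, and — most delicately — that inserting the factor $(1-C_j)/(1-\lambda_b)$ is justified uniformly over the whole batch. The difficulty is that the validity inequality $\alpha_j^c\le\mathbb{E}[\tfrac{\alpha_j^c}{1-\lambda_b}(1-C_j)\mid\mathcal{F}_{b-1}]$ holds termwise only for null indices, whereas the enlargement has introduced alternative indices whose realized discarding factor may be small under the true law; these must be shown to contribute only nonnegative slack (or be handled by carrying the candidacy accounting through $H_{\mathbb{N}}$ as well), so that the clean bound $q_b\le\mathbb{E}\big[\sum_{j\in B_b}\tfrac{\alpha_j^c}{1-\lambda_b}(1-C_j)\mid\mathcal{F}_{b-1}\big]$ survives. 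Getting this combination exactly right, together with a careful statement of what subset pivotality does and does not give about alternative-index marginals, is the technical heart of the proof.
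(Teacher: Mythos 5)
There is a genuine gap, and it is precisely the one you flag in your final paragraph: with your ordering of the steps the argument does not close. You apply subset pivotality and enlarge ``some true null in $B_b$ is rejected'' to ``some index of $B_b$ is rejected'' \emph{before} inserting the discarding factor, so you arrive at $q_b\le\sum_{j\in B_b}\alpha_j^c$ with the sum running over \emph{all} indices of the batch. To connect this to the hypothesis of the theorem you then need $\alpha_j^c\le\mathbb{E}\bigl[\tfrac{\alpha_j^c}{1-\lambda_{b_j}}(1-C_j)\,\big|\,\mathcal{F}_{b_j-1}\bigr]$, i.e.\ $\mathbb{P}(P_j>\lambda_{b_j}\mid\mathcal{F}_{b_j-1})\ge 1-\lambda_{b_j}$, which is a property of valid nulls under the true law and fails for alternatives: an alternative with $P_j$ essentially $0$ has $C_j\equiv 1$, contributes $\alpha_j^c>0$ to your bound on $q_b$ but $0$ to the realized spending $\sum_j\tfrac{\alpha_j^c}{1-\lambda_{b_j}}(1-C_j)$, so neither the termwise nor the aggregated version of the inequality holds. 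Declaring this ``the technical heart'' and leaving it open means the proof is incomplete, and the parenthetical fix (``carry the candidacy accounting through $H_{\mathbb{N}}$'') does not rescue it, because the theorem's condition is stated in terms of the $C_j$ realized under the true law, not under $H_{\mathbb{N}}$.

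The paper resolves this by reversing the order of your two key steps. It first decomposes the FWER by the first rejected \emph{true null} within each batch under the true law, relaxes the ``no earlier rejection'' constraint to ``no earlier true-null index with $C_k=0$ rejected'' (yielding a quantity $\alpha_j^{c,I_0}$ whose defining intersection is restricted to $I_0$), and inserts the factor $(1-C_j)/(1-\lambda_{b_j})$ \emph{only for true nulls}, where validity applies. Only then does it observe that $\sum_{j\in I_0\cap B_b}\alpha_j^{c,I_0}(1-C_j)$ re-assembles exactly into the conditional probability of the union of rejections over true-null candidates in $B_b$; at that point subset pivotality switches the measure to $\mathbb{P}_{H_\mathbb{N}}$ and the union is enlarged to all candidates of the batch by plain monotonicity of measure, so no validity statement about alternatives is ever invoked. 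The enlarged union then decomposes back, exactly, by first candidate rejection under $H_\mathbb{N}$ into $\sum_{j\in B_b}\alpha_j^c(1-C_j)$, and the assumed condition finishes the proof. Your ingredients are all correct, but the alternative indices must enter through a union-monotonicity step \emph{after} the $(1-C_j)$ factors are already in place, not through a per-index validity inequality applied after enlargement.
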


\begin{remark}
  Note that this is a uniform improvement of the classical ADDIS principle under local dependence \cite{TR} for $\tau_i=1$ and batch-wise fixed $\lambda_i=\lambda_{b_i}$, since $\alpha_i^c\leq \alpha_i$ for all $i\in \mathbb{N}$.
\end{remark}

We propose the following Adaptive-Graph as example procedure that satisfies Theorem \ref{theo:corr_princ}.

\begin{definition}[$\text{Adaptive-Graph}_{\text{corr}}$]
Assume the local dependence structure is given by the batches $(B_i)_{i\in \mathbb{N}}$. Let $\lambda_{b_i} \in [0,1)$, $(\gamma_i)_{i\in \mathbb{N}}$ be a non-negative sequence that sums up to $1$ and $(g_{j,i}^*)_{i =j+1}^{\infty}$ be a non-negative sequence for all $j\in \mathbb{N}$ such that $g_{j,i}^*=0$ if $b_j=b_i$ and $\sum_{i:b_i>b_j}^{\infty} g_{j,i}^* \leq 1$. In addition, let $\lambda_{b_i}$, $\gamma_i$ and $g_{j,i}^*$, be measurable regarding $\mathcal{F}_{b_i-1}$. The  $\text{Adaptive-Graph}_{\text{corr}}$ tests each hypothesis $H_i$ at significance level 
\begin{align}\alpha_i=(1-\lambda_{b_i})\left(\alpha \gamma_i +  \sum_{j: b_j<b_i} g_{j,i}^* C_j \frac{\alpha_j}{1-\lambda_{b_j}} + \sum_{j: b_j<b_i} g_{j,i}^* (1-C_j) \frac{\alpha_j-\alpha_j^{c}}{1-\lambda_{b_j}} \right). \label{eq:graph_corr} \end{align}
\end{definition}

\begin{theorem}\label{theo:graph_corr}
    The $\text{Adaptive-Graph}_{\text{corr}}$ satisfies Theorem \ref{theo:corr_princ} and thus controls the FWER strongly when the subset pivotality condition is satisfied.
\end{theorem}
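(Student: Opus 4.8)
The plan is to verify that the $\text{Adaptive-Graph}_{\text{corr}}$ satisfies the spending condition of Theorem~\ref{theo:corr_princ}; strong FWER control then follows immediately from that theorem. To this end I would introduce the normalised quantities $\tilde\alpha_i \coloneqq \alpha_i/(1-\lambda_{b_i})$ and $\tilde\alpha_j^c \coloneqq \alpha_j^c/(1-\lambda_{b_j})$, and write the per-step spending as $w_j \coloneqq \tilde\alpha_j^c(1-C_j)$, so that the target inequality becomes simply $\sum_{j=1}^i w_j \le \alpha$ for every $i\in\mathbb{N}$. A preliminary observation is that the $\alpha_i$, and hence the conditional probabilities $\alpha_j^c$, are well defined by induction on batches, since \eqref{eq:graph_corr} only references indices $j$ with $b_j<b_i$.

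The crucial first step is an algebraic simplification of the recursion \eqref{eq:graph_corr}. Dividing by $(1-\lambda_{b_i})$ and merging the two inner sums via the identity $C_j\tilde\alpha_j + (1-C_j)(\tilde\alpha_j-\tilde\alpha_j^c) = \tilde\alpha_j - (1-C_j)\tilde\alpha_j^c = \tilde\alpha_j - w_j$, I obtain the clean flow form
\begin{equation*}
\tilde\alpha_i = \alpha\gamma_i + \sum_{j:\, b_j<b_i} g_{j,i}^*\,(\tilde\alpha_j - w_j).
\end{equation*}
This exhibits the procedure as a mass-conservation scheme: each node $i$ receives the initial injection $\alpha\gamma_i$ plus, from every strictly earlier batch, a fraction $g_{j,i}^*$ of the residual mass $\tilde\alpha_j-w_j$. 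By the remark following Theorem~\ref{theo:corr_princ} we have $\alpha_j^c\le\alpha_j$, hence $\tilde\alpha_j^c\le\tilde\alpha_j$ and $w_j\le\tilde\alpha_j$, so each residual $\tilde\alpha_j-w_j$ is nonnegative.

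It then remains a finite book-keeping argument. Fixing $i$, I would sum the flow identity over $j=1,\dots,i$, interchange the order of the resulting double sum (legitimate because $g_{j,i}^*\ne0$ forces $b_j<b_i$ and hence $j<i$, keeping all indices within $\{1,\dots,i\}$), and abbreviate $G_k \coloneqq \sum_{j\le i:\, b_j>b_k} g_{k,j}^* \le 1$. This yields $\sum_{j=1}^i \tilde\alpha_j = \alpha\sum_{j=1}^i\gamma_j + \sum_{k=1}^i (\tilde\alpha_k-w_k)G_k$, which rearranges to $\sum_{j=1}^i\bigl[\tilde\alpha_j(1-G_j) + w_j G_j\bigr] = \alpha\sum_{j=1}^i\gamma_j \le \alpha$. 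Using $\tilde\alpha_j\ge w_j$ together with $1-G_j\ge 0$ to bound $\tilde\alpha_j(1-G_j)\ge w_j(1-G_j)$, the left-hand side is at least $\sum_{j=1}^i w_j$, which gives the desired $\sum_{j=1}^i w_j\le\alpha$.

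I expect the main obstacle to be less the conservation bookkeeping than pinning down the residual-mass identity and the direction of the final inequality: one must recognise that the two separately written sums in \eqref{eq:graph_corr} (full reuse of $\tilde\alpha_j$ when $C_j=1$ versus partial reuse of $\tilde\alpha_j-\tilde\alpha_j^c$ when $C_j=0$) collapse into a single residual term $\tilde\alpha_j-w_j$, and that a \emph{truncated} rather than infinite conservation sum is needed, since mass may flow arbitrarily far into the future and $\sum_j\tilde\alpha_j$ can diverge. Once the truncation at step $i$ is in place, the combination $\tilde\alpha_j\ge w_j$ and $G_j\le 1$ is exactly what turns the conserved-mass equation into the one-sided bound.
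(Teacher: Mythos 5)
Your proof is correct, and it reaches the conclusion by a genuinely different (and arguably cleaner) route than the paper. The paper works backwards: it defines $A(k)$ as the level already spent by the first $i-k$ hypotheses plus the mass they have earmarked for steps $i-k+1,\dots,i$, bounds the $j=i-k$ term by $\alpha_{i-k}/(1-\lambda_{b_{i-k}})$ using $\alpha_{i-k}^c\le\alpha_{i-k}$ and $\sum_l g^*_{i-k,l}\le 1$, re-expands that term via the recursion, and telescopes $A(k)\le\alpha\gamma_{i-k}+A(k+1)$ down from $A(i)=0$ to $A(0)=\sum_{j\le i}\frac{\alpha_j^c}{1-\lambda_{b_j}}(1-C_j)\le\alpha\sum_{j\le i}\gamma_j$. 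You instead collapse the two inner sums of \eqref{eq:graph_corr} into the single residual $\tilde\alpha_j-w_j$, sum the resulting flow identity over all $j\le i$, swap the order of summation, and obtain the \emph{exact} conservation identity $\sum_{j\le i}\bigl[\tilde\alpha_j(1-G_j)+w_jG_j\bigr]=\alpha\sum_{j\le i}\gamma_j$ before applying the two inequalities $\tilde\alpha_j\ge w_j$ and $G_j\le 1$ once, globally. Both arguments rest on exactly the same two ingredients ($\alpha_j^c\le\alpha_j$ and the row sums of $g^*$ being at most one), but yours isolates them in a single final step rather than threading them through an induction, which makes the mass-conservation structure more transparent. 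Two small points worth adding for completeness: Theorem \ref{theo:corr_princ} also requires $\alpha_i$ to be $\mathcal{F}_{b_i-1}$-measurable, which the paper notes explicitly and which follows from your batch-induction remark since $g^*_{j,i}\ne 0$ forces $b_j<b_i$; and your correct observation that the conservation sum must be truncated at $i$ (since $\sum_j\tilde\alpha_j$ need not converge) is precisely the role the paper's index $k$ plays.
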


The $\text{Adaptive-Graph}_{\text{corr}}$ can be interpreted just as the $\text{ADDIS-Graph}_{\text{conf}}$ (Figure 4), however, if $P_i>\lambda_{b_i}$, the significance level $\alpha_i-\alpha_i^{c}$ is additionally distributed to the future hypotheses.

\begin{remark}
    In general, exploiting correlation structures in graphical test procedures is not straightforward, as the required consonance can get lost \cite{bretz2011graphical}. However, in the above described batch dependence setting, the here introduced $\text{Adaptive-Graph}_{\text{corr}}$ brings together the graphical approach and the utilization of information about the joint distribution of $p$-values.
\end{remark}

\begin{remark}
  If one chooses $B_1=\{P_1,P_2,\ldots \}$ and $\lambda_{b_1}=0$, the $\text{Adaptive-Graph}_{\text{corr}}$ no longer adapts to the number of false hypotheses, however, it allows to exploit the joint distribution among all hypotheses. This can be useful if there are no or only very few independent p-values.
\end{remark}

The batch setting assumed in Theorem \ref{theo:corr_princ} may seems very restrictive. However, it arises naturally in a lot of settings. For example, if the data for testing the hypotheses is replaced by new, independent data after a period of time. This is e.g. the case when a machine learning algortihm is updated over time and after testing several modifications, a new evaluation data set is used for future modifications \cite{FES, Fetal}.
Sometimes also platform trials are performed in a batch setting when multiple treatment arms enter and leave the trial at the same time \cite{Retal}. But also if this is not the case, platform trials can still be transformed into such a batch setting. 
For this, we specify a local dependence structure for batches and adjust the procedures in the same way as shown before for single hypotheses. For example, one could batch the $p$-values from Figure 2 as $B_1=\{P_1\}$, $B_2=\{P_2,P_3,P_4, P_5\}$ and $B_3=\{P_6,\ldots\}$. With this, the correlation within the batch $B_2$ could be exploited, but due to the dependence of $P_1$ and $P_2$ all the significance levels used for testing hypotheses in $B_2$ would not be allowed to use information about $P_1$. However, the significance levels for $B_3$ could depend on $P_1$. This could save a lot of significance level, particularly, if the testing process continues after $T_6$.

\section{Extension to FDR control}\label{sec:FDR}
\vspace{0.2cm}

Tian \& Ramdas (2019) \citep{TR2} introduced the following ADDIS condition for FDR control 
\begin{align}
\frac{{}\sum_{j=1}^{i} \frac{\alpha_j}{\tau_j-\lambda_j} (S_j-C_j)}{|R(i)|\lor 1}
\leq \alpha
 \quad \text{for all } i\in \mathbb{N}.
\label{eq:FDR_ADDIS_principle}
\end{align}

The only difference to the ADDIS condition for the FWER control  (1) is the denominator $|R(i)|\lor 1$. Bringing it on the other side, it can be interpreted as if  an additional level $\alpha$ is gained after each rejection except for the first one. This can be incorporated into the ADDIS-Graph by distributing an additional $\alpha$ to future hypotheses in case of rejection according to non-negative weights $(h_{j,i})_{i=j+1}^{\infty}$ such that $\sum_{i=j+1}^{\infty} h_{j,i}\leq 1$ for all $j\in \mathbb{N}$. For example, one could just choose $h_{j,i}=g_{j,i}$.



Since no significance level is gained for the first rejection, FDR procedures often assume that a lower overall significance level of $W_0\leq \alpha$ is available at the beginning of the testing process such that $(\alpha-W_0)$ can be gained after the first rejection. To differentiate between the first and other rejections, we additionally define the indicator $K_i$ with $K_i=1$, if the first rejection happened within the first $i-1$ steps and $K_i=0$, otherwise. We also set $K_i^c=1-K_i$. With this, the ADDIS-Graph for FDR control can be defined as follows. 
\begin{definition}[$\text{FDR-ADDIS-Graph}_{\text{conf}}$\label{def:addis_graph_fdr}]
Let the conflict sets be given by $(\mathcal{X}_i)_{i\in \mathbb{N}}$. Furthermore, let $(\gamma_i)_{i\in \mathbb{N}}$, $(g_{j,i}^*)_{j\in \mathbb{N}, i>j}$, $(\tau_i)_{i\in \mathbb{N}}$ and $(\lambda_i)_{i\in \mathbb{N}}$ be as in $\text{ADDIS-Graph}_{\text{conf}}$ (Definition 2). In addition, let $W_0\leq \alpha$ and $(h_{j,i}^*)_{i =j+1}^{\infty}$, $j\in \mathbb{N}$, be a non-negative sequence such that $h_{j,i}^*=0$ if $j\in \mathcal{X}_i$ and $\sum_{i>j,j\notin \mathcal{X}_i}^{\infty} h_{j,i}^*\leq 1 $. The $\text{FDR-ADDIS-Graph}_{\text{conf}}$ tests each hypothesis $H_i$ at significance level $\alpha_i=\min(\hat{\alpha}_i,\lambda_i)$, where
\begin{align}
\hat{\alpha}_i = (\tau_i-\lambda_i)\left(W_0 \gamma_i + \sum_{j=1}^{i-1} g_{j,i}^*(C_j-S_j+1)  \frac{\hat{\alpha_j}}{\tau_i-\lambda_j} + \sum_{j=1}^{i-1} h_{j,i}^* R_j [\alpha K_j + (\alpha-W_0)K_j^c] \right) 
\label{eq:fdr_addis_graph}\end{align}
with $R_j=\mathbbm{1}\{P_j\leq \alpha_j\}$.
\end{definition}


In order to control the FDR using ADDIS procedures, $\alpha_i$, $\lambda_i$ and $1-\tau_i$, $i\in \mathbb{N}$, are required to be monotonic functions of the past \citep{TR2}. This means that they are coordinatewise nondecreasing functions in $R_{1:(i-1)}\coloneqq (R_1,\ldots,R_{i-1})$ and $C_{1:(i-1)}\coloneqq (C_1,\ldots,C_{i-1})$ and nonincreasing in $S_{1:(i-1)}\coloneqq (S_1,\ldots,S_{i-1})$.
An easy way to satisfy this using the $\text{FDR-ADDIS-Graph}_{\text{conf}}$ is to choose the parameters $\lambda_i$, $\tau_i$, $\gamma_i$, $g_{j,i}^*$ and $h_{j,i}^*$ for all $i\in \mathbb{N}, j<i$, independently of the past. Then $\alpha_i$ is a monotonic function of the past by definition.

\begin{theorem}\label{theo:addis_graph_fdr}
The $\text{FDR-ADDIS-Graph}_{\text{conf}}$ satisfies equation \eqref{eq:FDR_ADDIS_principle}. Thus, it controls the mFDR($i$) for all $i\in \mathbb{N}$ \citep{ZRJ}.  Furthermore, it controls the FDR($i$) for all $i\in \mathbb{N}$ when $\alpha_i$, $\lambda_i$ and $1-\tau_i$ are monotonic functions of the past and the null p-values are independent from each other and the non-nulls \citep{TR2}.

\end{theorem}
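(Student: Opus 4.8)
The plan is to verify the defining inequality \eqref{eq:FDR_ADDIS_principle} directly; once that is in hand, the mFDR claim is immediate from \citep{ZRJ, TR2}, and the FDR claim follows from \citep{TR2} because the stated monotonicity of $\alpha_i,\lambda_i,1-\tau_i$ and the independence of the null $p$-values are hypotheses of the theorem. So essentially all the work is in establishing \eqref{eq:FDR_ADDIS_principle}, which I would do by the same telescoping/summation argument that proves the FWER version (Theorem \ref{theo:addis_graph}), augmented to absorb the new reward terms.

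First I would pass to the normalised levels $\tilde{\alpha}_i \coloneqq \hat{\alpha}_i/(\tau_i-\lambda_i)$, so that dividing \eqref{eq:fdr_addis_graph} by $(\tau_i-\lambda_i)$ yields the clean recursion
\[
\tilde{\alpha}_i = W_0\gamma_i + \sum_{j=1}^{i-1} g_{j,i}^*\,(C_j-S_j+1)\,\tilde{\alpha}_j + \sum_{j=1}^{i-1} h_{j,i}^*\, D_j, \qquad D_j \coloneqq R_j[\alpha K_j + (\alpha-W_0)K_j^c].
\]
Writing $U_j \coloneqq C_j-S_j+1\in\{0,1\}$, one has $U_j=1-(S_j-C_j)$, with $U_j=0$ exactly when $\lambda_j<P_j\le\tau_j$, and $D_j\ge 0$ since $W_0\le\alpha$. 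An elementary induction gives $\tilde{\alpha}_j\ge 0$. I would then sum the recursion over $i=1,\dots,N$ and swap the order of summation in the two double sums; using $\sum_{i>j} g_{j,i}^*\le 1$, $\sum_{i>j} h_{j,i}^*\le 1$ together with $\tilde{\alpha}_j\ge 0$ and $D_j\ge 0$, each inner sum is at most $1$, so
\[
\sum_{i=1}^N \tilde{\alpha}_i \;\le\; W_0\sum_{i=1}^N\gamma_i \;+\; \sum_{j=1}^{N-1} U_j\tilde{\alpha}_j \;+\; \sum_{j=1}^{N-1} D_j.
\]
Subtracting $\sum_{i=1}^N U_i\tilde{\alpha}_i$ cancels the middle term up to the nonpositive remainder $-U_N\tilde{\alpha}_N$, and with $\sum_i\gamma_i\le 1$ this gives $\sum_{i=1}^N \tilde{\alpha}_i(1-U_i)\le W_0+\sum_{j=1}^{N-1} D_j$.

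The crux is then to match $W_0+\sum_{j=1}^{N-1} D_j$ against the target $\alpha(|R(N)|\lor 1)$. By the meaning of $K_j$, the first rejection among $H_1,\dots,H_{N-1}$ earns $\alpha-W_0$ and every later one earns $\alpha$, so $\sum_{j=1}^{N-1} D_j$ equals $0$ when $|R(N-1)|=0$ and equals $\alpha|R(N-1)|-W_0$ when $|R(N-1)|\ge 1$. A two-case check then yields $W_0+\sum_{j=1}^{N-1}D_j\le\alpha(|R(N)|\lor 1)$: in the no-rejection case the bound is $W_0\le\alpha$, and otherwise it is $\alpha|R(N-1)|\le\alpha|R(N)|$. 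Finally, since $\alpha_j=\min(\hat{\alpha}_j,\lambda_j)\le\hat{\alpha}_j$ gives $\alpha_j/(\tau_j-\lambda_j)\le\tilde{\alpha}_j$ and $S_j-C_j=1-U_j\ge 0$, the left-hand side of \eqref{eq:FDR_ADDIS_principle} is dominated by $\sum_{j=1}^N \tilde{\alpha}_j(1-U_j)$, completing the verification.

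I expect the reward bookkeeping to be the main obstacle: one must track $K_j$ so that exactly one unit of $W_0$ is refunded at the first rejection, and reconcile the off-by-one between the reward sum running to $N-1$ and the target counting rejections through step $N$ (handled by the monotonicity of $|R(\cdot)|$ and the $\lor 1$). The remaining ingredients---nonnegativity of the $\tilde{\alpha}_j$, the order-swap, and the cancellation---parallel the FWER argument and should be routine.
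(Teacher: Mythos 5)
Your verification of condition \eqref{eq:FDR_ADDIS_principle} is correct, and your reward bookkeeping --- the two-case analysis showing $W_0+\sum_{j<N} D_j \le \alpha(|R(N)|\lor 1)$, with the first rejection refunding $\alpha-W_0$ and each subsequent one earning $\alpha$ --- is exactly the computation the paper performs for its quantity $\sum_{j\le i}\alpha_j^0$. Where you genuinely diverge is in how the graph part of the inequality is handled. The paper defines $\alpha_j^0 = W_0\gamma_j+\sum_{k<j}h_{k,j}^*R_k[\alpha K_k+(\alpha-W_0)K_k^c]$, bounds its running sum by $\alpha(|R(i)|\lor 1)$, and then re-runs the proof of Theorem~\ref{theo:addis_graph} with $\alpha\gamma_j$ replaced by $\alpha_j^0$; that earlier proof is a worst-case ``flipping'' argument over the vectors $U_{1:i}$. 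You instead prove the deterministic inequality directly: sum the normalised recursion over $i\le N$, swap the order of summation, use $\sum_{i>j}g_{j,i}^*\le 1$ and $\sum_{i>j}h_{j,i}^*\le 1$ together with nonnegativity, and cancel $\sum U_j\tilde\alpha_j$ against itself. This is a cleaner and more self-contained route that avoids importing the Theorem~\ref{theo:addis_graph} machinery, and you also make explicit the step $\alpha_j=\min(\hat\alpha_j,\lambda_j)\le\hat\alpha_j$, which the paper leaves implicit.

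One caveat: you dispose of the mFDR implication as ``immediate from the references,'' but the paper deliberately does not, because the mFDR result of Zrnic et al.\ it would otherwise invoke covers only $\tau_i=1$. For general $\tau_i\in(0,1]$ the paper supplies a short self-contained chain of inequalities bounding $\mathbb{E}[|V(i)|]$ by $\alpha\,\mathbb{E}[|R(i)|\lor 1]$ using the uniform validity of the null $p$-values ($\mathbb{P}(P_j\le\alpha_j\mid P_j\le\tau_j,\cdot)\le\alpha_j/\tau_j$, etc.). Since the theorem statement itself cites the reference for this implication your deferral is defensible, but a complete write-up should either restrict to $\tau_i=1$ for that citation or reproduce the paper's short argument.
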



The $\text{FDR-ADDIS-Graph}_{\text{conf}}$ is illustrated in Figure \ref{abb:addis_graph_fdr}. Note that the figure only contains $(\hat{\alpha}_i)_{i\in \mathbb{N}}$  and one needs to set $\alpha_i=\min(\hat{\alpha}_i,\lambda_i)$ after using the graph. The $\text{FDR-ADDIS-Graph}_{\text{conf}}$ can be interpreted just as the $\text{ADDIS-Graph}_{\text{conf}}$ for FWER control (Figure 4). The additional grey arrows are activated if the corresponding hypothesis is rejected. In case of the first rejection, the level $\alpha-W_0$ is distributed to the future hypotheses according to the weights $(h_{j,i})_{i=j+1}^{\infty}$, $j\in \mathbb{N}$, and in case of any other rejection, the level $\alpha$ is distributed. 

\begin{figure}[htbp]
	\begin{center}
			\centering
		\includegraphics[width=21cm,height=7cm,keepaspectratio]{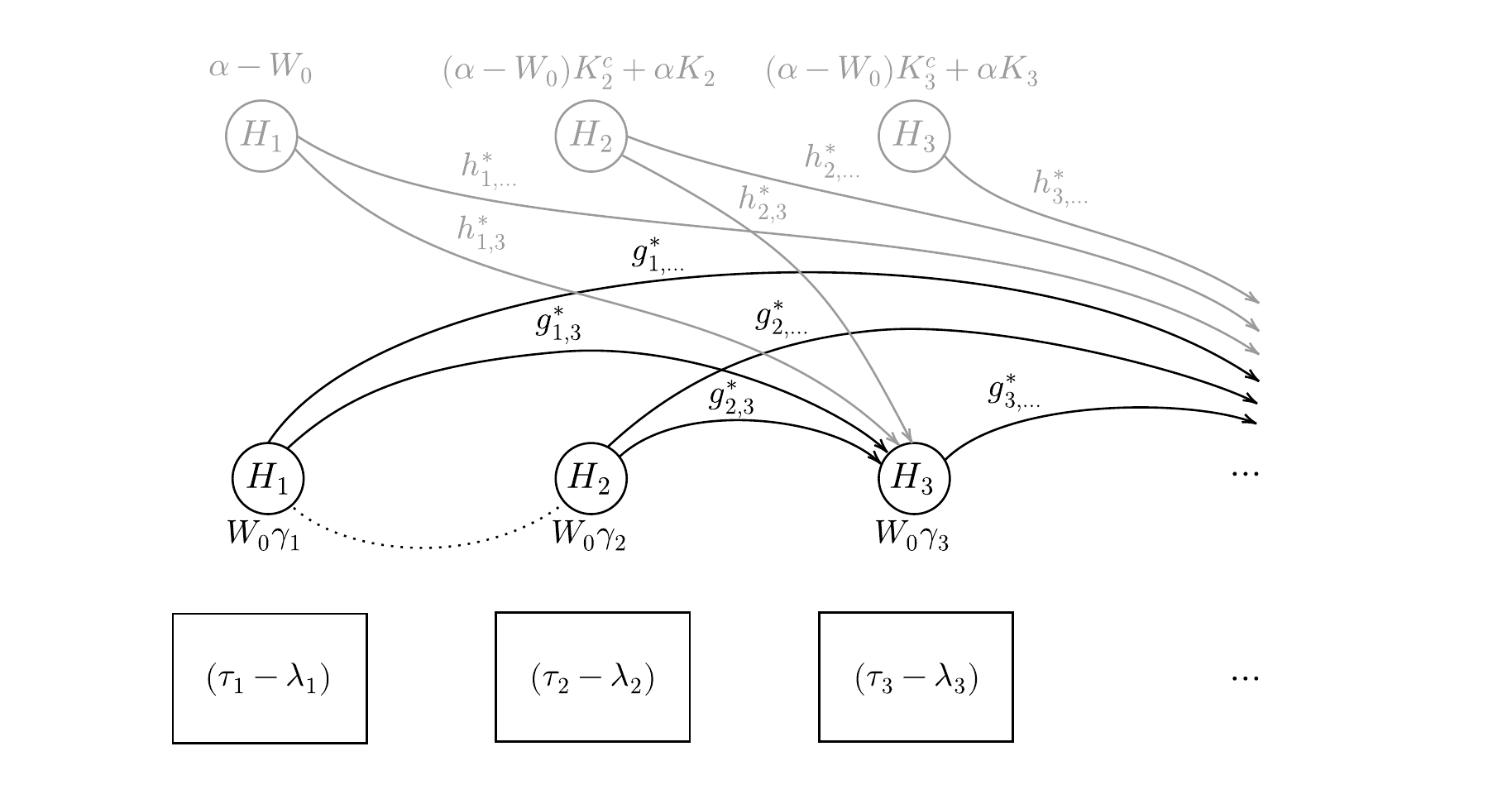}
	\end{center}
	\caption{Illustration of the FDR-ADDIS-Graph.\label{abb:addis_graph_fdr}}
\end{figure}

The benefit of the $\text{FDR-ADDIS-Graph}_{\text{conf}}$ compared to the current state-of-art ADDIS procedure for FDR control, the $\text{ADDIS}^*$ algorithm \cite{TR2}, is similar as for the $\text{FDR-ADDIS-Graph}_{\text{conf}}$ for FWER control and the $\text{ADDIS-Spending}_{\text{local}}$. Due to its graphical structure, the $\text{FDR-ADDIS-Graph}_{\text{conf}}$ is more flexible and easier to interpret. In particular, the dependencies between the previous test outcomes and individual significance levels become clearer. This might be even more important in the FDR case, as the $\text{ADDIS}^*$ is far more complex than the $\text{ADDIS-Spending}_{\text{local}}$. Although we do not show a uniform improvement theoretically, the intuition about the superiority of the $\text{FDR-ADDIS-Graph}_{\text{conf}}$ over the $\text{ADDIS}^*$ algorithm when conflicts are present is the same as in the FWER case: The $\text{FDR-ADDIS-Graph}_{\text{conf}}$ distributes the same amount of significance level under conflict sets, while the $\text{ADDIS}^*$ algorithm loses level systematically due to conflicts. In the section \ref{sec:sim_FDR}, we verify this by means of simulations and quantify the resulting power difference. Furthermore, these graphical representations clarify the gain of switching from FWER to FDR control.

\section{Further simulation results}\label{sec:sim_appendix}
\vspace{0.2cm}

\subsection{Comparison of the ADDIS-Graph and closed ADDIS-Spending under local dependence}\label{sec:sim_closed_ADDIS}

In this subsection, we compare the closed $\text{ADDIS-Spending}_{\text{local}}$ (7) with the  $\text{ADDIS-Graph}_{\text{conf-u}}$ (3). We use exactly the same simulation setup as in Section 7 and apply the procedures with the same parameters. 

The results are illustrated in Figure \ref{fig:plot_fwer_closed} and look very similar as in Figure 5. The only difference is that in Figure \ref{fig:plot_fwer_closed} the closed $\text{ADDIS-Spending}_{\text{local}}$ loses slightly less power due to the local dependence than the $\text{ADDIS-Spending}_{\text{local}}$. However, the $\text{ADDIS-Graph}_{\text{conf-u}}$ also outperforms the closed $\text{ADDIS-Spending}_{\text{local}}$ in all cases.

\begin{figure}[htbp]
	\begin{center}
			\centering
		\includegraphics[width=19.5cm,height=6.5cm,keepaspectratio]{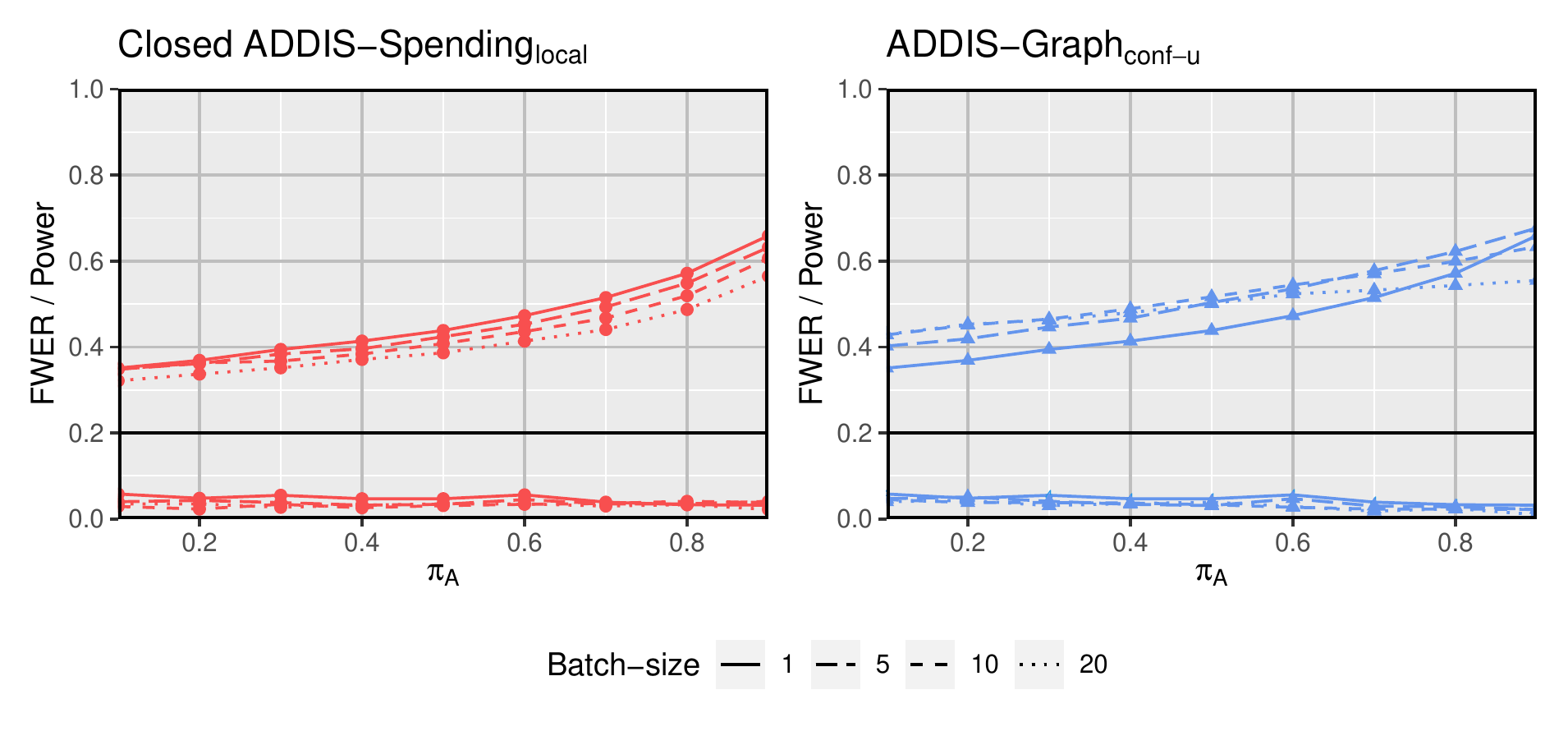}
  \includegraphics[width=19.5cm,height=6.5cm,keepaspectratio]{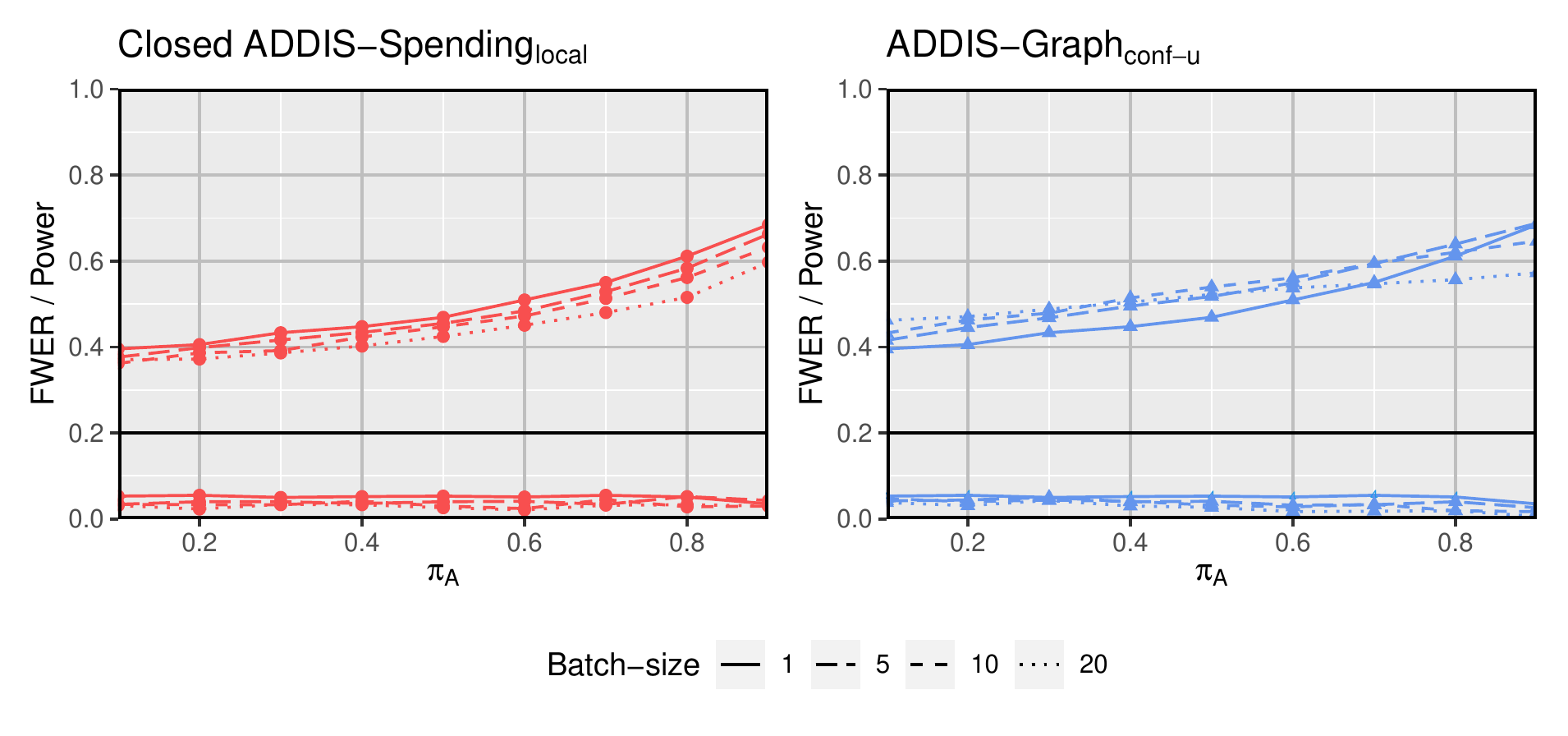}
  \includegraphics[width=19.5cm,height=6.5cm,keepaspectratio]{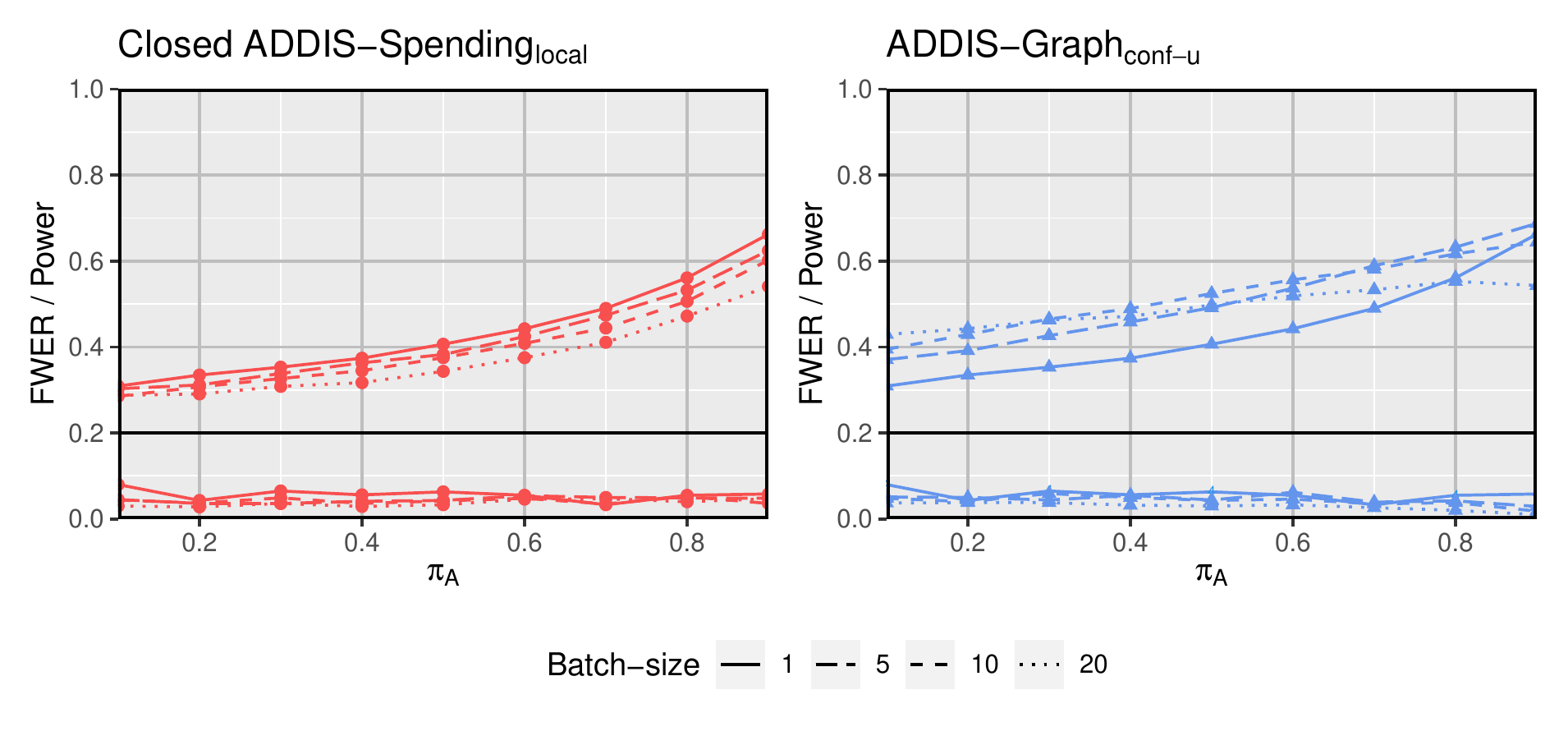}
	\end{center}
	\caption{Comparison of closed $\text{ADDIS-Spending}_{\text{local}}$ (7) and $\text{ADDIS-Graph}_{\text{conf-u}}$ (3) in terms of power and FWER for different batch-sizes and proportions of false null hypotheses ($\pi_A$). Lines above the overall level $\alpha=0.2$ correspond to power and lines below to FWER. The $p$-values were generated as described in the text with parameters $\mu_N=-0.5$ and $\rho=0.5$. Both procedures were applied with parameters $\tau_i=0.8$ and $\lambda_i=0.16$. In the top row $\gamma_i\propto 1/\left((i+1)\log(i+1)^2\right)$, in the middle row $\gamma_i\propto 1/i^{1.6}$ and in the bottom row $\gamma_i=6/(\pi^2 i^2)$. Under independence of the $p$-values both procedures coincide. However, the closed $\text{ADDIS-Spending}_{\text{local}}$ loses power when the $p$-values become locally dependent, while the $\text{ADDIS-Graph}_{\text{conf-u}}$ offers a similar or even larger power as under independence. \label{fig:plot_fwer_closed}}
\end{figure}

\subsection{Simulation results when incorporating correlation structure}\label{sec:sim_corr}
In this subsection, we use the same simulation setup as described in Section 7 to compare the $\text{ADDIS-Graph}_{\text{conf}}$ (3) with the $\text{Adaptive-Graph}_{\text{corr}}$ \eqref{eq:graph_corr}. The results are summarized in Figure \ref{fig:plot_corr_batch}. In the top row, we vary the batch-size $b\in \{1,5,10,20\}$, in the middle row  the correlation within batches $\rho\in \{0.3,0.5,0.7,0.9\}$ and in the bottom row, we evaluate the procedures for a different conservativeness of null $p$-values $\mu_N \in \{0,-0.5,-1,-2\}$, while the other parameters are set to standard values $b=10$, $\rho=0.5$ and $\mu_N=0$. In order to extract the effect of incorporating the correlation structure, we set $\tau_i=1$ for the $\text{ADDIS-Graph}_{\text{conf}}$ in the top and middle row, which is why we also write $\text{Adaptive-Graph}_{\text{conf}}$.
Furthermore, we choose $\lambda_i=\tau_i \alpha$, $\gamma_i=6/(\pi^2 i^2)$ and $g_{j,i}^* = g_{j,i}\bigg/ \left(1-\sum_{k=j+1}^{d_j-1} g_{j,k}\right)$ if $i\geq d_j$ and $g_{j,i}^{*}=0$ otherwise, where $g_{j,i}=\gamma_{i-j}$ and $d_j=\min\{i\in \mathbb{N}:i-L_i>j\}$, for both procedures in all cases. 

\begin{figure}[htbp]
	\begin{center}
			\centering
		\includegraphics[width=19.5cm,height=6.5cm,keepaspectratio]{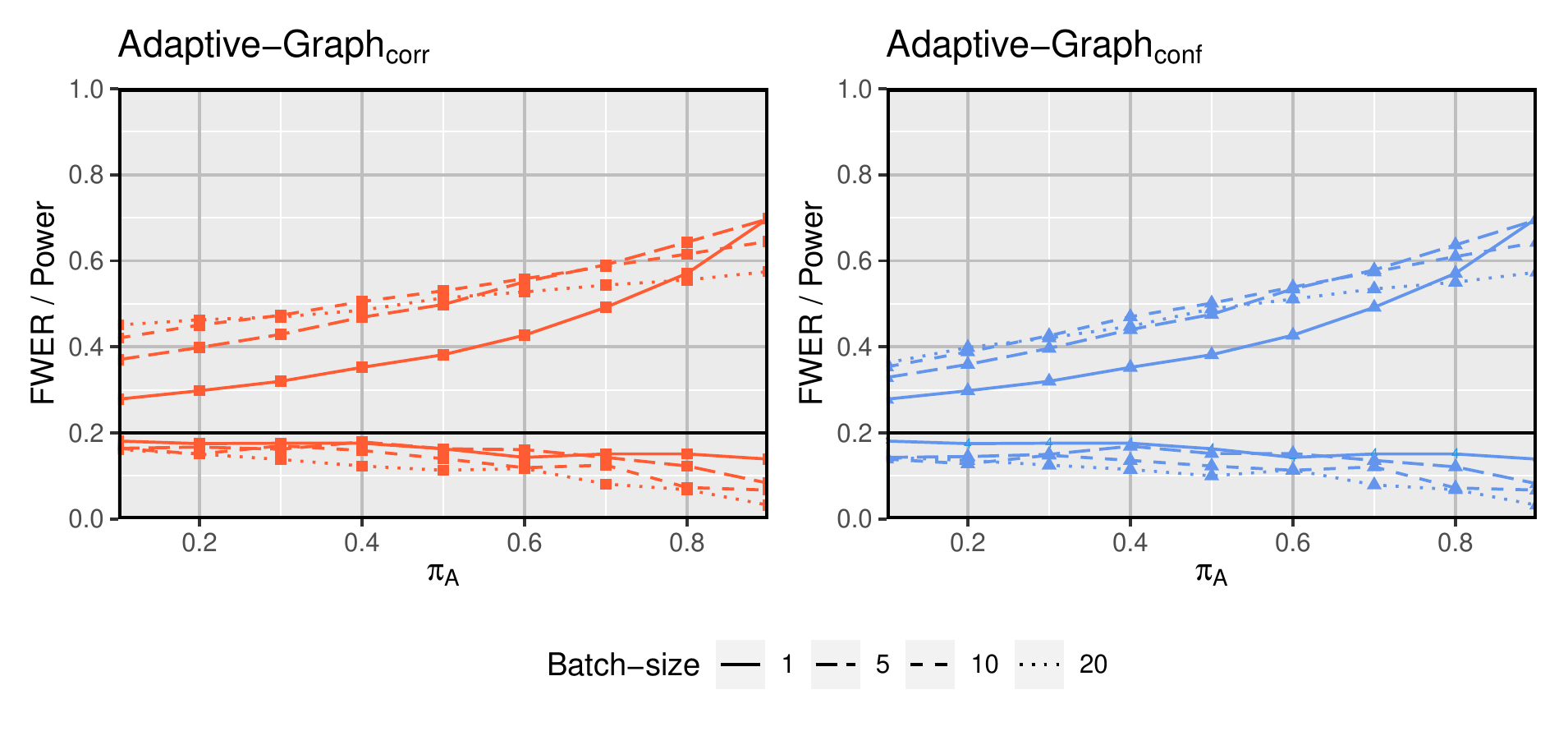}
\includegraphics[width=19.5cm,height=6.5cm,keepaspectratio]{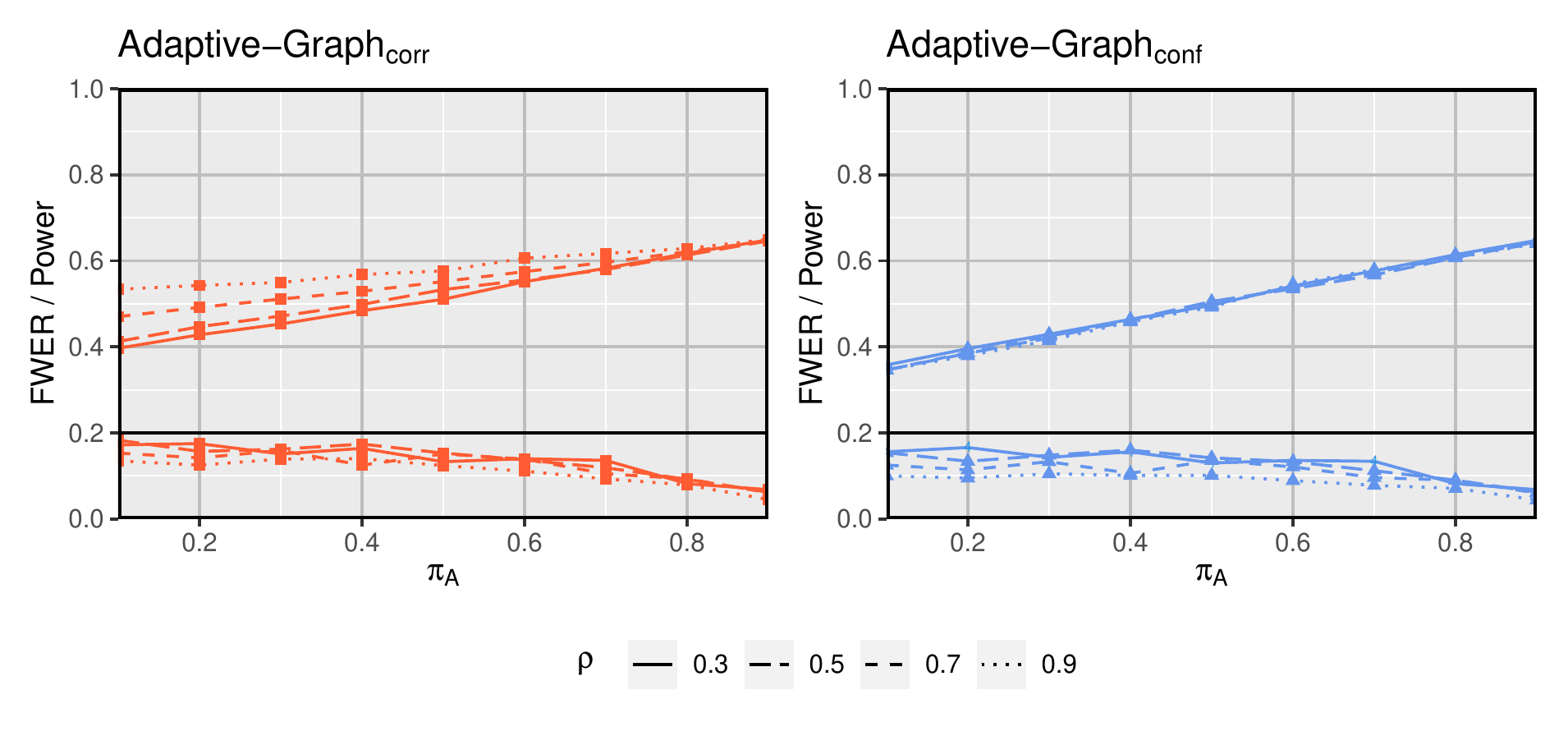}
\includegraphics[width=19.5cm,height=6.5cm,keepaspectratio]{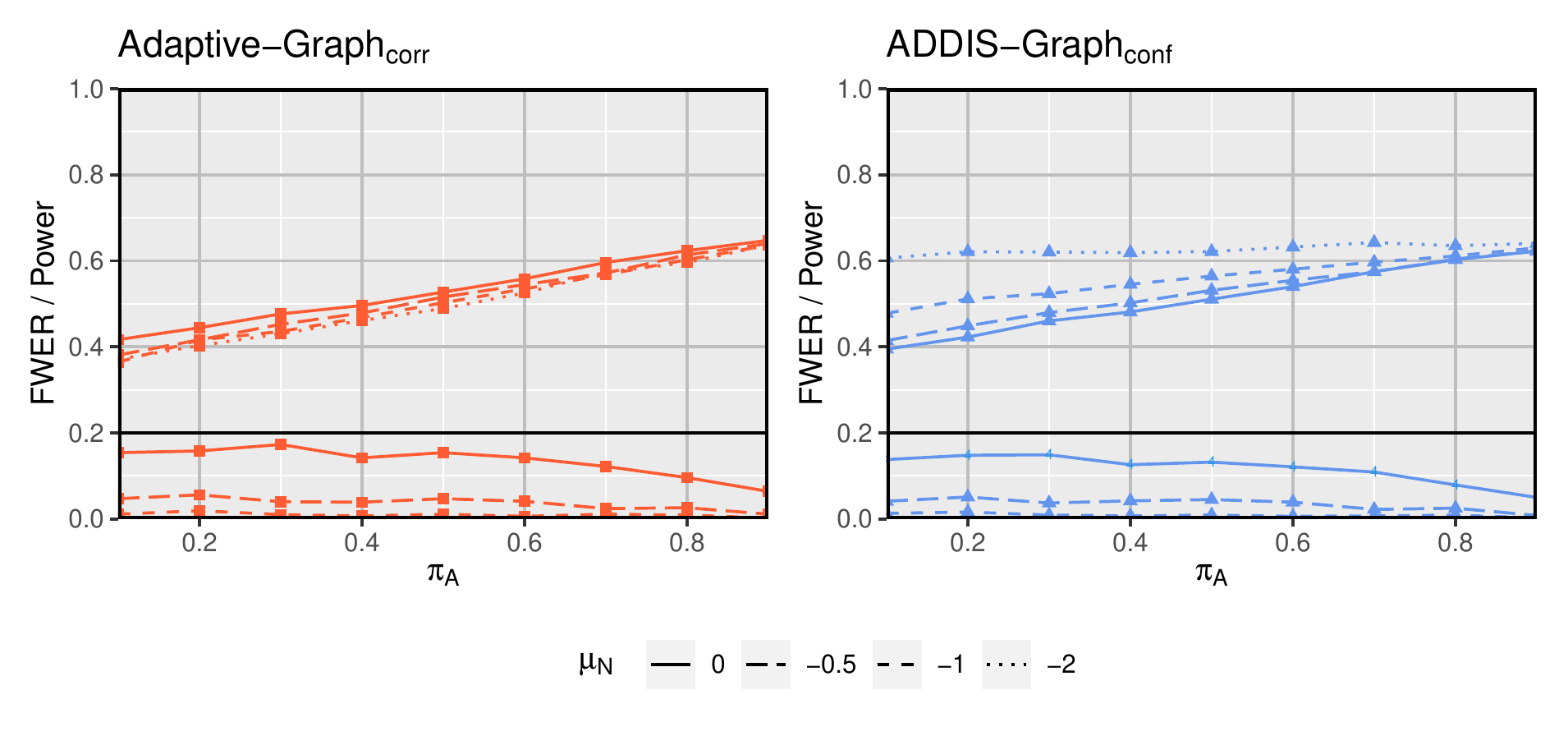}
	\end{center}
	\caption{Comparison of $\text{Adaptive-Graph}_{\text{corr}}$ \eqref{eq:graph_corr}  and $\text{Adaptive-Graph}_{\text{conf}}$ (3) in terms of power and FWER for different proportions of false null hypotheses ($\pi_A$). In the top row, we vary the batch-size; In the middle row, we vary the strength of correlation $\rho$; In the bottom row, we vary the conservativeness of the null p-values $\mu_N$. Lines above the overall level $\alpha=0.2$ correspond to power and lines below to FWER. The $p$-values were generated and the procedures applied as described in the text. The $\text{Adaptive-Graph}_{\text{corr}}$ allows to increase the power when the batch-size is large and the within-batch correlation is high. However, when the p-values become conservative, the $\text{Adaptive-Graph}_{\text{conf}}$ allows to gain power, while the $\text{Adaptive-Graph}_{\text{corr}}$ loses a bit of power.
 \label{fig:plot_corr_batch}}
\end{figure}

In the top row, the two procedures are equivalent under independence. However, when the batch-size increases, the power of both procedures increases as well, while the power gain is larger using the $\text{Adaptive-Graph}_{\text{corr}}$. The plots looks different in the middle row, where the power of the $\text{Adaptive-Graph}_{\text{conf}}$ remains identical when varying the $\rho$, while the FWER drops a bit for large $\rho$. This FWER drop can be compensated by exploiting the correlation structure using $\text{Adaptive-Graph}_{\text{corr}}$. It also seems that the strength of correlation in the middle row has a larger positive impact on the power of the $\text{Adaptive-Graph}_{\text{corr}}$ than the batch-size in the top row. The comparison looks quite different in the bottom row, where conservative $p$-values are now discarded using the $\text{ADDIS-Graph}_{\text{conf}}$. When the null $p$-values are uniformly distributed ($\mu_N=0$), the $\text{Adaptive-Graph}_{\text{corr}}$ is still superior, however, when the null $p$-values become conservative, the power of the $\text{ADDIS-Graph}_{\text{conf}}$ increases, while the $\text{Adaptive-Graph}_{\text{corr}}$ loses a bit of power. To conclude, it is difficult to give a general advice on whether one should prefer the  $\text{Adaptive-Graph}_{\text{corr}}$ or $\text{ADDIS-Graph}_{\text{conf}}$ and the choice of the procedure should incorporate information/assumptions about the batch-size, strength of correlation and conservativeness of null $p$-values.

\subsection{Comparison of FDR-ADDIS-Graph and $\text{ADDIS}^*$ in an asynchrone test setup}\label{sec:sim_FDR}

In this subsection we consider a similar simulation setup as described in Section 7, but for independent $p$-values ($b=1$) and a larger number of hypotheses ($n=1000$). Applying the procedures, it is assumed that the hypotheses are tested in an asynchronous manner. Thus, the conflict sets are given by $\mathcal{X}_i=\{j<i:E_j\geq i\}$,  where $E_i\geq i$ is the (possibly random but independent of $P_i$) stopping time for hypothesis $H_i$. Due to  Theorem \ref{theo:addis_graph_fdr}, the $\text{FDR-ADDIS-Graph}_{\text{conf}}$ controls the FDR in this setting. We assume that $E_i=i+e$ for some constant test duration $e\in \mathbb{N}_0$. In the following simulations we compare the $\text{FDR-ADDIS-Graph}_{\text{conf}}$ and $\text{ADDIS}^*_{\text{async}}$ \citep{TR2} in terms of power and FDR for $e\in \{0,2,5,10\}$. Since FDR is less conservative than FWER, we also change the overall level to $\alpha=0.05$. As recommended \citep{TR2}, we choose $\tau_i=0.5$ and $\lambda_i=0.25$ for all $i\in \mathbb{N}$, but use the same $(\gamma_i)_{i\in \mathbb{N}}$ as in Section 7. For the weights of the $\text{FDR-ADDIS-Graph}_{\text{conf}}$, we choose
$g_{j,i}^* = g_{j,i}\bigg/ \left(1-\sum_{k=j+1}^{E_j} g_{j,k}\right)$ if $i> E_j$ and $g_{j,i}^{*}=0$ otherwise, where $g_{j,i}=\gamma_{i-j}$ and $h_{j,i}^*=g_{j,i}^*$.
Furthermore, we set $W_0=\alpha$. The results obtained by averaging over $200$ independent trials can be found in the Figure \ref{fig:plot_fdr_logq}.

\begin{figure}[htbp]
	\begin{center}
			\centering
		\includegraphics[width=19.5cm,height=6.5cm,keepaspectratio]{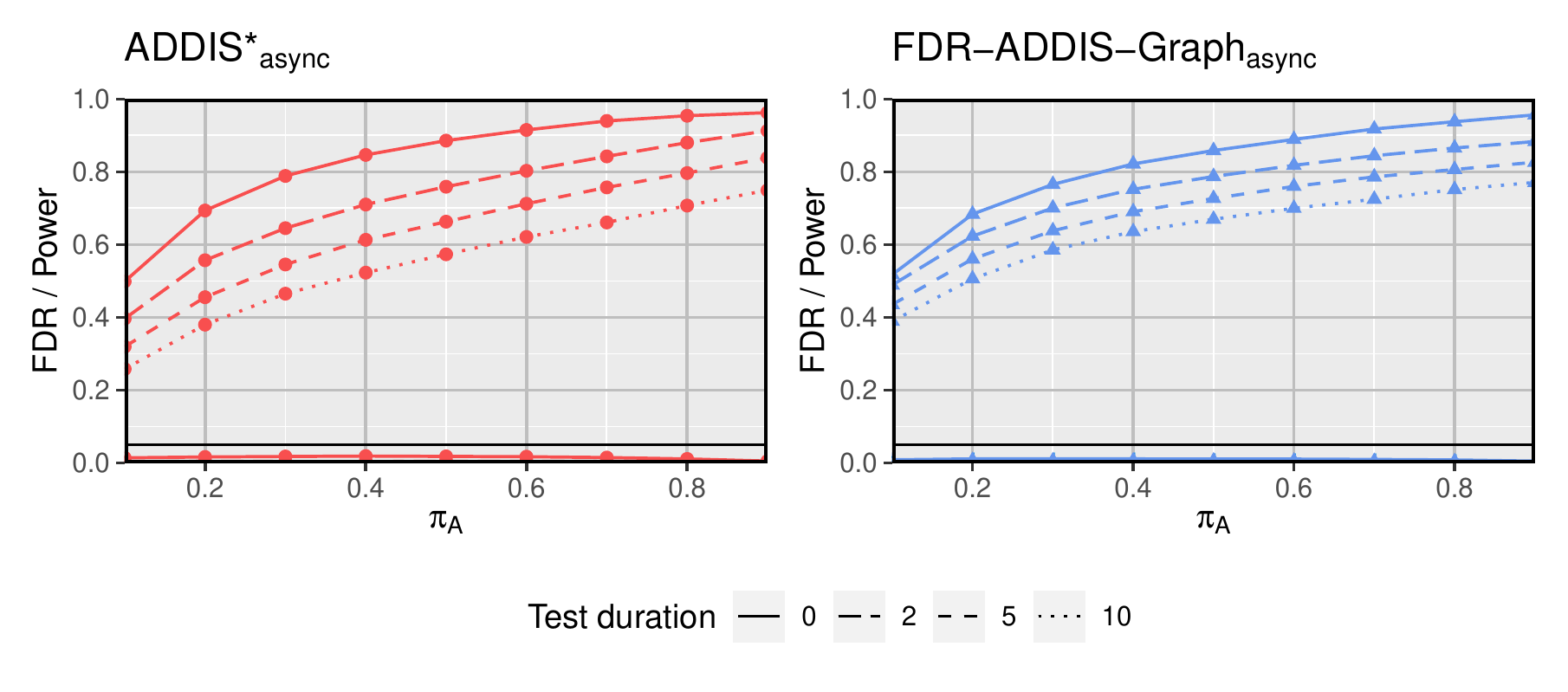}
  \includegraphics[width=19.5cm,height=6.5cm,keepaspectratio]{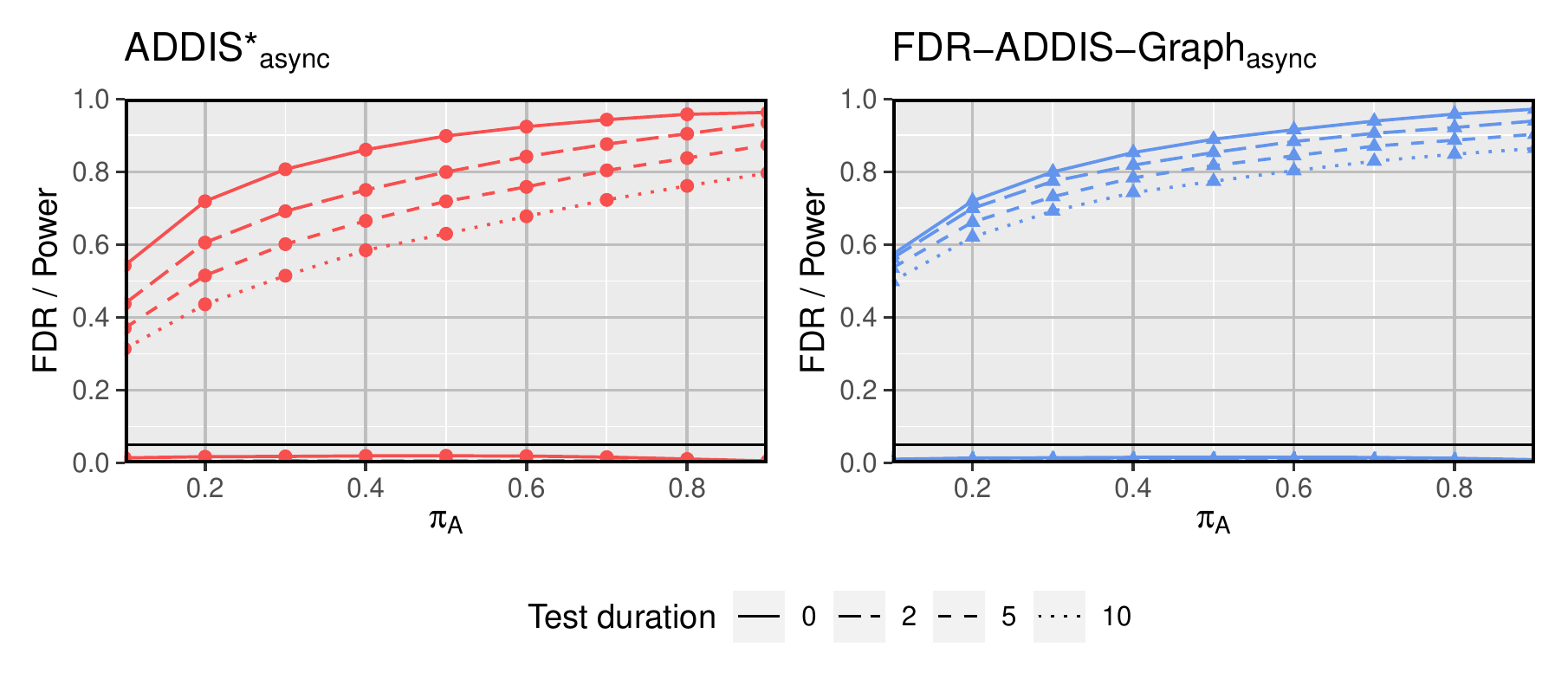}
  \includegraphics[width=19.5cm,height=6.5cm,keepaspectratio]{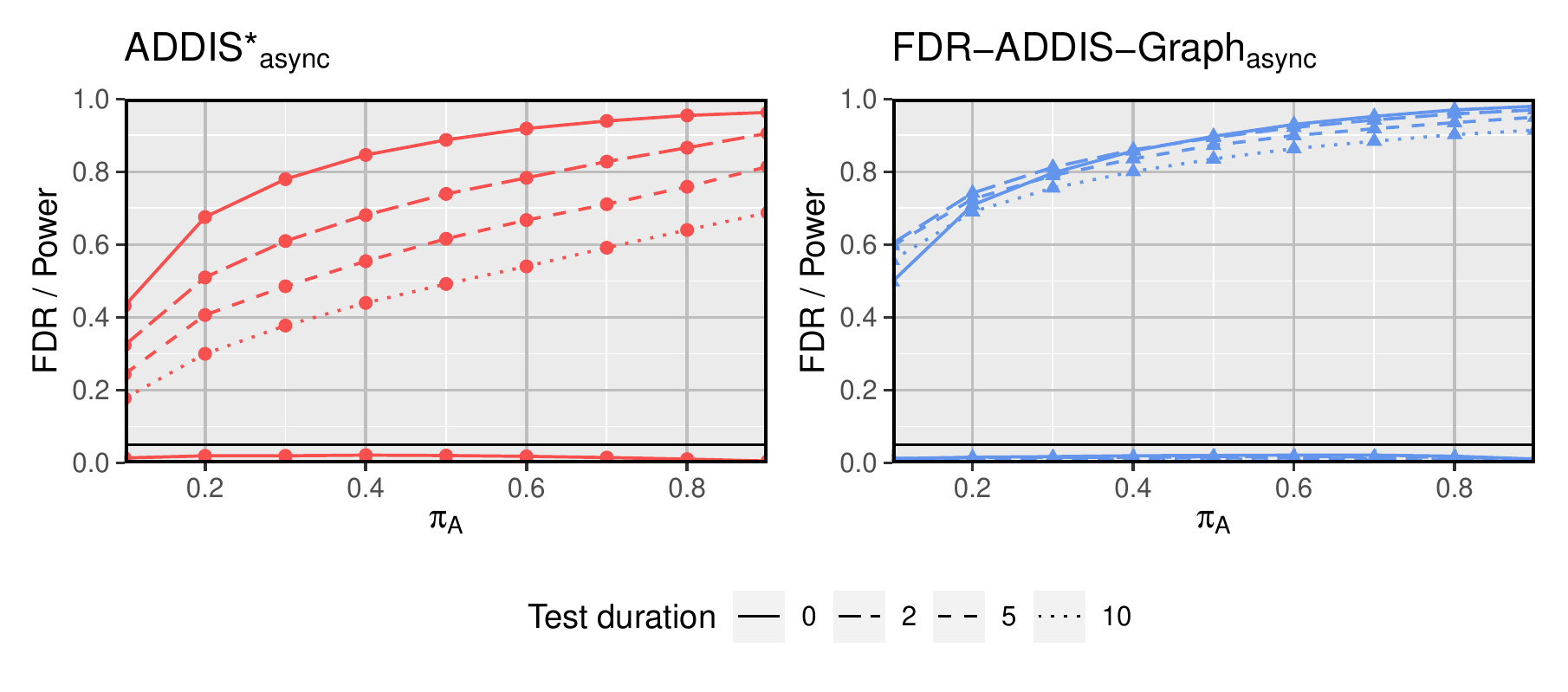}
	\end{center}
	\caption{Comparison of $\text{ADDIS}^*_{\text{async}}$ and $\text{FDR-ADDIS-Graph}_{\text{conf}}$ in terms of power and FDR for different test durations and proportions of false null hypotheses ($\pi_A$). Lines above the overall level $\alpha=0.05$ correspond to power and lines below to FDR. The $p$-values were generated as described in the text with parameter $\mu_N=-0.5$. Both procedures were applied with parameters $\tau_i=0.5$, $\lambda_i=0.25$ and $W_0=\alpha$. In the top row $\gamma_i\propto 1/\left((i+1)\log(i+1)^2\right)$, in the middle row $\gamma_i\propto 1/i^{1.6}$ and in the bottom row $\gamma_i=6/(\pi^2 i^2)$. When hypotheses are not tested asynchronously, the power of both procedures is similar. However, the $\text{ADDIS-Spending}_{\text{local}}$ loses power when the test duration increases, while the $\text{ADDIS-Graph}_{\text{conf-u}}$ can decelerate this decrease.
 \label{fig:plot_fdr_logq}}
\end{figure}

The results are similar as for the FWER controlling procedures (Section 7). The power of $\text{ADDIS}^*_{\text{async}}$ decreases enormously for an increasing test duration. This decrease can be decelerated by the $\text{FDR-ADDIS-Graph}_{\text{async}}$ (top and middle row) or even stopped (bottom row), if a faster decreasing $(\gamma_i)_{i\in \mathbb{N}}$ is chosen.

\section{Proofs}\label{sec:proofs}
\vspace{0.2cm}
\begin{proof}[Proof of Theorem 1]
Let $(\alpha_i)_{i\in \mathbb{N}}$ be given by the ADDIS-Graph. We need to show that for any $i\in \mathbb{N}$, $S_{1:i}\coloneqq (S_1,\ldots,S_i)^T \in \{0,1\}^{i-1}$ and $C_{1:i}\coloneqq (C_1,\ldots,C_i)^T \in \{0,1\}^{i}$:  \begin{align} \sum_{j=1}^{i} \frac{\alpha_i}{\tau_i-\lambda_i}(S_i-C_i)\leq \alpha.  \label{adavora}\end{align}  
 We define $U_j\coloneqq C_j-S_j+1$ for all $j\in \mathbb{N}$. Then $1-U_j=S_j-C_j$ and since $C_j\leq S_j$, it holds $U_j\in \{0,1\}$. Now let $i\in \mathbb{N}$ and $U_{1:i}=(U_1,\ldots,U_{i})^T\in \{0,1\}^{i}$ be arbitrary but fixed. 
With this, \eqref{adavora} is equivalent to
\begin{align}
 F_i(U_{1:i})&\coloneqq\sum_{j=1}^{i} \left(\alpha \gamma_j + \sum_{k=1}^{j-1} g_{k,j} U_k \alpha_k(U_{1:(k-1)}) \frac{1}{\tau_k-\lambda_k}\right)(1-U_j) \leq \alpha. \label{eq: F(C)}\end{align}
Note that we only wrote the dependence of $\alpha_k$ on $U_{1:(k-1)}=(U_1,\ldots,U_{k-1})^T$, although the parameters $\lambda_k$ and $\tau_k$ could depend on it as well. That is, because these parameters could also be fixed, meaning if we change the $U_{1:(k-1)}$ they would still be valid parameters for an ADDIS-Graph. In contrast, the $\alpha_k$ changes by definition. It is difficult to show the validity of \eqref{eq: F(C)} directly. However, we will see that there exists $\tilde{U}_{1:i}\in \{0,1\}^{i}$ that obviously fulfil $F_i(\tilde{U}_{1:i})\leq \alpha$. Therefore, the idea is to determine such a $\tilde{U}_{1:i}$ that additionally satisfies $F_i(U_{1:i}) \leq F_i(\tilde{U}_{1:i})$.

Let $l=\max\{j\in \{1,\ldots,i\}:U_j=1\}$ (we set $\max(\emptyset)=0$) and $U_{1:i}^{l}=(U_1^l,\ldots,U_{i}^l)^T$, where $U_j^l=U_j$ for all $j\neq l$ and $U_l^l=0$. We assume that $l>0$ (if $l=0$, we later see $F_i(U_{1:i})\leq \alpha$ anyway). In the next step we want to show that $F_i(U_{1:i})\leq F_i(U_{1:i}^{l})$. For shorter notation we write $\alpha_j=\alpha_j(U_{1:(j-1)})$ and $\alpha_j^l=\alpha_j(U_{1:(j-1)}^{l})$. Since for all $j\leq i$: $U_j^l=U_j$  ($j\neq l$), $U_j^l=0$ ($j\geq l$), $U_j=0$ ($j\geq l+1$) and $ \alpha_j^l=\alpha_j$ ($j\leq l$), we have:

\begin{align*}& F_i(U_{1:i}^{l})-F_i(U_{1:i})\\ &= \sum_{ j=1}^{ i} \alpha \gamma_j (1-U_j^l) -\sum_{ j=1}^{ i} \alpha \gamma_j (1-U_j) + \sum_{ j=1}^{ i} \left( \sum_{k=1}^{j-1} g_{k,j} U_k^l \alpha_k^l \frac{1}{\tau_k-\lambda_k}\right)(1-U_j^l) \\
&- \sum_{ j=1}^{ i} \left( \sum_{k=1}^{j-1} g_{k,j} U_k \alpha_k \frac{1}{\tau_k-\lambda_k}\right)(1-U_j)\\
&= \alpha \gamma_l + \sum_{ j=1}^{ i} \left( \sum_{k=1}^{j-1} g_{k,j} U_k^l \alpha_k^l \frac{1}{\tau_k-\lambda_k}\right)(1-U_j^l) - \sum_{ j=1}^{ i} \left( \sum_{k=1}^{j-1} g_{k,j} U_k \alpha_k \frac{1}{\tau_k-\lambda_k}\right)(1-U_j) \\
&= \alpha \gamma_l + \sum_{k=1}^{l-1} g_{k,l} U_k^l \alpha_k^l \frac{1}{\tau_k-\lambda_k} + \sum_{j=l+1}^{i}  \sum_{k=1}^{l-1} g_{k,j} U_k^l \alpha_k^l \frac{1}{\tau_k-\lambda_k} \\
&- \sum_{j=l+1}^{i} \sum_{k=1}^{l} g_{k,j} U_k \alpha_k \frac{1}{\tau_k-\lambda_k} \\
&= \alpha \gamma_l + \sum_{k=1}^{l-1} g_{k,l} U_k \alpha_k \frac{1}{\tau_k-\lambda_k} - \sum_{j=l+1}^{i} g_{l,j} \alpha_l \frac{1}{\tau_l-\lambda_l} \\
&\geq \alpha \gamma_l + \sum_{k=1}^{l-1} g_{k,l} U_k \alpha_k \frac{1}{\tau_k-\lambda_k} -  \alpha_l \frac{1}{\tau_l-\lambda_l}\stackrel{Def.  1}{=}0,
\end{align*}
where we used in the inequality that the sequence $(g_{l,j})_{j=l+1}^{\infty}$ is non-negative and sums to at most $1$ for all $l \in \mathbb{N}$.

Since the $U_{1:i}\in \{0,1\}^{i}$ was arbitrary,
this shows $F_i(U_{1:i})\leq F_i(U_{1:i}^0)$ for all $U_{1:i}\in \{0,1\}^{i}$, where 
 $U_{1:i}^0=(0,\ldots, 0)^T\in \{0,1\}^{i}$. Next, we deduce that $F_i(U_{1:i}^0) \leq \alpha$ and conclude the proof. For this, just recognize that $U_{1:i}^0$ means $U_j=0$ for all $j\leq i$. Hence, we obtain
\begin{align*}
F_i(U_{1:i}^0)
=\sum_{j=1}^{i} \alpha \gamma_j \leq \alpha. 
\end{align*}
\end{proof}

\begin{sloppypar}\begin{proof}[Proof of Proposition 1]

Every online procedure satisfying equation (1) is a sequence of non-negative random variables $(\alpha_i)_{i\in \mathbb{N}}$, where $\alpha_i$ is measurable with respect to $\mathcal{G}_{i-1}$, such that 
\begin{align}\sum_{j\leq i} \frac{\alpha_j}{\tau_j-\lambda_j}(S_j-C_j)\leq \alpha \quad \text{for all } i\in \mathbb{N}. \label{eq:proof_general}\end{align} 
Note that $\alpha_i$ is fully determined through $P_1,\ldots, P_{i-1}$. Hence, pessimistic assumptions about $S_i=\mathbbm{1}\{P_i\leq \tau_i\}$ and $C_i=\mathbbm{1}\{P_i\leq \lambda_i\}$ need to be made at step $i \in \mathbb{N}$ in order to satisfy equation \eqref{eq:proof_general}. Consequently, condition (1) is equivalent to 
\begin{align}0\leq \alpha_i \leq  (\tau_i-\lambda_i) \left(\alpha- \sum_{j\leq i-1} \frac{\alpha_j}{\tau_j-\lambda_j}(S_j-C_j)\right) \quad \text{for all } i\in \mathbb{N}. \label{eq:proof_general_2}\end{align}

Let $i\in \mathbb{N}$ be arbitrary but fixed. In addition, let $(\alpha_j)_{j<i}$ be levels obtained by an ADDIS-Graph with parameters $(\gamma_j)_{j<i}$ and $(g_{k,j})_{k<j<i}$. We want to prove that 
\begin{align*}
    \alpha_i(\gamma_i,(g_{j,i})_{j<i}))=(\tau_i-\lambda_i)\left(\alpha \gamma_i + \sum_{j=1}^{i-1} g_{j,i}(C_j-S_j+1)  \frac{\alpha_j}{\tau_i-\lambda_j}\right),
\end{align*}
where $\gamma_i\in \left[0,1-\sum_{j=1}^{i-1} \gamma_j\right]$ and $g_{j,i}\in \left[0,1-\sum_{k=j+1}^{i-1} g_{j,k}\right]$, $j\in \{1,\ldots,i-1\}$, can take any value in the interval $\left[0,(\tau_i-\lambda_i) \left(\alpha- \sum_{j\leq i-1} \frac{\alpha_j}{\tau_j-\lambda_j}(S_j-C_j)\right)\right]$. Since $\alpha_i$ is continuous in $\gamma_i$ and $(g_{j,i})_{j<i}$, it is sufficient to show that
$\alpha_i(0,(0)_{j<i})=0$ and $\alpha_i\left(1-\sum_{j=1}^{i-1} \gamma_j,\left(1-\sum_{k=j+1}^{i-1} g_{j,k}\right)_{j<i}\right)=(\tau_i-\lambda_i) \left(\alpha- \sum_{j\leq i-1} \frac{\alpha_j}{\tau_j-\lambda_j}(S_j-C_j)\right)$. The first equation follows immediately, hence we only need to show the second (we set $U_j=C_j-S_j+1$ for all $j\in \mathbb{N}$):
\begin{align*}
    & \text{ } \alpha_i\left(1-\sum_{j=1}^{i-1} \gamma_j,\left(1-\sum_{k=j+1}^{i-1} g_{j,k}\right)_{j<i}\right)-(\tau_i-\lambda_i) \left(\alpha- \sum_{j=1}^{i-1} \frac{\alpha_j}{\tau_j-\lambda_j}(1-U_j)\right) \\ &=
     (\tau_i-\lambda_i)\left(\alpha \left(1-\sum_{j=1}^{i-1} \gamma_j\right) + \sum_{j=1}^{i-1} \left(1-\sum_{k=j+1}^{i-1} g_{j,k} \right) U_j  \frac{\alpha_j}{\tau_i-\lambda_j}\right)  \\&- (\tau_i-\lambda_i) \left(\alpha- \sum_{j=1}^{ i-1} \left(\alpha \gamma_j + \sum_{k=1}^{j-1} g_{k,j} U_k  \frac{\alpha_k}{\tau_k-\lambda_k}  \right)(1-U_j)\right) \\
     &= (\tau_i-\lambda_i)\left(-\alpha \sum_{j=1}^{i-1} \gamma_j U_j + \sum_{j=1}^{i-1}U_j  \frac{\alpha_j}{\tau_i-\lambda_j} -  \sum_{j=1}^{i-1}\sum_{k=j+1}^{i-1} g_{j,k} U_j  \frac{\alpha_j}{\tau_i-\lambda_j}\right. \\
     & \left. + \sum_{j=1}^{ i-1} \sum_{k=1}^{j-1} g_{k,j} U_k  \frac{\alpha_k}{\tau_k-\lambda_k}
     - \sum_{j=1}^{ i-1} \left(\sum_{k=1}^{j-1} g_{k,j} U_k  \frac{\alpha_k}{\tau_k-\lambda_k}\right) U_j\right) \\
     &= (\tau_i-\lambda_i)\left(\sum_{j=1}^{i-1}U_j  \frac{\alpha_j}{\tau_i-\lambda_j} -  \sum_{j=1}^{i-1}U_j \left(\alpha  \gamma_j+\sum_{k=1}^{j-1} g_{k,j} U_k  \frac{\alpha_k}{\tau_k-\lambda_k}\right)\right)=0.
\end{align*}
Therefore, if some online Procedure $(\tilde{\alpha}_i)_{i\in \mathbb{N}}$ satisfying condition (1) is given, we can choose the parameters $(\gamma_i)_{i\in \mathbb{N}}$ and $(g_{j,i})_{j\in \mathbb{N}, i>j}$ such that for the individual significance levels of the ADDIS-Graph $(\alpha_i)_{i\in \mathbb{N}}$ holds $\alpha_i=\tilde{\alpha}_i$ for all $i\in \mathbb{N}$.  
\end{proof}\end{sloppypar}

\begin{proof}[Proof of Lemma 1]


In the following we write $\tilde{\alpha}_i^{\text{ind}}=\frac{\alpha_i^{\text{ind}}}{\tau_i-\lambda_i}$, $\tilde{\alpha}_i^{\text{loc}}=\frac{\alpha_i^{\text{loc}}}{\tau_i-\lambda_i}$ and $U_j=C_j-S_j+1$ to reduce notation. Let $g_{j,i}=\frac{\gamma_{t(j)+i-j-1}-\gamma_{t(j)+i-j}}{\gamma_{t(j)}}$, $i>j$, where $t(j)=1+\sum_{k<j} (1-U_k)$. Obviously, $\tilde{\alpha}_1^{\text{ind}}=\alpha \gamma_{t(1)}$. Now assume $\tilde{\alpha}_j^{\text{ind}}=\alpha \gamma_{t(j)}$ for all $j<i$. Thus, we have 
\begin{align*} \tilde{\alpha}_i^{\text{ind}}&=\alpha \gamma_i + \alpha \sum_{j=1}^{i-1} U_j (\gamma_{i-j+\sum_{k<j}(1-U_k)}-\gamma_{i-j+1+\sum_{k<j}(1-U_k)}) \\
&=\alpha \gamma_i + \alpha \sum_{j=t(i)}^{i-1}  (\gamma_{j}-\gamma_{j+1}) =\alpha \gamma_{t(i)}.   \end{align*}
With this, we can show that the $\text{ADDIS-Spending}_{\text{local}}$ can obtained by $\alpha_i^{\text{loc}}$ with the same choice of weights
\begin{align*} \tilde{\alpha}_i^{\text{loc}}&=\alpha \gamma_i + \alpha \sum_{j=1}^{i-L_i-1} U_j (\gamma_{i-j+\sum_{k<j}(1-U_k)}-\gamma_{i-j+1+\sum_{k<j}(1-U_k)})  \\
&=\alpha \gamma_i + \alpha \sum_{j=t(i)^{\text{loc}}}^{i-1}  (\gamma_{j}-\gamma_{j+1}) =\alpha \gamma_{t(i)^{\text{loc}}},   \end{align*}
where $t(i)^{\text{loc}}=1+L_i+\sum_{j=1}^{i-L_i-1} (S_j-C_j)$.
\end{proof}

 \begin{proof}[Proof of Proposition 2]
 Let $g_{j,i}=\frac{\gamma_{t(j)+i-j-1}-\gamma_{t(j)+i-j}}{\gamma_{t(j)}}$, $i>j$, where $t(j)=1+\sum_{k<j} (1-U_k)$ and $(g_{j,i}^*)_{j\in \mathbb{N}, i>j}$ be defined as in Algorithm \ref{alg:1}. 

\begin{algorithm}
\caption{Local dependence adjusted weights for uniform improvement } \label{alg:1}
\begin{algorithmic}
\State $g_{j,i}^* \gets g_{j,i} \text{ } \forall j\in \mathbb{N}, i>j$
\For{$j=1,2,\ldots$}
\For{$i=j+1,j+2,\ldots$}
\If{$i-L_i \leq j$} 
    \State $g_{j,i}^- \gets g_{j,i}^*$
    \State $g_{j,i}^* \gets 0$
    \For{$k>i$}
    $g_{j,k}^*\gets g_{j,k}^* + g_{j,i}^- g_{i,k}$
    \EndFor  
\Else
    \State $g_{j,i}^- \gets \sum_{l=i-L_i}^{i-1} g_{l,i} g_{j,l}^-$
    \State $g_{j,i}^* \gets g_{j,i}^*- g_{j,l}^-$
    \For{$k>i$}
    $g_{j,k}^*\gets g_{j,k}^* + g_{j,i}^- g_{i,k}$
    \EndFor
\EndIf 
\EndFor 
\EndFor
\end{algorithmic}
\end{algorithm}

The weight $g_{j,i}^-$ defined in Algorithm \ref{alg:1} can be interpreted as the part of $g_{j,i}^*$ that cannot be used at step $i$ due to local dependence and thus is distributed to the future weights $g_{j,k}^*$, $k>i$, according to the weights $g_{i,k}$. In case of $i-L_i\leq j$, $H_i$ is not allowed to use any significance level of $H_j$, which is why we set $g_{j,i}^-=g_{j,i}^*$ and thus $g_{j,i}^*=0$. This ensures that $g_{j,i}^*=0$ for all $j\in \mathcal{X}_i$, as required in Definition 2. Setting $g_{j,i}^-=\sum_{l=i-L_i}^{i-1} g_{l,i} g_{j,l}^-$ in case of $i-L_i> j$ additionally ensures that $g_{j,i}^*$ solely depends on $(g_{l,k})_{l\leq i-L_i, k>l}$, which is measurable with respect to $\sigma(P_1,\ldots,P_{i-L_i-1})=\mathcal{G}_{-\mathcal{X}_i}$. Furthermore, note that the sum of the $(g_{j,i}^*)_{i\geq j+1}$ is less or equal than the sum of $(g_{j,i})_{i\geq j+1}$ for each $j\in \mathbb{N}$ and since $$\sum_{i=j+1}^{\infty} g_{j,i}= \frac{1}{\gamma_{t(j)}}\sum_{i=j+1}^{\infty} \gamma_{t(j)+i-j-1}-\gamma_{t(j)+i-j}=1,$$
$(g_{j,i}^*)_{j\in \mathbb{N},i>j}$ can be used in the $\text{ADDIS-Graph}_{\text{conf}}$ (Definition 2).
 In the following, we show that the $\text{ADDIS-Graph}_{\text{conf}}$ with this choice of $(g_{j,i}^*)_{j\in \mathbb{N},i>j}$ leads to a uniform improvement over $\text{ADDIS-Spending}_{\text{local}}$.

 We need to show that $\tilde{\alpha}_i^{\text{loc}}$, defined in the proof of Lemma 1, is less or equal than $\tilde{\alpha}_i=\alpha \gamma_i + \sum_{j=1}^{i-L_i-1} g_{j,i}^{*} (C_j-S_j+1) \alpha_j$  for all $i\in \mathbb{N}$. For this, we define $g_{j,i}^{+,\text{loc}}$ and $g_{j,i}^{+}$ as the proportion of $\alpha \gamma_j$ that is shifted to $\tilde{\alpha}_i^{\text{loc}}$ and $\tilde{\alpha}_i$, respectively, in case of $P_j \leq \lambda_j$ or $P_j>\tau_j$. Hence, it is sufficient to show that $g_{j,i}^{+,\text{loc}}\leq g_{j,i}^{+}$ for all $j\in \mathbb{N}$ and $i>j$. Let $U_j=(C_j-S_j+1)$, then  $g_{j,i}^{+,\text{loc}}$ and $g_{j,i}^{+}$ can be calculated by the Algorithms \ref{alg:2} and \ref{alg:3}, respectively. Since $U_j\leq 1$, we have $g_{j,i}^{+}\geq g_{j,i}^{\text{spend},+}$ and the assertion follows.

\begin{algorithm}
\caption{Calculation of $g_{j,i}^{+,\text{loc}}$ for  $j\in \mathbb{N}$} \label{alg:2}
\begin{algorithmic}
\State $g_{j,i}^{+,\text{loc}} \gets g_{j,i} \text{ } \forall i>j$
\For{$i=j+1,j+2,\ldots$}
\If{$i-L_i\leq j$} 
    \State $g_{j,i}^- \gets g_{j,i}^{+,\text{loc}}$
    \State $g_{j,i}^{+,\text{loc}} \gets 0$
    \For{$k>i$}
    $g_{j,k}^{+,\text{loc}}\gets g_{j,k}^{+,\text{loc}} + g_{j,i}^- g_{i,k} U_i$
    \EndFor  
\Else
    \State $g_{j,i}^- \gets \sum_{l=i-L_i}^{i-1} g_{l,i} g_{j,l}^- U_l + \sum_{l=i-L_i}^{i-1} g_{l,i} g_{j,l}^+ U_l$
    \State $g_{j,i}^{+,\text{loc}} \gets g_{j,i}^{+,\text{loc}}- g_{j,l}^-$
    \For{$k>i$}
    $g_{j,k}^{+,\text{loc}}\gets g_{j,k}^{+,\text{loc}} + g_{j,i}^- g_{i,k} U_i + g_{j,i}^+ g_{i,k} U_i$
    \EndFor
\EndIf 
\EndFor 
\end{algorithmic}
\end{algorithm}

\begin{algorithm}
\caption{Calculation of $g_{j,i}^{+}$ for $j\in \mathbb{N}$} \label{alg:3}
\begin{algorithmic}
\State $g_{j,i}^{+} \gets g_{j,i} \text{ } \forall i>j$
\For{$i=j+1,j+2,\ldots$}
\If{$i < d_j$} 
    \State $g_{j,i}^- \gets g_{j,i}^{+}$
    \State $g_{j,i}^{+} \gets 0$
    \For{$k>i$}
    $g_{j,k}^{+}\gets g_{j,k}^{+} + g_{j,i}^- g_{i,k}$
    \EndFor  
\Else
    \State $g_{j,i}^- \gets \sum_{l=i-L_i}^{i-1} g_{l,i} g_{j,l}^- + \sum_{l=i-L_i}^{i-1} g_{l,i} g_{j,l}^+ U_l$
    \State $g_{j,i}^{+} \gets g_{j,i}^{+}- g_{j,l}^-$
    \For{$k>i$}
    $g_{j,k}^{+}\gets g_{j,k}^{+} + g_{j,i}^- g_{i,k} + g_{j,i}^+ g_{i,k} U_i$
    \EndFor
\EndIf 
\EndFor 
\end{algorithmic}
\end{algorithm}

 \end{proof}

\begin{proof}[Proof of Theorem \ref{theo:corr_princ}]

First, note that
\begin{align*}
     \text{FWER}(i)&=\left(\bigcup\limits_{j\leq i, j\in I_0} \{P_i\leq \alpha_i\}\right) \\
     &\leq \sum_{j\leq i, j\in I_0} \mathbb{P}\left( \bigcap\limits_{k\in B_{b_j}, k<j, k\in I_0} \{P_k>\alpha_k\} \cap \{P_j\leq \alpha_j\}    \right) \\
     &\leq  \sum_{j\leq i, j\in I_0} \mathbb{P}\left( \bigcap\limits_{k\in B_{b_j}, k<j, k\in I_0, C_k=0} \{P_k>\alpha_k\} \cap \{P_j\leq \alpha_j\}    \right) \\
     &=\sum_{j\leq i, j\in I_0} \mathbb{E} \left[\mathbb{P}\left( \bigcap\limits_{k\in B_{b_j}, k<j, k\in I_0, C_k=0} \{P_k>\alpha_k\} \cap \{P_j\leq \alpha_j\}\Bigg \vert \mathcal{F}_{b_j-1}    \right)   \right] \\
     &= \sum_{j\leq i, j\in I_0} \mathbb{E}\left[\alpha_j^{c,I_0}\right] \\
     &\leq \sum_{j\leq i, j\in I_0} \mathbb{E}\left[\alpha_j^{c,I_0} \mathbb{E} \left(  \frac{1-C_j}{1-\lambda_{b_j}} \bigg \vert \mathcal{F}_{b_j-1} \right) \right] \\
     &= \mathbb{E}\left[ \sum_{j\leq i, j\in I_0} \alpha_j^{c,I_0} \frac{1-C_j}{1-\lambda_{b_j}} \right],  
     \end{align*}
     where $\alpha_j^{c,I_0}=\mathbb{P}\left( \bigcap\limits_{k\in B_{b_j}, k<j, k\in I_0, C_k=0} \{P_k>\alpha_k\} \cap \{P_j\leq \alpha_j\}\Bigg \vert \mathcal{F}_{b_j-1}    \right)$.
Using the subset pivotality condition, we obtain
\begin{align*}
    \sum_{j\leq i, j\in I_0} \alpha_j^{c, I_0} \frac{1-C_j}{1-\lambda_{b_j}} &=  \sum_{j=1}^{b_i} \frac{1}{1-\lambda_{b_j}} \mathbb{P}\left( \bigcup\limits_{k\in B_{b_j}, k\leq i, k\in I_0, C_k=0} \{P_k\leq \alpha_k\} \Bigg  \vert \mathcal{F}_{b_j-1}    \right) \\
    &\leq  \sum_{j=1}^{b_i} \frac{1}{1-\lambda_{b_j}} \mathbb{P}_{H_\mathbb{N}}\left( \bigcup\limits_{k\in B_{b_j}, k\leq i,  C_k=0} \{P_k\leq \alpha_k\} \Bigg  \vert \mathcal{F}_{b_j-1}    \right) \\
    &=  \sum_{j=1}^{i}  \frac{1}{1-\lambda_{b_j}} \alpha_j^{c} (1-C_j) \leq \alpha.
\end{align*}


 \end{proof}

\begin{proof}[Proof of Theorem \ref{theo:graph_corr}]
    Obviously, $\alpha_i$ is measurable regarding $\mathcal{F}_{b_i-1}$. Now let $i\in \mathbb{N}$ be fix. We define $$A(k)\coloneqq \sum_{j=1}^{i-k} \frac{\alpha_j^{c} + (\alpha_j-\alpha_j^{c}) \sum_{l=i-k+1}^{i} g_{j,l}^*}{1-\lambda_{b_j}} (1-C_j) + \frac{\alpha_j \sum_{l=i-k+1}^{i} g_{j,l}^*}{1-\lambda_{b_j}} C_j. $$
    For $k\in \{0,\ldots,i-1\}$, we can write 
    \begin{align*}A(k)&=\sum_{j=1}^{i-k} \frac{\alpha_j^{c} + (\alpha_j-\alpha_j^{c}) \sum_{l=i-k+1}^{i} g_{j,l}^*}{1-\lambda_{b_j}} (1-C_j) + \frac{\alpha_j \sum_{l=i-k+1}^{i} g_{j,l}^*}{1-\lambda_{b_j}} C_j \\
    &\leq \frac{\alpha_{i-k}}{1-\lambda_{b_{i-k}}} + 
    \sum_{j=1}^{i-k-1} \frac{\alpha_j^{c} + (\alpha_j-\alpha_j^{c}) \sum_{l=i-k+1}^{i} g_{j,l}^*}{1-\lambda_{b_j}} (1-C_j) + \frac{\alpha_j \sum_{l=i-k+1}^{i} g_{j,l}^*}{1-\lambda_{b_j}} C_j \\
    &= \alpha \gamma_{i-k} + A(k+1).
    \end{align*}
 Since $A(i)=0$, we especially have $$A(0)=\sum_{j=1}^{i} \frac{\alpha_j^{c}}{1-\lambda_{b_j}}(1-C_j) \leq \alpha \sum_{j=1}^{i} \gamma_j \leq \alpha.$$  
\end{proof}

\begin{proof}[Proof of Theorem \ref{theo:addis_graph_fdr}]
    Let $\alpha_j^0 = W_0 \gamma_j  + \sum_{k=1}^{j-1} h_{k,j}^* R_k [\alpha K_k + (\alpha-W_0)K_k^c]$ for all $j\in \mathbb{N}$. Note that 
    \begin{align*}
        \sum_{j=1}^i \alpha_j^0 &=  \sum_{j=1}^iW_0 \gamma_j  +  \sum_{j=1}^i \sum_{k=1}^{j-1} h_{k,j}^* R_k [\alpha K_k + (\alpha-W_0)K_k^c]  \\
        & \leq W_0 +\sum_{k=1}^{i-1} R_k [\alpha K_k + (\alpha-W_0)K_k^c] \sum_{j=k+1}^i h_{k,j}^* \\
        & \leq W_0 +\sum_{k=1}^{i-1} R_k [\alpha K_k + (\alpha-W_0)K_k^c] \\
        &= W_0 + (\alpha-W_0) K_i + \alpha(|R(i-1)|-1) K_i \\ &\leq \alpha(|R(i)|\lor 1)
    \end{align*}
   With this, it can be shown that the $\text{ADDIS-Graph}_{\text{conf}}$ satisfies (1) in the same way as in Theorem 1 by replacing $\alpha \gamma_j$ with $\alpha_j^0$ on the left side in equation \eqref{eq: F(C)} and $\alpha$ with $\alpha(|R(i)|\lor 1)$ on the right side. Hence, if the null p-values are independent from each other and the non-nulls and $\alpha_i$, $\lambda_i$ and $\tau_i$ are monotonic functions of the past, then the FDR control follows immediately by Theorem 1 of Tian and Ramdas (2019) \citep{TR2}. Furthermore, the mFDR control for $\tau_i=1$, $i\in \mathbb{N}$, follows by Theorem 2 of Zrnic et al. (2020) \citep{ZRJ}. However, since we consider general $\tau_i\in (0,1]$, we provide a self-contained proof for mFDR control when \eqref{eq:FDR_ADDIS_principle} is fulfilled, which is very similar to the proofs by Zrnic et al. (2020) \citep{ZRJ} and Tian and Ramdas (2019) \citep{TR2}.

Consider 
\begin{align*}
\mathbb{E}[|V(i)|]&=\sum_{j\in I_0, j\leq i} \mathbb{E}[R_j] \\
&= \sum_{j\in I_0, j\leq i} \mathbb{E}[\mathbb{P}(P_j\leq \alpha_j|P_j\leq \tau_j,\mathcal{F}_{-\mathcal{X}_j}) \mathbb{P}(P_j\leq \tau_j|\mathcal{F}_{-\mathcal{X}_j})] \\
&\leq \sum_{j\in I_0, j\leq i} \mathbb{E}\left[\frac{\alpha_j}{\tau_j} \mathbb{P}(P_j\leq \tau_j|\mathcal{F}_{-\mathcal{X}_j})\right] \\
&\leq \sum_{j\in I_0, j\leq i} \mathbb{E}\left[\frac{\alpha_j}{\tau_j} \mathbb{P}(P_j\leq \tau_j|\mathcal{F}_{-\mathcal{X}_j}) \frac{\mathbb{P}(P_j> \lambda_j|P_j\leq \tau_j,\mathcal{F}_{-\mathcal{X}_j})}{1-\lambda_j/\tau_j}\right] \\
&= \sum_{j\in I_0, j\leq i} \mathbb{E}\left[\frac{\alpha_j}{\tau_j-\lambda_j} \mathbbm{1}\{\lambda_j< P_j \leq \tau_j \}\right] \\
&\leq \mathbb{E}[|R(i)|\lor 1] \alpha,
\end{align*} 
where the first and second inequality follow from the uniform validity of the null p-values and the third inequality from condition \eqref{eq:FDR_ADDIS_principle}.

\end{proof}
    
\end{appendix}

\end{document}